\newtheorem{theorem}{Theorem}[section]
\newtheorem{proposition}[theorem]{Proposition}
\newtheorem{lemma}[theorem]{Lemma}
\newtheorem{corollary}[theorem]{Corollary} 
\newtheorem{remark}[theorem]{Remark}
\newtheorem{definition}[theorem]{Definition}
\newtheorem{assumption}[theorem]{Assumption}
\newcommand{\bX}{\mathbf{X}}
\newcommand{\bV}{\mathbf{V}}
\newcommand{\bM}{\mathbf{M}}
\newcommand{\bx}{\mathbf{x}}
\newcommand{\bu}{\mathbf{u}}
\newcommand{\by}{\mathbf{y}}
\begin{document}

\title{On short-time behavior of implied volatility in a market model with indexes \footnote{Thai Nguyen acknowledges the support of the Natural Sciences and Engineering Research Council of Canada [RGPIN-2021-02594]. We thank Thomas Bernhardt for useful discussions regarding fractional volatility models.}}
\author[1]{Huy N. Chau}
\author[2]{Duy Nguyen}
\author[3]{Thai Nguyen}

\affil[1]{Department of Mathematics, University of Manchester}
\affil[2]{Department of Mathematics, Marist College}
\affil[3]{\'Ecole d'actuariat, Universit\'e Laval}
\maketitle

\begin{abstract}
	This paper investigates short-term behaviors of implied volatility of derivatives written on indexes in equity markets when the index processes are constructed by using a ranking procedure. Even in simple market settings where stock prices follow geometric Brownian motion dynamics, the ranking mechanism can produce the observed term structure of at-the-money (ATM) implied volatility skew for equity indexes. Our proposed models showcase the ability to reconcile two seemingly contradictory features found in empirical data from equity markets: the long memory of volatilities and the power law of ATM skews. Furthermore, the models allow for the capture of a new phenomenon termed the quasi-blow-up phenomenon.
\end{abstract}

\section{Introduction}
The volatility modeling literature has introduced a variety of models, ranging from the well-known Black-Scholes model pioneered in \citet{black1973pricing}, where volatility is considered constant, to local/stochastic volatility models,  all geared towards capturing the intricacies of reality. Lately, there has been a growing adoption of fractional Brownian motions in volatility modelling. The existence of volatility persistence is well-documented, with seminal analyses by \citet{ding1993long},  \citet{andersen1997intraday}.
\citet{comte1998long} presented a stochastic volatility model wherein the volatility process is governed by the exponential of a fractional Brownian motion with a Hurst exponent $H\in(1/2,1)$. \citet{andersen2003modeling} found that a simple long-memory Gaussian vector autoregression for the logarithmic daily realized volatilities generally produces superior forecasts. Subsequently, an extensive body of literature has expanded on these fractional volatility models, exemplified by works of  \citet{comte2012affine}, \citet{rosenbaum2008estimation} among many others.

In a different vein, \citet{gatheral2018volatility} conducted an innovative study by estimating volatilities from high-frequency data, showing that spot volatilities exhibit rough behaviour across numerous financial assets. Their findings suggested that log-volatility can be effectively modelled by a fractional Brownian motion with a Hurst exponent of order $0.1$. The findings were later reaffirmed in
\citet{livieri2018rough} by using   option prices. In the context of fractional  volatility models, \citet{fukasawa2022consistent}  constructed a quasi-likelihood estimator and applied it to realized volatility time series. Their empirical studies for major stock indices indicate that the Hurst exponents are consistently less than $0.5$. 
\begin{figure}[h!tbp]
	\centering
	\begin{subfigure}{0.45\textwidth}
		\includegraphics[width=\textwidth]{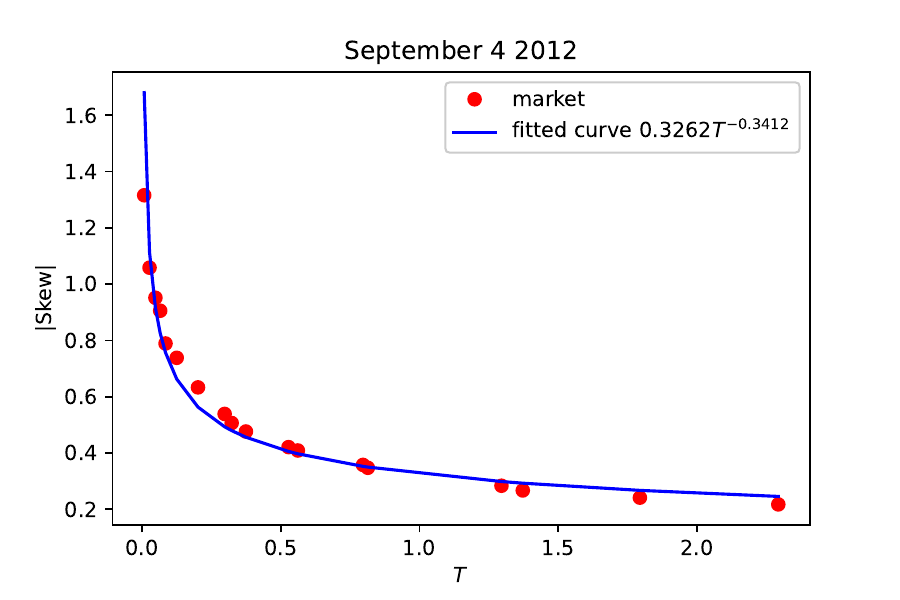}
		\caption{Sep-04-2012.}
		\label{fig:first}
	\end{subfigure}
	\begin{subfigure}{0.45\textwidth}
		\includegraphics[width=\textwidth]{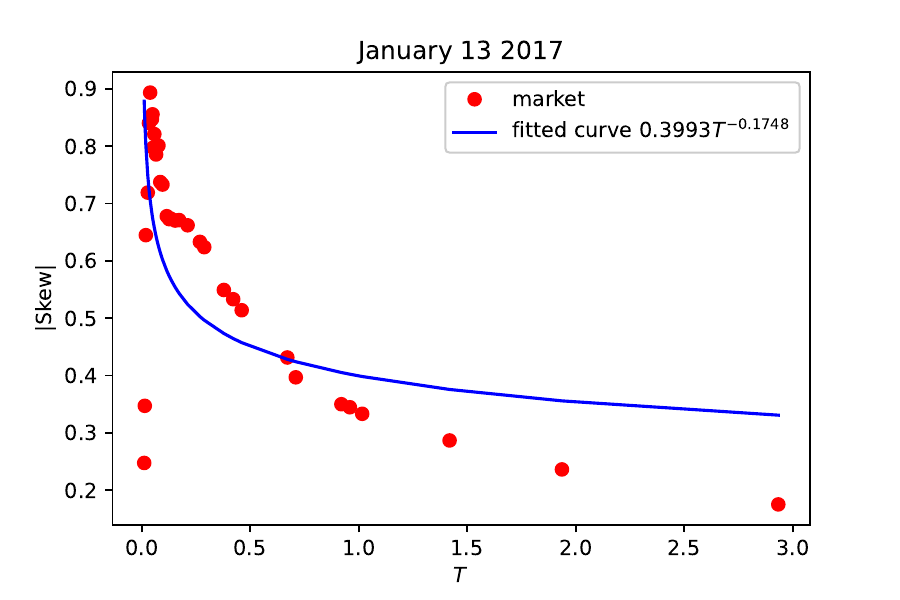}
		\caption{Jan-13-2017.}
		\label{fig:second}
	\end{subfigure}
	\caption{\small The absolute of the ATM implied volatility skew of SP500 options is plotted as a function of the maturity $T$. The power-law term structure of skew (i.e., $|Skew| \sim cT^{-\alpha}$, $\alpha\in (0,1/2)$)  aligns remarkably with SP 500 option data.  We reproduce Figure 9 of \citet{romer2022empirical}. Data is given by Optionmetric.}
	\label{fig:quasi_emp}
\end{figure}
The so-called rough volatility models have proven instrumental in capturing the power-law term structure of ATM implied volatility skew (see Figure \ref{fig:quasi_emp}) that local or stochastic
volatility models typically fail to generate, see \citet{bayer2016pricing}, \citet{fukasawa2017short}, and \citet{bayer2019short}. 
In addition to the empirical evidence, there exists a substantial theoretical foundation supporting rough volatility models,
see \citet{fukasawa2021volatility}, \citet{jaisson2016rough}, \citet{el2018microstructural}.
For a comprehensive exploration of related studies and theoretical underpinnings, we refer to \citet{funahashi2017does},  \citet{el2019characteristic},  \citet{bayer2020regularity},  \citet{forde2021small},  \citet{forde2017asymptotics},  \citet{friz2022short}, and  \citet{friz2021precise}, among others, though this list is by no means exhaustive.


The debate on the nature of dependence in volatility, whether short-range or long-range, has perennially held significance in volatility modeling, as emphasized in  \citet{cont2007volatility}.  \citet{romer2022empirical} documented that the SPX and VIX option markets can be effectively reconciled with classical two-factor volatility models without roughness and jumps. \citet{guyon2022does} conducted empirical investigations into the term structure of the ATM skew of equity indexes, revealing a degradation of the power-law fit with two parameters for short maturities.
Delving more into statistical evidence,  \citet{cont2022rough} examined such evidence for the use of fractional processes with $H < 0.5$ using the concept of normalized $p$-th variation in a framework with microstructure noises. Their results show that although the spot volatility follows Brownian motion dynamics, the realized volatility exhibits rough behavior with a Hurst index $H < 0.5$. This suggests that the origin of roughness observed in realized volatility time series may lie in microstructure noise. In a recent work,  \citet{shi2022volatility} modeled the log realized volatility by an autoregressive fractionally integrated moving average ARFIMA$(1,d,0)$ process where $d>0$ indicates long memory and $d<0$ implies antipersistency. The author applied four estimation methods and explained that all methods have finite sample problems, precluding definitive conclusions about the data-generating processes.

As highlighted in \citet{rogers2023things}, simpler alternative models exist that can elucidate certain empirical properties with efficacy comparable to fractional models, particularly at higher timescales such as daily, weekly, and monthly intervals. \citet{abi2019lifting} introduced lifted versions of the Heston model and demonstrated that these lifted models, being Markovian, adeptly fit implied volatilities for short maturities while aligning with the statistical roughness of realized volatilities.  \citet{bennedsen2022decoupling} employed Brownian semistationary processes, incorporating both roughness and persistence in volatility, along with other desirable properties. \citet{guyon2022does} introduced three-parameter shapes, such as time-shifted or capped power laws, which maintain a semblance of power laws for larger maturities but do not blow  up at vanishing maturity.

Implied volatility, as the market's forecast for future volatilities,  plays as a pivotal role in option pricing, see \citet{durrleman2010implied},  \citet{berestycki2004computing},  \citet{gao2014asymptotics}, \citet{fukasawa2011asymptotic}. Implied volatilities from options prices are achieved by inverting the Black-Scholes formula.  The short-term behaviors of implied volatilities have been explored in various models.  \citet{alos2007short} delved into jump-diffusion models,  \citet{forde2009small}, \citet{forde2012small} focused on Heston's model, and  \citet{euch2019short} considered stochastic volatility models, including fractional volatility models. \citet{bayer2019short}, and  \citet{friz2022short} examined the short-term behaviors in rough volatility models by employing the large deviation approach. \citet{pagliarani2017exact} provided the exact Taylor formula for implied volatilities, considering both strike and maturity by approximating the infinitesimal generator of the underlying processes. \citet{barletta2019short} employed a similar approach, deriving closed-form expansions for VIX futures, options, and implied volatilities.



This paper attempts to construct models that reconcile two puzzling empirical findings from equity markets: the long memory of volatilities and the power law of ATM skew. Furthermore,
we construct models explaining the two empirical phenomena showed in Figure \ref{fig:quasi_emp}
in a unified framework. To do this, we introduce a new model incorporating a market index, wherein stock prices undergo ranking based on their values or market capitalizations before aggregating the top-ranked stocks. This approach mirrors the construction methodology commonly observed in most market indexes. Unlike prior studies that modelled indexes or baskets of stocks using weighted sums of stock prices without incorporating ranking procedures, see, e.g., \citet{avellaneda2003application}, \citet{jourdain2012coupling}, \citet{gulisashvili2015implied}, \citet{bayer2014asymptotics}, \citet{friz2023reconstructing}, our model explicitly integrates the ranking procedure. To derive expansions for European index option prices and implied volatilities, we employ the density expansion approach outlined in \citet{euch2019short}. This expansion method proves particularly advantageous in scenarios characterized by short time scales, high dimensional settings (up to $100$ assets, as explored in \citet{bayer2014asymptotics}), and in situations where explicit formulae are unavailable, such as in general stochastic volatility models. Our contributions, limitations, and comparisons to related studies are summarized below.

\begin{itemize}
	\item 
	Our proposed models are capable of generating the power law term structure of the ATM skew for market index options. Notably, even in simple settings with geometric Brownian motions, the ATM skew could exhibit the power law term structure $T^{-0.5}$. Importantly, our models offer a level of simplicity that distinguishes them from those incorporating fractional volatilities. Conventional numerical algorithms such as PDEs remain applicable, underscoring the practicality and feasibility of implementing our models.

	\item  \citet{pigato2019extreme} introduced the following model to explain the power law behavior of the ATM skew
	\begin{equation}\label{eq:pigato}
		dS_t = S_t \sigma_{loc}(S_t)dW_t,
	\end{equation}
	where the local volatility function $\sigma_{\text{loc}}(x) = \sigma_{-}1_{x < R} + \sigma_{+}1_{x \ge R}$ is discontinuous at a \emph{fixed} level $R$. Notably, the ATM skew in this model exhibits a blow-up phenomenon at a rate of $T^{-1/2}$ when $R = S_0$. In comparison, some behaviour of the indexes in our models are similar to that from the process in \eqref{eq:pigato} for very short maturities, because the ranking mechanism does not change the initial configuration of stock prices when time to maturity is small enough.  However, our framework differs from Pigato's model in several key aspects. Firstly, the underlying assets for index options are the index futures, not the indexes themselves. The indexes accommodate discontinuous volatilities due to the ranking mechanism, introducing an additional layer of complexity, and it remains unclear how the volatilities of the index futures are affected in this context. In addition, the indexes are not traded, and therefore constructing hedging strategies requires different arguments.  Secondly, the volatilities of the market indexes are inherently discontinuous at \emph{random} points, adding a stochastic element to the discontinuity. Lastly, the ATM skews in our models experience a blow-up when certain stock prices coincide, a stochastic event occurring at random times. These distinctions highlight the complexities introduced in our setting compared to the model proposed by \citet{pigato2019extreme}. Furthermore,  our techniques with asymptotic density expansion are more general than the use of Fourier transformation in \citet{pigato2019extreme}.

	\item In \citet{guyon2022does}, it is  argued that the ATM skew  seems to follow the power law $T^{-\alpha}$, with $\alpha \in (0,0.5)$, particularly for relatively large maturities, yet refrain from blowing up for vanishing maturities. \citet{guyon2022does} introduced different models with such property, for example, the 3-parameter model derived from simple non-Markovian variance curve models using the Bergomi-Guyon expansion and the simple 4-parameter term-structure model derived from the two-factor Bergomi model with one more parameter for better fits. In the present paper, we introduce the new concept ``quasi-blow-up" to describe this property. Figure \ref{fig:second} provides an  empirical evidence supporting this phenomenon. We show that our proposed model demonstrates the ability to reproduce the new quasi-blow-up phenomenon. More precisely, under certain conditions, the ATM skews blow up when some initial values of stock prices coincide and exhibit quasi-blow-up when initial values of stock prices are close enough. There are differences between our models and the ones in  \citet{guyon2022does}. The first difference is model consistency. On different days, different models of  \citet{guyon2022does} have to be used depending on whether the ATM skews blow up or exhibit quasi-blow-up. Even we know that the 3-parameter model is good for the situation with quasi-blow-up, we also need to recalibrate its parameters for each day, as today calibration may not work for tomorrow data. Unlike \citet{guyon2022does}, there are no parameters to control the power-like shape  in our models, and the quasi-blow-up phenomena are with respect to the initial stock prices. Our models produce simultaneously the two phenomena  without changing parameters and the ATM term structures in our models are time varying.

	\item The ATM skews in \citet{pigato2019extreme} blow up when $R = S_0$ and it could be checked by simple simulation that when $S_0$ is close to $R$, the ATM skew admits power-like shapes. In \citet{fukasawa2021volatility}, it is argued that the local volatility function has to be singular everywhere since the power law is stable in time. This is true if we assume that the ATM skew blows up at every time. However, it could happen that the ATM skew does not blow up but exhibits the  quasi-blow-up phenomena and hence, everywhere singularity is not necessary. To produce the stable power-like term structure, the process $S_t$ in \eqref{eq:pigato}  should stay close to $R$ for all $t$ (which is unrealistic for equity stocks) or more discontinuities need to be introduced in the local volatility function. In the present paper, we choose the latter option and work with market indexes instead of an individual stock. Note that in reality, European options are commonly written for indexes rather than stocks and most empirical studies about blow-up volatility skews focus on index options.

	\item 
	Our model's capacity to capture the quasi-blow-up phenomenon is even more remarkable when log volatilities of stock prices are modeled by fractional Brownian motions with $H \in (0.5,1)$. This continuous-time modeling approach simultaneously accommodates two crucial yet conflicting empirical observations in equity indexes: the persistence or long memory in volatility and the power-law term structure of ATM skew, see Figure \ref{fig:fss1}. This duality underscores the versatility and relevance of our model in reconciling these seemingly contradictory features. 
	\item Constructing an equilibrium model to elucidate the power-law term structure of ATM skews is an intriguing problem. While existing literature, such as \citet{jaisson2016rough} and \citet{el2018microstructural}, offers arguments rooted in microstructure foundations to account for rough volatilities, the question of how rough volatilities manifest in equilibrium remains unanswered. Similarly, comprehending the nuances of the specific 3- and 4-parameter models introduced by \citet{guyon2022does}, the lifted Heston model of \citet{abi2019lifting}, or the use of Brownian semistationary processes in  \citet{bennedsen2022decoupling}, from an equilibrium perspective presents difficulties. In this context, we contribute an additional and simple mechanism to expound upon the power law ATM skews across a broad spectrum of stock price models. This suggests the possibility of constructing equilibrium models featuring the power law ATM term structure through the incorporation of the ranking procedure. The exploration of this avenue is deferred to future studies, promising valuable insights into equilibrium dynamics and the behavior of ATM skews in financial markets.
	\item It is crucial to underscore that we do not claim all the observed blow-up phenomena come from the ranking mechanism. Furthermore, our primary focus does not lie in the calibration aspect, i.e., the fitting of model parameters to financial data. Rather, our emphasis centers on providing a mechanism to elucidate the observed phenomena within equity indexes. The intricate task of calibration, involving the consideration of all individual stocks within the indexes, along with their respective options and the index options, poses a notably high-dimensional challenge. This complex calibration issue is also deferred to future studies.
\end{itemize}
The structure of the paper is outlined as follows. In Section \ref{sec:setting}, we introduce the market model and lay out the main assumptions guiding our analysis. Section \ref{sec:denisty} presents an approximation for the densities of the driving processes inherent in the model. The dynamics of index future prices are examined in Section \ref{sec:future}, while Section \ref{sec:IV} delves into the investigation of implied volatilities. Moving forward, in Section \ref{sec:ex_num}, we provide examples and numerical results to illustrate the practical implications of our model. Proofs supporting our analytical framework are furnished in Section \ref{sec:proofs}, with additional necessary results consolidated in Section \ref{sec:app}.\\

\noindent\textit{Notations.} We use bold letters for vectors, for example, $\bx = (x^1,...,x^n) \in \mathbb{R}^n$. For two vectors $\bx, \by \in \mathbb{R}^n$, their dot product is defined as $\bx \cdot \by = \sum_{i=1}^n x^iy^i$. 
The normal density with mean $\mathbf \mu$ and covariance matrix $\Gamma$ is denoted by $\phi_{\mathbf\mu,\Gamma}(x)$. 
$E[.]$ denotes the expectation. $\mathbf I_d$ denotes the $d\times d$ identity matrix.

\section{Market models with  indexes}\label{sec:setting}
Let $(\Omega, \mathcal{F}, \mathbb{Q})$ be a probability space equipped with a  filtration $(\mathcal{F}_t)_{t \ge 0}$ satisfying the usual assumptions. Assume that interest rate is zero and there are  $n$ stocks $S^1,...,S^n$ whose dynamics under  $\mathbb{Q}$ are given by
\begin{eqnarray}\label{eq:S}
	dS^j_t &=& S^j_t \sum_{k=1}^d \sigma^{jk}_t \left( \rho^{jk} dB^k_t + \sqrt{1-(\rho^{jk})^2} dW^k_t\right) , \ S^j_0 = s^j_0, 
\end{eqnarray}
where $W^k, B^k, k =1,...,d$ are independent $(\mathcal{F}_t)$-Brownian motions and $\rho^{jk} \in [-1,1]$ for $k \in \{ 1,...,d\}, j \in \{1,...,n\}$. Let $(\mathcal{G}_t)_{t \ge 0}$ be a smaller filtration such that $W^k, k = 1,...,d$ are independent of $(\mathcal{G}_t)_{t \ge 0}$,  and $B^{k}, \sigma^{jk}, k \in \{ 1,...,d\}, j \in \{1,...,n\}$ are adapted to $(\mathcal{G}_t)_{t\geq 0}$.  We also assume that $\sigma^{jk}, k \in \{ 1,...,d\}, j \in \{1,...,n\}$ are positive and continuous. Here, $\mathbb{Q}$ is an equivalent local martingale measure for the market. In this section, we work with general volatility processes $\sigma^{jk}$. We may also assume without loss of generality that the initial prices $\mathbf{s}_0 := (s^1_0,...,s^n_0)$ satisfy
\begin{eqnarray}
	s^1_0 \ge s^2_0 \ge \cdots \geq s^n_0.\label{eq:initial}
\end{eqnarray}
Let $Z^j_t = \log (S^j_t), j = 1,...,n$ be the log-price processes. From It\^o's formula, we obtain 
the dynamics of $Z_t^j$ as follows,
\begin{eqnarray}\label{eq:Z}
	dZ^j_t = - \frac{1}{2} \sum_{k=1}^d (\sigma^{jk}_t)^2dt + \sum_{k=1}^d\sigma^{jk}_t\left( \rho^{jk} dB^k_t + \sqrt{1-(\rho^{jk})^2} dW^k_t\right). 
\end{eqnarray}
Define the ranked process as 
$$S^{(1)}_t \ge S^{(2)}_t \ge ...\ge S^{(n)}_t.$$ 

\noindent It is clear that $Z^{(1)}_t \ge Z^{(2)}_t ...\ge Z^{(n)}_t$, where $Z^{(j)}_t = \log(S^{(j)}_t),\ j =1,...,n$.

\begin{remark}[The ranked processes]
	The dynamics of the ranked processes $S^{(j)}, j =1,...,n$ can be computed explicitly. The ranking procedure introduces discontinuity in volatilities and local times in the dynamics of $S^{(j)}, j =1,...,n$.  For example, if $n = 2$ and assume that the two price processes $S^1, S^2$ are pathwise mutually non-degenerate (see Definition 4.1.2 of \citet{fernholz2002stochastic}), the It\^o - Tanaka formula implies that
	$$dS^{(1)}_t = 1_{S^1_t > S^2_t}dS^1_t + 1_{S^2_t > S^1_t}dS^2_t + d\Lambda^{S^1-S^2}_t,$$
	where $$\Lambda^X_t : = \frac{1}{2}\left( |X_t| - |X_0| - \int_0^t \text{sgn}(X_s)dX_s  \right)  $$
	is the local time at $0$ of $X$. A similar representation holds for $S^{(2)}$. We refer to  Chapter 4 of \citet{fernholz2002stochastic} for further computations, and to   \citet{banner2008local} for a general theory of ranked semimartingales. 
\end{remark} 
Let $0 < \overline{n} \le n$ and $w_j, j=1,..., \overline{n}$ be positive constants. Define a market index by  
\begin{eqnarray}\label{eq:defi_index}
	I_t = \sum_{j = 1}^{ \overline{n}} w_jS^{(j)}_t.
\end{eqnarray}
The model could incorporate the case with time-varying index weights by letting $w_j \in \{0,1\}$ and modelling $S^{(j)}_t$ as the \emph{weighted} stock prices. In reality, the index $I_t$  is not tradable. Investors could only trade an index future or an index exchange-traded fund (EFT) tracking the index. In this paper, we consider the price at time $t \le T$ of the index future with maturity $T$, denoted by 
\begin{eqnarray}\label{eq:future}
	F_{t,T} = E[I_T|\mathcal{F}_t].
\end{eqnarray}
For each $j=1,...,n$, we define by $v^j_0(t) : = E\left[ \sum_{k=1}^d (\sigma^{jk}_t)^2\right] $ the forward variance curve  at time 0 
and by
\begin{eqnarray}
	V^j_0(t) = \sqrt{\int_0^t v^j_0(u)du},
\end{eqnarray}
the normalizing quantities. Noting that $B^k, W^k, k =1,...,d$ are independent, we define 
\begin{eqnarray}
	M^j_t &=& \int_{0}^t \sum_{k=1}^d \sigma^{jk}_u\left( \rho^{jk} dB^k_u + \sqrt{1-(\rho^{jk})^2} dW^k_u\right),\nonumber\\ \left\langle M^j\right\rangle _t &=& \int_{0}^{t} \sum_{k=1}^d (\sigma^{jk}_u)^2du,
\end{eqnarray}
and 
\begin{eqnarray}\label{eq:X}
	X^j_t = -\frac{1}{2V^j_0(t)}\left\langle M^j\right\rangle _t + \frac{1}{V^j_0(t)} M^j_t.
\end{eqnarray}
Define $\bM_t := (M^1_t,...,M^n_t), \bV_0(t) := (V^1_0(t),...,V^n_0(t)), \bX_t :=(X^1_t,...,X^n_t)$. The normalizing procedure makes $\bX$ behave as a standard Gaussian process for small $t$. Using these notations, we rewrite 
	\begin{equation}\label{eq:SVX}
		S^{j}_t = e^{Z^j_t} = s^j_0e^{M^j_t - \frac{1}{2}\left\langle M^j\right\rangle _t} = s^j_0 e^{V^j_0(t)X^j_t}, \ j=1,...,n.
\end{equation}
\begin{assumption}\label{assum0} Throughout this paper, we  assume that there are two scenarios for the starting values of stock prices, namely
	\begin{itemize}
		\item[(i)] $s^1_ 0 > s^2_0 > ... > s^n_0 > 0$,
		\item[(ii)] $s^1_ 0 > ...>s^{r-1}_0  = s^r_0 > ... > s^n_0 > 0$ for some $r \in\{ 2,\ldots,n\}$.
	\end{itemize}
\end{assumption}
Assumption \ref{assum0} requires that the starting values of stock prices are different, or there are at most two stocks with the same starting value.	This assumption helps to reduce the number of possible cases in our analysis. Using similar arguments in this paper, it is possible to extend our analysis to the case where Assumption \ref{assum0} is not satisfied.  
\begin{remark}
	It is important to distinguish the condition in Assumption \ref{assum0}(ii) from the collision of stochastic processes. For example, the probability of triple collisions is defined as 
	$$\mathbb{Q}\left(S^{i}_t = S^j_t = S^k_t, \text{ for some } t \ge 0 \right).$$
	In this paper, we fix $t>0$ and the probability of collisions occurring at a fixed time $t$ is zero. In the settings with Brownian motions, sufficient conditions for no triples or no simultaneous collisions at any time are given in \citet{ichiba2010collisions}, \citet{sarantsev2015triple}. 
	For fractional Brownian motions, we refer to \citet{wang2011collision}, \citet{jiang2007collision}, among others.
\end{remark}
Next, we impose some regularity conditions on the volatility and the corresponding martingale  processes. Assumption \ref{assum:main} below is adopted from \citet{euch2019short} to the present multidimensional setting. 
\begin{assumption}\label{assum:main}
	For any $p\ge 1$, 
	\begin{eqnarray} \label{eq:vola}
		\sup_{t \in (0,1)} \frac{1}{t} \left\| \  \int_0^t \sum_{k=1}^d(\sigma^{jk}_u)^2du \right\|_p < \infty. \qquad \sup_{t \in (0,1)}\frac{1}{t} \left\|  \left(    \int_0^t \sum_{k=1}^d (\sigma^{jk}_u)^2du \right)^{-1}  \right\|_p < \infty.
		\label{assum:inte}
	\end{eqnarray}
	The following expansions hold
	\begin{eqnarray}\label{eq:cond_V}
		V^j_0(t) = \sqrt{v^j_0(0)t} + O(t^{1/2 + \zeta^j}),
	\end{eqnarray}
	for some $\zeta^j > 0, j = 1,..,n.$ 
	
	For each $j=1,...,n$, there exists a family of random vectors $$(M^{(0),j}_t,M^{(1),j}_t,M^{(2),j}_t,M^{(3),j}_t)_{t\in [0,1]}$$ such that
	\begin{itemize}
		\item[(i)] for all $t \in [0,1]$, the random vector $\bM^{(0)}_t := (M^{(0),1}_t,...,M^{(0),n}_t)$ has the normal density function $\phi_{\mathbf{\mu},\Gamma}(\bx)$ with mean vector $\mathbf{\mu}$ and  covariance matrix $\Gamma$;
		\item[(ii)] for all $p>0$,
		\begin{eqnarray}\label{assum:moment}
			\sup_{t \in [0,1]} \left\| M^{(k),j}_t\right\|_p < \infty, \qquad k = 1,2,3;
		\end{eqnarray}
		\item[(iii)] for some\footnote{We may choose the same $\varepsilon$ for all $j$.} $H^j \in (0,1), \varepsilon \in (0,H^j/2)$, 
		\begin{eqnarray}
			\lim_{t \to 0} \frac{1}{t^{2H^j+2\varepsilon}} \left\| \frac{M^j_t}{V^j_0(t)} -  M^{(0),j}_t - t^{H^j} M^{(1),j}_t - t^{2H^j} M^{(2),j}_t \right\| _{1 + \varepsilon} &=& 0, \label{assum:M012}\\
			\lim_{t \to 0} \frac{1}{t^{H^j+2\varepsilon}} \left\| \frac{\left\langle M^j \right\rangle_t}{(V^j_0(t))^2} - 1 - t^{H^j}M^{(3),j}_t \right\| _{1+\varepsilon} &=& 0 \label{assum:M3};
		\end{eqnarray}
		\item[(iv)] the following derivatives 
		\begin{eqnarray}
			a^{(k),j}_t(\bx) &=&  \frac{\partial}{\partial x_j} \left\lbrace \left. E\left[ M^{(k),j} _t \right| \bM^{(0)}_t = \bx \right] \phi_{\mathbf{\mu},\Gamma}(\bx)  \right\rbrace, \ j =1,...,n , k = 1,2,3, \label{assum:a}\\
			b^j_t(\bx) &=&  \frac{\partial^2}{\partial x^2_j} \left\lbrace E\left[ \left. M^{(1),j} _t\right| \bM^{(0)}_t = \bx \right] \phi_{\mathbf{\mu},\Gamma}(\bx)  \right\rbrace, \ j=1...,n, \label{assum:b}\\
			c^j_t(\bx) &=&  \frac{\partial^2}{\partial x^2_j} \left\lbrace E\left[ \left. \left| M^{(1),j} _t\right| ^2\right| \bM^{(0)}_t = \bx \right] \phi_{\mathbf{\mu},\Gamma}(\bx) \right\rbrace, \ j=1,...,n ,\label{assum:c}\\
			d^{(1),j,k}_t(\bx) &=& \frac{\partial^2}{\partial x_k\partial x_j} \left\lbrace  E\left[M^{(1),k}_t |\bM^{(0)}_t = \bx \right] E\left[M^{(1),j}_t |\bM^{(0)}_t = \bx \right]
			\phi_{\mathbf{\mu},\Gamma}(\bx)\right\rbrace, \  \label{assum:d}\\
			e^{(1),j,k}_t(\bx)  &=& \frac{\partial^2}{\partial x_k\partial x_j} \left\lbrace  E\left[M^{(1),j}_t |\bM^{(0)}_t = \bx \right]
			\phi_{\mathbf{\mu},\Gamma}(\bx)\right\rbrace, \ j, k= 1,...,n, \label{assum:e}
		\end{eqnarray}
		exist in the Schwartz space. \footnote{The Schwartz space is the function space $\{ f \in C^{\infty}(\mathbb{R}^n,\mathbb{R}): \forall \boldsymbol{\alpha}, \boldsymbol{\beta} \in \mathbb{N}^n, \|f\|_{\boldsymbol{\alpha}, \boldsymbol{\beta}} < \infty \}$, where $C^{\infty}(\mathbb{R}^n,\mathbb{R})$ is the space of smooth functions and  $\|f\|_{\boldsymbol{\alpha},\boldsymbol{\beta}} = \sup_{\bx \in \mathbb{R}^n}  \bx^{\boldsymbol{\alpha}}D^{\boldsymbol{\beta}}f (\bx)$ with the index notation $\bx^{\boldsymbol{\alpha}} = x_1^{\alpha_1}...x^{\alpha_n}_n, D^{\boldsymbol{\alpha}} = D^{\beta_1}_1...D^{\beta_n}_n.$}
	\end{itemize}
\end{assumption}
For simplicity, we assume that for the stock $S^j$, the conditions \eqref{assum:M012}, \eqref{assum:M3} depend only on the corresponding parameter $H^j$. We also need the following assumption.
\begin{assumption}\label{assumption:Gaussian_bound} There exist $0 < T^* \le 1, p > 1/2$ such that
	$$E\left[ e^{p\sum_{j=1}^n \left\langle M^j \right\rangle _{T^*}} \right]  < \infty.$$
\end{assumption}
{Assumption \ref{assumption:Gaussian_bound} is similar to the well-known Novikov condition and is fulfilled for a large class of models, for example, when volatility is a linear function of a Gaussian process.}


\section{Density expansion}\label{sec:denisty}
In general, it is difficult
to find the density $p_t(\bx)$
of $\bX_t$ in a closed form.
As a result, approximation is needed.
In this section, we adopt the characteristic expansion approach from \citet{euch2019short} to find asymptotic distributions of $\bX_t$ in our present multidimensional setting when $H^j\in (0,1)$. 
\begin{theorem}\label{thm:density_expansion}
Let Assumption \ref{assum:main} be in force. Then, the law of $\bX_t = (X^1_t,...,X^n_t)$ admits density $p_t(\bx) := p_t(x_1,...,x_n)$ which satisfies
\begin{eqnarray}\label{eq:q0}
	\sup_{\bx \in \mathbb{R}^n}  \left| p_t(\bx) - q_t(\bx)\right| = \sum_{j=1}^n o(t^{\min\{2H^j,1\}+\varepsilon/2}), 
\end{eqnarray}
as $t \to 0$, where
\begin{eqnarray}
	q_t (\bx) &=& \phi_{\mathbf{\mu},\Gamma}(\bx) - \sum_{j=1}^n t^{H^j} a^{(1),j}_t(\bx) -  \sum_{j=1}^n t^{2H^j} \left( a^{(2),j}_t(\bx) + c^j_t(\bx) \right)   -   \sum_{j=1}^n V^j_0(t)  \frac{\partial}{\partial x_j} \phi_{\mathbf{\mu},\Gamma}(\bx) \nonumber  \\
	&-&  \sum_{j=1}^n  \frac{V^j_0(t) t^{H^j}}{2} \cdot \left( a^{(3),j}_t(\bx) +  b^j_t(\bx)\right)  
	+  \sum_{j=1}^n \frac{(V^j_0(t))^2}{8} \frac{\partial^2}{\partial x^2_j} \phi_{\mathbf{\mu},\Gamma} (\bx) \nonumber \\
	&+& \sum_{1 \le k,j \le n}  t^{H^k + H^j}d^{(1),k,j}_t(\bx) -  \sum_{1 \le k,j\le n}t^{H^j} \frac{V^k_0(t)}{2}e^{(1),k,j}_t(\bx) \nonumber\\
	&+&  \sum_{1 \le k,j \le n} \frac{V^k_0(t)V^j_0(t)}{4} \frac{\partial^2}{\partial x_jx_k}\phi_{\mathbf{\mu},\Gamma}(\bx), \label{eq:q}
\end{eqnarray}
and the functions $a^{j,(i)}_t(\bx),b^j_t(\bx),c^j_t(\bx), d^{(1),j,k}_t(\bx), e^{(1),j,k}_t(\bx), i=1,...,3$ and $k,j=1,...,n$ are defined in \eqref{assum:a}, \eqref{assum:b}, \eqref{assum:c}, \eqref{assum:d}, \eqref{assum:e}, respectively. 
\end{theorem}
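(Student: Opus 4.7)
The plan is to adapt the characteristic-function expansion and Fourier inversion strategy of \citet{euch2019short} to the present multidimensional setting. I begin by substituting the stochastic approximations from Assumption \ref{assum:main}(iii), combined with \eqref{eq:cond_V}, into the definition \eqref{eq:X} of $X^j_t$, obtaining
$$X^j_t = M^{(0),j}_t + t^{H^j} M^{(1),j}_t + t^{2H^j} M^{(2),j}_t - \frac{V^j_0(t)}{2} - \frac{V^j_0(t)\, t^{H^j}}{2} M^{(3),j}_t + R^j_t,$$
with an $L^{1+\varepsilon}$-remainder $R^j_t$ of order $o(t^{\min\{2H^j,1\}+2\varepsilon})$. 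Taking the characteristic function $\psi_t(\mathbf{u}) := E[e^{i \mathbf{u}\cdot \mathbf{X}_t}]$ and factoring out $e^{i\mathbf{u}\cdot \mathbf{M}^{(0)}_t}$, I Taylor-expand the remaining exponential up to quadratic order in the small perturbation $\mathbf{X}_t - \mathbf{M}^{(0)}_t$, retaining only those monomials whose size exceeds $\sum_j t^{\min\{2H^j,1\}+\varepsilon/2}$.

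Next I Fourier-invert term by term. The key observation is that, for any integrable random variable $Y$,
$$E\bigl[Y \, e^{i\mathbf{u}\cdot \mathbf{M}^{(0)}_t}\bigr] = \int_{\mathbb{R}^n} E\bigl[Y \,\big|\, \mathbf{M}^{(0)}_t = \mathbf{x}\bigr]\,\phi_{\mathbf{\mu},\Gamma}(\mathbf{x})\, e^{i\mathbf{u}\cdot \mathbf{x}}\, d\mathbf{x},$$
and multiplication by $iu_j$ on the Fourier side corresponds to the derivative $-\partial_{x_j}$ on the real side. Applying this systematically reproduces each term of \eqref{eq:q}: linear factors $iu_j$ bring down the $a^{(k),j}_t$-functions through \eqref{assum:a}, quadratic factors $(iu_j)(iu_k)$ produce $b^j_t$, $c^j_t$, $d^{(1),kj}_t$, $e^{(1),kj}_t$ via \eqref{assum:b}--\eqref{assum:e}, and the deterministic $-V^j_0(t)/2$ contributions (and their quadratic cross-products) yield the first- and second-order partial derivatives of $\phi_{\mathbf{\mu},\Gamma}$. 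The Schwartz-space hypothesis in Assumption \ref{assum:main}(iv) guarantees that the inverse Fourier transforms are well-defined and identifies $q_t$ as the inverse Fourier transform of the truncated expansion of $\psi_t$.

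The sup-norm error is then reduced, via the standard bound $\sup_{\mathbf{x}}|p_t(\mathbf{x}) - q_t(\mathbf{x})| \le (2\pi)^{-n}\|\psi_t - \widehat{q_t}\|_{L^1(\mathbb{R}^n_{\mathbf{u}})}$, to an $L^1$ estimate in frequency space. The main obstacle is this tail estimate. I will split the integration domain into a low-frequency region $\{|\mathbf{u}|\le t^{-\gamma}\}$ and its complement. On the low-frequency region, the Taylor remainders together with the $L^{1+\varepsilon}$ bounds in \eqref{assum:M012}--\eqref{assum:M3}, the moment bounds \eqref{assum:moment}, and Hölder's inequality yield a contribution of the stated order $\sum_j o(t^{\min\{2H^j,1\}+\varepsilon/2})$. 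On the high-frequency tail I need a sub-Gaussian-type decay of $\psi_t$: this is exactly where Assumption \ref{assumption:Gaussian_bound} enters, playing a Novikov-type role that, combined with \eqref{assum:inte} and the quadratic variation representation of $M^j$, delivers fast decay of $\psi_t$ at large $\mathbf{u}$. Choosing $\gamma>0$ to balance the two regimes, and performing the multi-index bookkeeping required to track the several scales $t^{H^j}$ and $V^j_0(t)\sim \sqrt{v^j_0(0)t}$ simultaneously, completes the argument.
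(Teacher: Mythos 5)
Your overall strategy (replace $\bX_t$ by the truncated expansion, Taylor-expand the characteristic function around $\bM^{(0)}_t$, convert the $iu_j$ factors into derivatives via conditioning on $\bM^{(0)}_t$ and integration by parts, then Fourier-invert with a low/high frequency split) is the same as the paper's, and your identification of $q_t$ in \eqref{eq:q} is correct. The genuine gap is the high-frequency tail estimate, and with it the very existence of the density $p_t$, which is part of the statement. You propose to obtain ``sub-Gaussian-type decay of $\psi_t$ at large $\bu$'' from Assumption \ref{assumption:Gaussian_bound}. First, the theorem is stated under Assumption \ref{assum:main} alone, so invoking Assumption \ref{assumption:Gaussian_bound} already proves a weaker result. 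Second, and more importantly, the mechanism does not work: $E\bigl[e^{p\sum_j\langle M^j\rangle_{T^*}}\bigr]<\infty$ is an exponential moment condition on the bracket, which (in Novikov-type arguments) controls the tails of the \emph{law} of $\bM_t$, not the decay of its Fourier transform. A law can possess all exponential moments while its characteristic function fails to decay, so neither an estimate of the form $|\psi_t(\bu)|\lesssim e^{-c|\bu|^2}$ nor even integrability of $\psi_t$ follows from that assumption.

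What is actually needed --- and what the paper uses --- is Lemma \ref{lemma_any_power}: a uniform-in-$t$ bound $\sup_{t\in(0,1)}\int|\bu|^{j}\bigl|E[e^{i\bu\cdot\bX_t}]\bigr|\,d\bu<\infty$ for every $j\in\mathbb{N}$, obtained (following Lemma 3.4 of the cited short-time asymptotics paper) from the conditionally Gaussian structure of $M^j_t$ given the volatility filtration $(\mathcal{G}_t)$, i.e.\ from the independence of the $W^k$'s from $\mathcal{G}$, which makes the conditional characteristic function Gaussian in $\bu$ after the normalization by $V^j_0(t)$. This lemma does double duty: it yields existence of a bounded continuous density $p_t$, so that your inequality $\sup_{\bx}|p_t(\bx)-q_t(\bx)|\le(2\pi)^{-n}\|\psi_t-\widehat{q}_t\|_{L^1}$ is even available, and it makes the contribution of the region $\{|\bu|>t^{-\delta}\}$ of order $O(t^{j\delta})$ for arbitrarily large $j$; the corresponding tail for $\widehat{q}_t$ is harmless because the functions in Assumption \ref{assum:main}(iv) lie in the Schwartz space. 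After that, choosing $\delta$ small relative to $\varepsilon$, $n$ and the $H^j$'s closes the argument exactly as you sketch on the low-frequency region. Without this ingredient, or some substitute non-degeneracy argument for the conditional law of $\bM_t$, your high-frequency step does not go through.
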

The proof of this theorem is given in Subsection \ref{proof:density}. The function $q_t(\bx)$ in $\eqref{eq:q}$ is not necessarily a density and looks complicated at the first glance. However, for the purposes of this paper, it is enough to keep terms with order $t^{\alpha}, \alpha \le 1/2$, and we may ignore many terms in the density expansion. As a first application of Theorem \ref{thm:density_expansion} we have the following estimation:
\begin{lemma}\label{lemma:appro}
Let $q_t(\bx)$ be given in \eqref{eq:q}. Let $f$ be a function such that $f(\bx) \le C|\bx|^m$ for some $C > 0, m \in \mathbb{N}$. For any $A \subset \mathbb{R}^n$, the following estimate holds
$$ E\left[1_{\bX_t \in A} f(\bX_t) \right] = \int_{A} f(\bx) q_t(\bx) d\bx +  \sum_{j=1}^n o(t^{\min\{2H^j,1\} +\varepsilon/4}).$$ 
\end{lemma}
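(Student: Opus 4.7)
The plan is to write the error as
\begin{equation*}
E[f(\bX_t)] - \int_{\mathbb{R}^n} f(\bx)\, q_t(\bx)\, d\bx
= \int_{\mathbb{R}^n} f(\bx)\bigl(p_t(\bx)-q_t(\bx)\bigr)\, d\bx,
\end{equation*}
and then split the integration region into a ball $\{|\bx|\le R_t\}$ and its complement, where $R_t=t^{-\delta}$ with a small $\delta>0$ to be chosen. On the ball I will use the uniform estimate \eqref{eq:q0}; on the complement I will use tail bounds for both $p_t$ (via moment estimates on $\bX_t$) and $q_t$ (via the Schwartz-type decay of its building blocks).

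For the inner piece, the polynomial growth $|f(\bx)|\le C|\bx|^m$ combined with $\mathrm{vol}(\{|\bx|\le R_t\})\le C R_t^n$ and the sup bound from Theorem \ref{thm:density_expansion} yields
\begin{equation*}
\Bigl|\int_{|\bx|\le R_t} f(\bx)(p_t-q_t)\, d\bx\Bigr|
\le C R_t^{m+n}\sum_{j=1}^n o\bigl(t^{\min\{2H^j,1\}+\varepsilon/2}\bigr)
= \sum_{j=1}^n o\bigl(t^{\min\{2H^j,1\}+\varepsilon/2-\delta(m+n)}\bigr).
\end{equation*}
Choosing $\delta=\varepsilon/(4(m+n))$ makes this $\sum_j o(t^{\min\{2H^j,1\}+\varepsilon/4})$, which is exactly what the lemma asks for.

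For the outer piece, I would first establish that $\bX_t$ has moments of every order, uniformly in $t\in(0,1)$. This follows by combining the BDG inequality with the integrability assumption \eqref{eq:vola} to get $\|M^j_t\|_p\lesssim t^{1/2}$, together with the normalization $V^j_0(t)\asymp \sqrt{t}$ from \eqref{eq:cond_V}, so that $\|X^j_t\|_p\le C_p$ uniformly in $t$. Markov's inequality then gives $\mathbb{Q}(|\bX_t|>R_t)\le C_p R_t^{-p}$ for any $p$, and Cauchy--Schwarz produces $E[|\bX_t|^m\mathbf{1}_{|\bX_t|>R_t}]\le C R_t^{-p/2}$, which for large $p$ is $o(t^N)$ for any $N$. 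The tail integral of $|\bx|^m q_t(\bx)$ is handled analogously: each term in the expansion \eqref{eq:q} is a product of a polynomial (arising from derivatives) and either $\phi_{\mathbf{\mu},\Gamma}$ or one of $a^{(k),j}_t, b^j_t, c^j_t, d^{(1),kj}_t, e^{(1),kj}_t$, which lie in Schwartz space by Assumption \ref{assum:main}(iv); so the tail integral decays faster than any polynomial in $R_t$. Summing the two contributions gives the claimed bound.

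The main obstacle is the outer-tail control, specifically verifying that $\bX_t$ admits uniformly bounded moments of arbitrary order; everything else is a fairly mechanical consequence once the splitting scale $R_t=t^{-\varepsilon/(4(m+n))}$ is fixed and one invokes \eqref{eq:q0} and the Schwartz decay of the approximation $q_t$.
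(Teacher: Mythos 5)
Your proposal is correct and follows essentially the same route as the paper's proof: the same splitting at radius $t^{-\delta}$ with $\delta$ of order $\varepsilon/(4(m+n))$, the same uniform moment bounds on $\bX_t$ via BDG, \eqref{eq:vola} and \eqref{eq:cond_V}, Markov plus Cauchy--Schwarz for the tail of $p_t$, and Schwartz-type decay for the tail of $q_t$. The only difference is cosmetic: you push the outer tail to $o(t^N)$ for every $N$, whereas the paper contents itself with $O(t)$ there, which if anything makes your version slightly cleaner.
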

\begin{proof}
We estimate  for any $r \ge 1$ that 
\begin{eqnarray*}
	E[|M^j_t|^{2r}] \le C(r) E[\left\langle M^j \right\rangle_t^{r} ] \le C(r)C^r  t^r ,
\end{eqnarray*}
where $C(r)$ comes from the Burkholder-Davis-Gundy inequality and $C$ is an upper bound from \eqref{eq:vola}. Noting \eqref{eq:cond_V}, we have that
\begin{equation}\label{eq:uniform}
	\sup_{t \in (0,1)}	E[|X^j_t|^{2r}]  \le C'(r),
\end{equation}
for some constant $C'(r) >0$. Let $0 <\eta < \frac{\varepsilon}{4(n+m)} $ be a small number. Next, we decompose 
\begin{eqnarray*}
	\int_{A} |\bx|^m\left| p_t({\bx})-q_t(\bx)\right| d\bx = \int_{A \bigcap \{|\bx| < \frac{1}{t^{\eta}}\}} |\bx|^m\left| p_t({\bx})-q_t(\bx)\right| d\bx +   \int_{A \bigcap \{ |\bx| > \frac{1}{t^{\eta}}\}} |\bx|^m\left| p_t({\bx})-q_t(\bx)\right|d\bx.
\end{eqnarray*}	
Using Theorem \ref{thm:density_expansion}, the first integral is bounded by
\begin{eqnarray*}
	\int_{|\bx| < \frac{1}{t^{\eta}}} |\bx|^m\left| p_t({\bx})-q_t(\bx)\right| d\bx &\le& \frac{1}{t^{m\eta}} \sup_{\bx \in \mathbb{R}^n}  \left| p_t(\bx) - q_t(\bx)\right|    \int_{|\bx| < \frac{1}{t^{\eta}}} d\bx  \\
	&\le& C \frac{1}{t^{m\eta}}  \frac{1}{t^{n\eta}} \sum_{j=1}^n o(t^{\min\{2H^j,1\} + \varepsilon/2})  \\
	&\le& \sum_{j=1}^n o(t^{\min\{2H^j,1\} + \varepsilon/4}). 
\end{eqnarray*}
Fix $r$ such that $r\eta/2 > 1$. We estimate by the H\"older inequality and then by the Markov inequality that
\begin{eqnarray*}
	\int_{|\bx| > \frac{1}{t^{\eta}}} |\bx|^mp_t({\bx})d\bx &=& E[|\bX_t|^m1_{|\bX_t| \ge\frac{1}{t^{\eta}}}] \le (E[|\bX_t|^{2m}])^{1/2}\left( \mathbb{Q}\left( |\bX_t| \ge \frac{1}{t^{\eta}} \right) \right)^{1/2} \\
	&\le& E([|\bX_t|^{2m}])^{1/2}\left( t^{r\eta}E[|\bX_t|^{r}] \right)^{1/2} = O(t), 
\end{eqnarray*}
noting the uniform bound in \eqref{eq:uniform}. We use the formula for $q_t(\bx)$ and Lemma \ref{lemma:tail1} to get that
\begin{eqnarray*}
	\int_{|\bx| > \frac{1}{t^{\eta}}} |\bx|^mq_t({\bx})d\bx = O(t),
\end{eqnarray*}
and the conclusion follows.
\end{proof}

\section{Future prices}\label{sec:future}
Let $\Pi_n$ denote all  permutations of $\{1,2,...,n\}$. For each $\psi_n\in\Pi_n$, define the event
\begin{eqnarray} \label{eq:A}
A^{\psi_n}_T = \{ \omega: S^{\psi_n(1)}_T \ge S^{\psi_n(2)}_T \ge \cdots \ge S^{\psi_n(n)}_T \}, 
\end{eqnarray}
where the notation $S^{\psi_n(k)}$ denotes the $\psi_n(k)$-th stock.  For presentation convenience, we denote 
\begin{eqnarray*}
\nu_k = w_ks^k_0\sqrt{v^k_0(0)} > 0, \quad  k=1,..., n.
\end{eqnarray*}
The following proposition gives an asymptotic representation of the future price when the stocks have distinct initial prices.  
\begin{proposition}\label{pro:future1} 
Let Assumptions \ref{assum:main}, \ref{assumption:Gaussian_bound} be in force.  
If $s^1_ 0 > s^2_0 >...> s^n_0 > 0$ are fixed, then 
\begin{eqnarray*}
	F_{0,T} &=& I_0 + \sum_{k = 1}^{\overline{n}} m^k_1 \sqrt{T} + \sum_{1 \le k  \le \overline{n}, 1\le j \le n} m^{k,j}_2T^{H^j+1/2}  +  \sum_{1 \le k  \le \overline{n}, 1\le j \le n} m^{k,j}_3T^{2H^j+1/2}\\
	&& +  \sum_{1 \le k  \le \overline{n}, 1\le j, \ell \le n} m^{k,j,\ell}_4 T^{H^k + H^j+1/2} + O(T) + \sum_{k=1}^{\overline{n}} O(T^{1/2 + \zeta^k}) + \sqrt{T} \sum_{j=1}^n o(T^{\min\{2H^j,1\} +\varepsilon/4}),
\end{eqnarray*}
where
\begin{eqnarray*}
	m^k_1 &=&  \int_{\mathbb{R}^n} { \nu_kx_k  \phi_{\mathbf{\mu},\Gamma}\left( \bx\right) d\bx}, \\
	m^{k,j}_2 &=& -  \int_{\mathbb{R}^n}  \nu_{k}x_{k}    a^{(1),j}_T(\bx) d\bx,\\
	m^{k,j}_3 &=& -  \int_{\mathbb{R}^n} \nu_{k}x_{k}  \left( \frac{1}{2}a^{(2),j}_T(\bx)  + c^j_T(\bx) \right)  d\bx,\\
	m^{k,j,\ell}_4 &=&  \int_{\mathbb{R}^n} \nu_{k}x_{k}  d^{(1),j,\ell}(\bx)   d\bx,
\end{eqnarray*}
for $k \in \{1,...,\overline{n}\}, j,\ell \in \{1,...,n\}.$
\end{proposition}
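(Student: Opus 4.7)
The plan is to reduce the computation of $F_{0,T}=E[I_T]$ to $\sum_{k=1}^{\overline n} w_k E[S^k_T]$ by exploiting the strict ordering of initial prices, then to Taylor-expand the exponential in $S^k_T = s^k_0\exp(V^k_0(T) X^k_T)$ and invoke Lemma~\ref{lemma:appro} to replace the marginal law of $X^k_T$ by integrals against $q_T(\bx)$.

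First I would establish a rank-freezing estimate: under Assumption~\ref{assum0}(i), on the event $A_T=\{S^1_T>\cdots>S^n_T\}$ we have $S^{(k)}_T=S^k_T$ for each $k=1,\ldots,\overline n$, and the complement satisfies $\mathbb{Q}(A_T^c)=O(e^{-c/T})$. Indeed, a rank swap between $S^j_T$ and $S^{j+1}_T$ requires $V^j_0(T)X^j_T-V^{j+1}_0(T)X^{j+1}_T$ to exceed the fixed positive gap $\ln(s^j_0/s^{j+1}_0)$, which forces a tail of $\bX_T$ of order $T^{-1/2}$; applying Markov's inequality at arbitrarily high order combined with the uniform moment bound \eqref{eq:uniform} produces the claimed exponential decay. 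Cauchy--Schwarz with Assumption~\ref{assumption:Gaussian_bound} then yields
\[
F_{0,T} \;=\; \sum_{k=1}^{\overline n} w_k E[S^k_T]+O(e^{-c'/T}).
\]

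Next I would Taylor-expand $\exp(V^k_0(T)X^k_T)=1+V^k_0(T)X^k_T+\tfrac12(V^k_0(T))^2(X^k_T)^2+R^k_T$, with the cubic remainder controlled through Cauchy--Schwarz and Assumption~\ref{assumption:Gaussian_bound} by $E|R^k_T|=O(T^{3/2})$. The quadratic contribution is $O(T)$, and combined with $V^k_0(T)=\sqrt{v^k_0(0)T}+O(T^{1/2+\zeta^k})$ this reduces the problem to
\[
w_k E[S^k_T]\;=\;w_k s^k_0+\nu_k\sqrt T\, E[X^k_T]+O(T)+O(T^{1/2+\zeta^k}).
\]
Lemma~\ref{lemma:appro} then gives $E[X^k_T]=\int_{\mathbb{R}^n} x_k q_T(\bx)d\bx+\sum_j o(T^{\min\{2H^j,1\}+\varepsilon/4})$, and I would substitute the explicit formula \eqref{eq:q} for $q_T$ and integrate by parts in the Schwartz class term by term. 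The $\phi_{\mathbf{\mu},\Gamma}$ part gives $m^k_1\sqrt T$; the $T^{H^j} a^{(1),j}_T$ part gives $m^{kj}_2 T^{H^j+1/2}$; the $T^{2H^j}$ pieces from $a^{(2),j}_T$ and $c^j_T$, together with the $V^j_0(T)T^{H^j}/2\cdot(a^{(3),j}_T+b^j_T)$ correction, combine into $m^{kj}_3 T^{2H^j+1/2}$; the cross term $T^{H^{k'}+H^j}d^{(1),k'j}_T$ furnishes $m^{kj\ell}_4 T^{H^k+H^j+1/2}$; and the purely $V^j_0(T)$-linear and $V^{k'}_0 V^j_0$-quadratic pieces each contribute $O(\sqrt T)$ to $\int x_k q_T d\bx$, producing only an $O(T)$ term after the outer factor $\nu_k\sqrt T$.

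The main technical obstacle lies in this final bookkeeping: the density \eqref{eq:q} contains many fractional-order contributions ($T^{H^j}$, $T^{2H^j}$, $T^{H^k+H^j}$, $V^j_0(T)$) and one must systematically determine which interact with the outer $\nu_k\sqrt T$ prefactor to yield the leading $m$-coefficients and which get absorbed into the residuals $O(T)$, $O(T^{1/2+\zeta^k})$, and $\sqrt T\cdot o(T^{\min\{2H^j,1\}+\varepsilon/4})$. Secondary subtleties are the Gaussian-tail bound underlying the rank-freezing step and the Taylor remainder estimate, both relying on the exponential integrability supplied by Assumption~\ref{assumption:Gaussian_bound} to control moments of $\exp(V^k_0(T)|X^k_T|)$ uniformly over $T\in(0,T^*]$.
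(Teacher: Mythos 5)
Your route coincides with the paper's: reduce to the event on which the initial ordering is preserved (the paper sums over all permutations and shows only $A^{(1,2,\ldots,n)}_T$ contributes, via tail estimates), Taylor-expand $e^{V^k_0(T)X^k_T}$, control the remainder through the exponential integrability of Assumption \ref{assumption:Gaussian_bound}, and then apply Lemma \ref{lemma:appro} and read the coefficients $m^k_1,\ldots,m^{kj\ell}_4$ off the density $q_T$. Two points in your write-up need repair, though neither derails the argument. First, Markov's inequality at arbitrarily high order applied to the polynomial moment bound \eqref{eq:uniform} yields decay faster than any power of $T$, not the exponential rate $O(e^{-c/T})$ you assert; the super-polynomial bound is all that is needed, since the statement only carries an $O(T)$ error. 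Relatedly, bounding $E[S^k_T 1_{A_T^c}]$ by plain Cauchy--Schwarz requires a second exponential moment of $S^k_T$, which the weak form of Assumption \ref{assumption:Gaussian_bound} (some $p>1/2$) does not obviously provide; use H\"older with an exponent close to $1$ on $S^k_T$, which is essentially the manipulation the paper performs for its Lagrange remainder $E[e^{\xi^k_T}(X^k_T)^2]$. Second, your bookkeeping for $m^{kj}_3$ is off: the term $\tfrac{V^j_0(T)T^{H^j}}{2}\bigl(a^{(3),j}_T+b^j_T\bigr)$ in $q_T$ carries the power $T^{H^j+1/2}$, so after the outer factor $\nu_k\sqrt{T}$ it contributes $O(T^{H^j+1})=o(T)$ and must be absorbed into the $O(T)$ remainder rather than merged into the coefficient of $T^{2H^j+1/2}$; this is exactly why the stated $m^{kj}_3$ involves only $a^{(2),j}_T$ and $c^j_T$. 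With these corrections your proposal reproduces the paper's proof.
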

Next, we compute the asymptotic
expansion of $F_{0,T}$ 
for the case where exactly two 
stocks have the same initial price.
\begin{proposition}\label{pro:future2}
Let Assumptions \ref{assum:main},  \ref{assumption:Gaussian_bound} be in force. If $s^1_ 0 > ...>s^{r-1}_0  = s^{r}_0 > ... > s^n_0 > 0$ are fixed for some $r \in\{ 2,\ldots,n\}$ , then
\begin{eqnarray*}
	F_{0,T} &=& I_0 +  \sum_{k=1}^{\overline{n}} m^k_5 \sqrt{T}+ \sum_{1\le k \le \overline{n}, 1 \le j \le n} m^{k,j}_6T^{H^j+1/2}+\sum_{1\le k \le \overline{n}, 1 \le j \le n} m^{k,j}_7T^{2H^j + 1/2} \\
	&+&\sum_{1\le k \le \overline{n}, 1 \le j,\ell \le n} m^{k,j,\ell}_8T^{H^j+H^{\ell}+1/2} + O(T) + O(T^{\zeta^{r-1}}) + O(T^{\zeta^{r}}) + \sum_{k=1}^{\overline{n}} O(T^{1/2 + \zeta^k}) \\
	&+& \sqrt{T} \sum_{j=1}^n o(T^{\min\{2H^j,1\} +\varepsilon/4}),
\end{eqnarray*}
where 
\begin{eqnarray*}
	m^k_5&=&  \int_{-\infty}^{\infty} \cdots \int_{-\infty}^{\sqrt{\frac{v^{r-1}_0(0)}{v^r_0(0)}} x_{r-1}} \cdots \int_{-\infty}^{\infty}  { \nu_kx_k  \phi_{\mathbf{\mu},\Gamma}\left( x\right) dx_n ... dx_1} \\
	&& \qquad +  \int_{-\infty}^{\infty} \cdots \int_{-\infty}^{\sqrt{\frac{v^{r}_0(0)}{v^{r-1}_0(0)}} x_{r}} \cdots \int_{-\infty}^{\infty}  { \nu_kx_k  \phi_{\mathbf{\mu},\Gamma}\left( x\right) dx_n ... dx_1}, \\
\end{eqnarray*}
	\begin{eqnarray*}
	m^{k,j}_6&=&  \int_{-\infty}^{\infty} \cdots \int_{-\infty}^{\sqrt{\frac{v^{r-1}_0(0)}{v^r_0(0)}} x_{r-1}} \cdots \int_{-\infty}^{\infty}  \nu_kx_k a^{(1),j}_T(\bx) dx_n ... dx_1 \nonumber \\
	&+&  \int_{-\infty}^{\infty} \cdots \int_{-\infty}^{\sqrt{\frac{v^{r}_0(0)}{v^{r-1}_0(0)}} x_{r}} \cdots \int_{-\infty}^{\infty}  w_ks^k_0x_k a^{(1),j}_T(\bx) dx_n ... dx_1, 	\end{eqnarray*}
\begin{eqnarray*}
	m^{k,j}_7&=&  \int_{-\infty}^{\infty} \cdots \int_{-\infty}^{\sqrt{\frac{v^{r-1}_0(0)}{v^r_0(0)}} x_{r-1}} \cdots \int_{-\infty}^{\infty}   \nu_kx_k     \left( \frac{1}{2}a^{(2),j}_T(\bx)  + c^j_T(\bx) \right)  dx_n ... dx_1\\
	&+&  \int_{-\infty}^{\infty} \cdots \int_{-\infty}^{\sqrt{\frac{v^{r}_0(0)}{v^{r-1}_0(0)}} x_{j}} \cdots \int_{-\infty}^{\infty}   \nu_kx_k     \left( \frac{1}{2}a^{(2),j}_T(\bx)  + c^j_T(\bx) \right)  dx_n ... dx_1,
\end{eqnarray*}
\begin{eqnarray*}
	m^{k,j,\ell}_8&=&   \int_{-\infty}^{\infty} \cdots \int_{-\infty}^{\sqrt{\frac{v^{r-1}_0(0)}{v^r_0(0)}} x_{r-1}} \cdots \int_{-\infty}^{\infty}   \nu_kx_k  d^{(1),j,\ell}(\bx)  dx_n ... dx_1 \nonumber\\
	&+& \int_{-\infty}^{\infty} \cdots \int_{-\infty}^{\sqrt{\frac{v^{r}_0(0)}{v^{r-1}_0(0)}} x_{r}} \cdots \int_{-\infty}^{\infty}   \nu_kx_k  d^{(1),j,\ell}(\bx)  dx_n ... dx_1,\nonumber
\end{eqnarray*}
for $k \in \{1,...,\overline{n}\}, j,\ell \in \{1,...,n\}.$
\end{proposition}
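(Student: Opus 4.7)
The plan is to follow the pattern of the proof of Proposition \ref{pro:future1}, the essential new step being to track the ambiguity in the ranks of the two tied stocks at positions $r-1$ and $r$. Writing $F_{0,T} = E[\sum_{k=1}^{\overline{n}} w_k S^{(k)}_T]$ and introducing the event
\begin{equation*}
A := \{Z^{r-1}_T > Z^r_T\},
\end{equation*}
the expectation splits into contributions on $A$ and $A^c$. On $A$ the ranked order agrees with the labelling, so $S^{(k)}_T = S^k_T$ for $k = 1,\ldots,\overline{n}$; on $A^c$ only stocks $r-1$ and $r$ swap, while every strict inequality between the remaining initial prices propagates to the rank order with probability $1 - O(T^m)$ for every $m$, by Markov's inequality applied to $V^i_0(T) X^i_T - V^{i+1}_0(T) X^{i+1}_T = O(\sqrt{T})$ in $L^p$ together with the moment bounds in Assumption \ref{assum:main}.

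Second, I would substitute $S^j_T - s^j_0 = s^j_0(e^{V^j_0(T) X^j_T} - 1)$, using the identity $Z^j_T - \ln s^j_0 = V^j_0(T) X^j_T$ that follows directly from \eqref{eq:X}, and Taylor-expand to produce an integrand polynomial in $\bx$ plus a controlled remainder. Because $s^{r-1}_0 = s^r_0$ gives
\begin{equation*}
Z^{r-1}_T - Z^r_T = V^{r-1}_0(T) X^{r-1}_T - V^r_0(T) X^r_T,
\end{equation*}
and $V^j_0(T) = \sqrt{v^j_0(0) T}(1 + O(T^{\zeta^j}))$ by \eqref{eq:cond_V}, the event $A$ essentially coincides with the linear half-space $\{\sqrt{v^{r-1}_0(0)}\,x_{r-1} > \sqrt{v^r_0(0)}\,x_r\}$ in the $\bx$-coordinates. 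Applying Lemma \ref{lemma:appro} on each of the two half-spaces and matching the polynomial expansion against $q_T(\bx)$ from Theorem \ref{thm:density_expansion} yields the four integral families $m^k_5, m^{kj}_6, m^{kj}_7, m^{kj\ell}_8$, together with the common errors $O(T)$, $O(T^{1/2+\zeta^k})$ and $\sqrt{T}\sum_j o(T^{\min\{2H^j,1\}+\varepsilon/4})$ that already appear in Proposition \ref{pro:future1}.

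The main obstacle is bounding the error from replacing the true random boundary of $A$ by the deterministic linear boundary in $\bx$-space. The symmetric difference of the two regions can be written, up to a set of polynomially small probability, as
\begin{equation*}
\bigl\{|\sqrt{v^{r-1}_0(0)}\,X^{r-1}_T - \sqrt{v^r_0(0)}\,X^r_T| \le C(T^{\zeta^{r-1}} + T^{\zeta^r})\bigr\},
\end{equation*}
and by Theorem \ref{thm:density_expansion} the random variable $\sqrt{v^{r-1}_0(0)}\,X^{r-1}_T - \sqrt{v^r_0(0)}\,X^r_T$ admits a density bounded uniformly in $T$ near zero. Integrating the polynomial weights $\nu_k x_k$ against the normal density on this sliver therefore contributes $O(T^{\zeta^{r-1}}) + O(T^{\zeta^r})$. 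This is precisely the source of the two new error terms $O(T^{\zeta^{r-1}}) + O(T^{\zeta^r})$ in the statement that do not appear in the strictly distinct case.
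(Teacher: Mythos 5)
Your proposal is correct and follows essentially the same route as the paper: reduce to the two permutations that swap the tied stocks $r-1$ and $r$ (all others being negligible by tail/moment bounds), Taylor-expand $e^{V^k_0(T)X^k_T}$, apply Lemma \ref{lemma:appro} with the density $q_T(\bx)$ from Theorem \ref{thm:density_expansion} on the two half-spaces, and charge the replacement of the $T$-dependent boundary $\frac{V^{r-1}_0(T)}{V^r_0(T)}x_{r-1}$ by $\sqrt{v^{r-1}_0(0)/v^r_0(0)}\,x_{r-1}$ (cf.\ \eqref{eq:ratio_V}) with the errors $O(T^{\zeta^{r-1}})+O(T^{\zeta^r})$. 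Your sliver argument for the boundary error and your Markov-inequality treatment of the non-adjacent permutations are just minor variants of the paper's use of \eqref{eq:ratio_V} and Lemma \ref{lemma:tail1}.
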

The proofs of Propositions \ref{pro:future1}, \ref{pro:future2} are given in Subsections \ref{proof:future1}, \ref{proof:future2}, respectively. In Propositions \ref{pro:future1}, \ref{pro:future2}, since $\phi_{\mu,\Gamma}$ is a symmetric function, the quantity $m^k_1$ disappears (as seen in Example \ref{ex:2GBMs}) while the quantity $m^k_5$ may be non zero. Therefore, the behaviour of the future prices $F_{t,T}$ are completely different for the two cases in Propositions \ref{pro:future1}, \ref{pro:future2}.

\section{Pricing index call options}\label{sec:IV}
We first recall the Black-Scholes formula for European call options.  
\begin{definition}\label{defi:BS}
The Black-Scholes  price function is denoted by 
\begin{eqnarray}
	C^{BS}(T-t,x,k,\sigma) = N(d_1)x - N(d_2)xe^ke^{-r(T-t)},
\end{eqnarray}
where $k$ is the log strike, $x$ is the spot price at time $t$, $T$ is the maturity of the option, $r$ is the interest rate, and
\begin{eqnarray*}
	d_1 &=& \frac{1}{\sigma\sqrt{T-t}}\left[ - k + \left( r + \frac{\sigma^2}{2} \right) (T-t)  \right],\\
	d_2 &=& d_1 - \sigma\sqrt{T-t},
\end{eqnarray*}
where $N$ is the cumulative distribution function of the standard normal distribution. 
\end{definition}
While the index $I_t$ is not tradable, the future $F_{t,T}$ is tradable and a $\mathbb{Q}$-martingale with $F_{T,T} = I_T$. Here,  $F_{t,T}$ may differ from $I_t$ since $I_t$ is not  necessarily a $\mathbb{Q}$-martingale. Therefore, the ATM strike at time $0$ for the index option is $F_{0,T}$.  Let $ C(T,x,k):= E[(I_T-xe^k)^+]$ be the price at time $0$ of a European call option with the log strike $k$.
\begin{definition}\label{defi_IV}
The implied volatility $\sigma^{IV}: = \sigma^{IV}(T,F_{0,T},k)$ is the solution to the following equation
\begin{eqnarray}\label{eq:implied_vol}
	C^{BS}(T,F_{0,T},k, \sigma^{IV}(T,F_{0,T},k)) = C(T,F_{0,T}, k).
\end{eqnarray}
The ATM skew is defined by
\begin{eqnarray}\label{eq:atmskew}
	ATMskew(T):=\frac{\partial \sigma^{IV}}{\partial k}(T,F_{0,T},k=0).
\end{eqnarray}
\end{definition}
\begin{assumption}\label{assum:C_der_k}
The price $C(T,x,k)$ is continuously differentiable  with respect to $(x,k)$.  
\end{assumption}
{It can be seen that if $I_T$ admits a nice probability density function then Assumption \ref{assum:C_der_k} is satisfied.} We remark that the option prices, implied volatilities, and other related quantities depend on the initial stock price vector $\mathbf{s}_0$ implicitly through the future prices $F_{0,T}$. Below, we write $F_{0,T}(\mathbf{s}_0)$ to emphasize that the future price is a function of the initial stock values. The following result proves the continuity of such quantities with respect to the initial prices when $T$ is fixed. This situation is different from the ones in  Propositions \ref{pro:future1}, \ref{pro:future2} where the initial values for stocks are fixed.
\begin{proposition}\label{lem:lim} Let Assumption \ref{assum:C_der_k} be in force. Fix $T >0$.
\begin{itemize}
	\item[(i)]  For any $\overline{\mathbf{s}}_0 \in \mathbb{R}^n_+$, it holds that $$\lim_{\mathbf{s}_0 \to \overline{\mathbf{s}}_0} F_{0,T}(\mathbf{s}_0) =  F_{0,T}(\overline{\mathbf{s}}_0). $$
	\item[(ii)] For any $\overline{\mathbf{s}}_0 \in \mathbb{R}^n_+$, it holds that $$\lim_{\mathbf{s}_0 \to \overline{\mathbf{s}}_0} \frac{\partial C}{ \partial k}(T,F_{0,T}(\mathbf{s}_0),k=0)  =  \frac{\partial C}{ \partial k} (T,F_{0,T}(\overline{\mathbf{s}}_0),k=0). $$
	\item[(iii)] For any $\overline{\mathbf{s}}_0 \in \mathbb{R}^n_+$, it holds that $$\lim_{\mathbf{s}_0 \to  \overline{\mathbf{s}}_0}\sigma^{IV}(T,F_{0,T}(\mathbf{s}_0),k=0) = \sigma^{IV}(T,F_{0,T}(\overline{\mathbf{s}}_0),k=0).$$
	\item[(iv)] For any $\overline{\mathbf{s}}_0 \in \mathbb{R}^n_+$, $$\lim_{\mathbf{s}_0 \to \overline{\mathbf{s}}_0}  \frac{\partial \sigma^{IV}}{\partial k}(T,F_{0,T}(\mathbf{s}_0),k=0) = \frac{\partial \sigma^{IV}}{\partial k}(T,F_{0,T}(\overline{\mathbf{s}}_0),k=0).$$
\end{itemize}
\end{proposition}
\begin{proof}
Recall that $\Pi_n$ contains all permutations of the set $\{1,2,\ldots,n\}$. From \eqref{eq:future_decomp} and \eqref{eq:A} in Section \ref{proof:future1},  we write 
$$	F_{0,T}(\mathbf{s}_0) = \sum_{\psi_n\in\Pi_n}E\left[ \left(   \sum_{j=1}^{\overline{n}} w_js^{\psi_n(j)}_0e^{V^{\psi_n(j)}_0(T)X^{\psi_n(j)}_T}\right)  1_{s^{\psi_n(1)}_0e^{V^{\psi_n(1)}_0(T)X^{\psi_n(1)}_T} \ge ... \ge s^{\psi_n(n)}_0e^{V^{\psi_n(n)}_0(T)X^{\psi_n(n)}_T}} \right], $$
then $(i)$ follows by the dominated convergence theorem,  noting that for a dominating random variable we can choose $ \sum_{\psi_n\in\Pi_n} \sum_{j=1}^{\overline{n}} w_jMe^{V^{\psi_n(j)}_0(T)X^{\psi_n(j)}_T}$ with some large $M$. Computing the derivative of the option price $C$ w.r.t $k$, we obtain
\begin{equation}\label{eq:C_k}
\frac{\partial C}{\partial k}(T,F_{0,T},k=0) = -F_{0,T}\mathbb{Q}\left( I_T > F_{0,T} \right) = - F_{0,T} \sum_{\psi_n\in\Pi_n}\mathbb{Q}\left( \{I_T > F_{0,T}\} \cap A^{\psi_n}_T\right).
\end{equation}We rewrite \eqref{eq:C_k} as 
\begin{eqnarray*}
	&&\frac{\partial C}{\partial k}(T,F_{0,T}(\mathbf{s}_0),k=0)\\
	&=& - F_{0,T}(\mathbf{s}_0) \sum_{\psi_n\in\Pi_n}E\left[ 1_{\sum_{j=1}^{\overline{n}} w_js^{\psi_n(j)}_0e^{V^{\psi_n(j)}_0(T)X^{\psi_n(j)}_T} > F_{0,T}(\mathbf{s}_0)} 1_{s^{\psi_n(1)}_0e^{V^{\psi_n(1)}_0(T)X^{\psi_n(1)}_T} \ge ... \ge s^{\psi_n(n)}_0e^{V^{\psi_n(n)}_0(T)X^{\psi_n(n)}_T}} \right].
\end{eqnarray*}
The conclusion $(i)$ and the dominated convergence theorem imply $(ii)$. 

We now prove $(iii)$. For a fixed $T>0$, we define the function
\begin{eqnarray}
	f: \mathbb{R}_+ \times \mathbb{R} \times \mathbb{R}_+ &\to& \mathbb{R} \nonumber\\
	f(x,k,\sigma) &=& C^{BS}(T,x,k, \sigma) -  C(T,x,k). \label{eq:f}
\end{eqnarray} 
From \eqref{eq:implied_vol}, the implied volatility $\sigma^{IV} = \sigma^{IV} (T,x,k)$ is the solution to the equation $f(x,k,\sigma) = 0$. Fix a point $(x,0,\sigma)$ such that $f(x,k=0,\sigma) = 0$. The derivative of $f$ w.r.t $\sigma$ at $(x,k=0,\sigma)$  is 
\begin{eqnarray*}
	\frac{\partial f}{\partial \sigma}(x,0,\sigma) &=& \frac{\partial C^{BS}}{\partial \sigma} (T,x,0,\sigma) = N'(d_1) \frac{\partial d_1}{\partial \sigma} x  - N'(d_2) \frac{\partial d_2}{\partial \sigma} x  \\
	&=& N'(d_1) x \frac{1}{2} \sqrt{T}  + N'(d_2)  x \frac{1}{2}  \sqrt{T}> 0.
\end{eqnarray*}
By the implicit function theorem, there exists an open set $U \subset \mathbb{R}_+ \times \mathbb{R}$ containing $(x,k=0)$ and a unique continuously differentiable function $\sigma^{IV}: U \to \mathbb{R}$ such that $f(y,\ell,\sigma^{IV}(y,\ell)) = 0$ and 
\begin{eqnarray} \label{eq:implicit}\left( \frac{\partial \sigma^{IV}}{\partial x}, \frac{\partial \sigma^{IV}}{\partial k} \right) (y,\ell) = - \left(  \frac{\partial f}{\partial \sigma}(y,\ell,\sigma^{IV}(y,\ell)) \right)^{-1} \left(  \frac{\partial f}{\partial x}, \frac{\partial f}{\partial k} \right)(y,\ell,\sigma^{IV}(y,\ell)), \ \forall  (y,\ell) \in U.  
\end{eqnarray} 
From $(i)$, we get that  $\lim_{\mathbf{s}_0 \to \overline{\mathbf{s}}_0}\sigma^{IV}(T,F_{0,T}(\mathbf{s}_0),k=0) = \sigma^{IV}(T,F_{0,T}(\overline{\mathbf{s}}_0),k=0)$, and thus $(iii)$ follows. Finally, the statement $(iv)$ is deduced from \eqref{eq:implicit}.
\end{proof}

\begin{lemma}The ATM skew is computed by
\begin{eqnarray}
	ATMskew(T) = \frac{\sqrt{2\pi}e^{\frac{(\sigma^{IV})^2T}{8}}}{\sqrt{T} } \left( \frac{1}{F_{0,T}} \frac{\partial C}{ \partial k}(T,F_{0,T},k=0)  + N\left( -\frac{1}{2}\sigma^{IV}\sqrt{T} \right)\right).	\label{eq:ATMskew1}
\end{eqnarray}
\end{lemma}
\begin{proof}
From  \eqref{eq:atmskew}, \eqref{eq:implicit}, we get
\begin{eqnarray*} 
	\frac{\partial \sigma^{IV}}{\partial k}(T,F_{0,T},k = 0) &=&  \left(   \frac{\partial C^{BS}}{\partial \sigma}(F_{0,T},k=0,\sigma^{IV}(F_{0,T},k=0))  \right)^{-1}  \\
	&& \times \left(  \frac{\partial C}{\partial k} (F_{0,T},k=0,\sigma^{IV}(F_{0,T},k=0)) -   \frac{\partial C^{BS}}{\partial k} (F_{0,T},k=0,\sigma^{IV}(F_{0,T},k=0)) \right) .  
\end{eqnarray*} 
The vega of the Black-Scholes price is 
\begin{eqnarray}
	\frac{\partial C^{BS}}{\partial \sigma} (F_{0,T},k=0,\sigma^{IV}(F_{0,T},k=0)) &=& N'(d_1) \frac{\partial d_1}{\partial \sigma}  F_{0,T}  - N'(d_2) \frac{\partial d_2}{\partial \sigma} F_{0,T}  \nonumber\\
	&=& F_{0,T}\sqrt{T} \frac{1}{\sqrt{2\pi}}e^{-\frac{(\sigma^{IV})^2T}{8}}, \label{eq:vega}
\end{eqnarray}
where
\begin{eqnarray*}
	d_1 = \frac{1}{\sigma^{IV}\sqrt{T}} \left( - k +  \frac{(\sigma^{IV})^2}{2} T  \right), \qquad 
	d_2 = d_1 - \sigma^{IV}\sqrt{T}. 
\end{eqnarray*}
We also compute
\begin{eqnarray*}
	\frac{\partial C^{BS}}{\partial k}(F_{0,T},k=0,\sigma^{IV}(F_{0,T},k=0))	&=& N'(d_1) \frac{\partial d_1}{\partial k} F_{0,T} - N'(d_2) \frac{\partial d_2}{\partial k}  F_{0,T} -  N(d_2)F_{0,T}\nonumber  \\
	&=&  -  N(d_2)F_{0,T}.
\end{eqnarray*}
The ATM skew formula follows.
\end{proof}

\begin{lemma}\label{lemma:IV_asym} Let Assumptions  \ref{assum:main}, \ref{assumption:Gaussian_bound}  be in force. It holds that
$$\sigma^{IV}(T,F_{0,T},k=0)\sqrt{T} = O(\sqrt{T}).$$
\end{lemma}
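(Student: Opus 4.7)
The plan is to reduce the statement to a bound on the at-the-money call price $C(T, F_{0,T}, 0) = E[(I_T - F_{0,T})^+]$, specifically $C(T, F_{0,T}, 0) = O(\sqrt{T})$, and then invert the Black-Scholes formula at the money to conclude $\sigma^{IV} = O(1)$, which is exactly $\sigma^{IV}\sqrt{T} = O(\sqrt{T})$.

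For the price bound, I would use three ingredients in sequence. First, the martingale identity $F_{0,T} = E[I_T]$ gives $E[I_T - F_{0,T}] = 0$, which implies the elementary put--call balance
\[ E[(I_T - F_{0,T})^+] = \tfrac{1}{2} E[|I_T - F_{0,T}|] \le \tfrac{1}{2}\bigl(E[|I_T - I_0|] + |I_0 - F_{0,T}|\bigr). \]
Proposition \ref{pro:future1} (or \ref{pro:future2} in the collision case) gives $|I_0 - F_{0,T}| = O(\sqrt{T})$. Second, the $\ell^\infty$-Lipschitz property of order statistics, $|S^{(j)}_T - s^{(j)}_0| \le \max_i |S^i_T - s^i_0|$, together with the bound $\max_i \le \sum_i$, yields
\[ E[|I_T - I_0|] \le \Bigl(\sum_{j=1}^{\overline{n}} w_j\Bigr) \sum_{i=1}^n E[|S^i_T - s^i_0|]. \]
Third, I would bound each $E[|S^i_T - s^i_0|]$ by applying the BDG inequality to the positive continuous martingale $S^i$, which gives $E[|S^i_T - s^i_0|] \le C\, E[\langle S^i\rangle_T^{1/2}]$. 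Using $\langle S^i\rangle_T = \int_0^T (S^i_u)^2 \sum_k(\sigma^{ik}_u)^2\,du \le \sup_{u\le T}(S^i_u)^2 \cdot \langle M^i\rangle_T$ and Hölder with conjugate exponents $(a,b)$, $\tfrac{1}{a}+\tfrac{1}{b}=1$,
\[ E[|S^i_T - s^i_0|] \le C\,\bigl(E[\sup_{u\le T}(S^i_u)^a]\bigr)^{1/a}\bigl(E[\langle M^i\rangle_T^{b/2}]\bigr)^{1/b}. \]
Under Assumption \ref{assumption:Gaussian_bound}, Novikov's criterion ensures that $\mathcal{E}(aM^i)$ is a true martingale for some $a > 1$, so that $(S^i)^a$ is a submartingale with finite terminal expectation and Doob's maximal inequality controls the first factor. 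Assumption \ref{assum:main} makes the second factor $O(\sqrt{T})$. Combining these estimates yields $C(T, F_{0,T}, 0) = O(\sqrt{T})$.

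For the inversion step, recall that at the money, $C^{BS}(T, F_{0,T}, 0, \sigma) = F_{0,T}\bigl(2N(\sigma\sqrt{T}/2) - 1\bigr)$, which is strictly increasing in $\sigma$ with range $[0, F_{0,T}]$. Since $F_{0,T} \to I_0 > 0$ as $T \to 0$ (by Propositions \ref{pro:future1}, \ref{pro:future2}), if $\sigma^{IV}\sqrt{T}$ were unbounded along a sequence $T_n \downarrow 0$, the left-hand side of $C^{BS}(T_n, F_{0,T_n}, 0, \sigma^{IV}) = C(T_n, F_{0,T_n}, 0)$ would stay bounded below by a positive constant, contradicting $C = O(\sqrt{T})$. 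Thus $\sigma^{IV}\sqrt{T}$ lies in some bounded interval $[0,K]$ for $T$ small. On $[0,K]$ one has the elementary inequality $2N(x/2) - 1 \ge c\, x$ for some $c>0$, and substituting this back into the pricing equation gives $\sigma^{IV} \le C/F_{0,T} = O(1)$, hence $\sigma^{IV}\sqrt{T} = O(\sqrt{T})$.

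The most delicate point is the integrability bookkeeping in the Hölder step: Assumption \ref{assumption:Gaussian_bound} only provides exponential moments of $\langle M^i\rangle$ of a fixed order $p$ just above $1/2$, so high moments of $S^i$ need not exist in general. The Hölder exponent $a$ must therefore be chosen strictly between $1$ and $\sqrt{2p}$; this window is nonempty precisely because $p > 1/2$, which is the raison d'être of that part of the assumption and the main technical reason the estimate closes uniformly on $[0, T^*]$.
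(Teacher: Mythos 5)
Your proof has the same top-level architecture as the paper's: both reduce the lemma to the estimate $E[(I_T-F_{0,T})^+]=O(\sqrt{T})$ and then invert the ATM Black--Scholes price $F_{0,T}\bigl(2N(\sigma\sqrt{T}/2)-1\bigr)$ (the paper simply deduces $N(d_1)=\tfrac12+O(\sqrt{T})$; your monotonicity-plus-linear-lower-bound argument is the same inversion carried out a little more carefully). Where you genuinely diverge is the call-price bound itself. The paper writes $(I_T-F_{0,T})^+=1_{I_T\ge F_{0,T}}(I_T-I_0)+1_{I_T\ge F_{0,T}}(I_0-F_{0,T})$ and controls the first term by re-running the Taylor expansion of $e^{V^k_0(T)X^k_T}$ with the H\"older/Novikov remainder estimate from the proof of Proposition~\ref{pro:future1}, whereas you use the put--call balance $E[(I_T-F_{0,T})^+]=\tfrac12 E|I_T-F_{0,T}|$, the $\ell^\infty$-Lipschitz property of sorting, the BDG inequality for each $S^i$, and a Doob maximal bound. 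Your route is more self-contained (it does not reuse the expansion machinery of Section~\ref{proof:future1}) and yields the slightly stronger bound $E|I_T-I_0|=O(\sqrt{T})$; the cost is the moment bookkeeping for $\sup_{u\le T}S^i_u$, and that is exactly where your write-up has a gap.

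The inference ``$\mathcal{E}(aM^i)$ is a true martingale by Novikov, so $(S^i)^a$ is a submartingale with finite terminal expectation'' does not hold as stated. Indeed $(S^i_T)^a=(s^i_0)^a\,\mathcal{E}(aM^i)_T\,e^{\frac{a^2-a}{2}\langle M^i\rangle_T}$, and the martingale property of $\mathcal{E}(aM^i)$ controls only the first factor; it gives no integrability for the product, which is precisely what conditional Jensen (to get the submartingale property) and Doob's $L^a$ inequality require. The claim is repairable from Assumption~\ref{assumption:Gaussian_bound}: write $e^{aM^i_T-\frac{a}{2}\langle M^i\rangle_T}=\bigl(e^{raM^i_T-\frac{r^2a^2}{2}\langle M^i\rangle_T}\bigr)^{1/r}e^{\frac{ra^2-a}{2}\langle M^i\rangle_T}$, apply H\"older with exponents $(r,r')$ and the supermartingale bound $E\bigl[\mathcal{E}(raM^i)_T\bigr]\le 1$, to obtain $\sup_{T\le T^*}E[(S^i_T)^a]<\infty$ whenever $\frac{r'(ra^2-a)}{2}\le p$; since this quantity tends to $\tfrac12<p$ as $a,r\downarrow 1$, an admissible pair $(a,r)$ exists. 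In particular the binding constraint is not ``$1<a<\sqrt{2p}$'' (that is only Novikov for $\mathcal{E}(aM^i)$, which you do not actually need), but this H\"older condition, satisfiable precisely because $p>\tfrac12$. With that patch, $E[\sup_{u\le T}(S^i_u)^a]$ is bounded uniformly on $(0,T^*]$ by Doob, and the remainder of your argument goes through.
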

\begin{proof}
Using the argument with the Taylor theorem in the proof of Proposition \ref{pro:future1}, we could prove that
\begin{equation}\label{call}
	E[1_{I_T\ge F_{0,T}}(I_T - I_0)] = O(\sqrt{T}).
\end{equation}
Therefore, the ATM call price is
\begin{eqnarray*}
	E[(I_T-F_{0,T})^+] = E[1_{I_T \ge F_{0,T}}(I_T - I_0)] + E[1_{I_T \ge F_{0,T}}(F_{0,T} - I_0)] = O(\sqrt{T}), 
\end{eqnarray*}
from \eqref{call} and Propositions \ref{pro:future1}, \ref{pro:future2}. The ATM implied volatility $\sigma^{IV}(T,F_{0,T}, k =0)$ is the solution of the equation  
$F_{0,T}(N(d_1)-N(d_2)) = E[(I_T-F_{0,T})^+] = O(\sqrt{T})$. We deduce that $N(d_1) = 1/2 + O(\sqrt{T})$ and hence $d_1 = \sigma^{IV}\sqrt{T} = O(\sqrt{T})$. 
\end{proof}
\begin{remark}
Lemma \ref{lemma:IV_asym} is used to study the behaviour of the quantity $\sigma^{IV}\sqrt{T}$ in  \eqref{eq:ATMskew1}, and the order $O(\sqrt{T})$ is enough for our purposes. 
\end{remark}

From empirical studies, it is usually assumed  that the ATM skew is well approximated by a power law function of the time to maturity $T$, see Figure \ref{fig:quasi_emp} (a). Nevertheless, this assumption may need further consideration because the option prices for very short maturities may not be available. In this paper, a phenomenon called ``quasi-blow-up" is introduced, in order to produce the power-like shape of the ATM skew at small maturities as explained in \citet{guyon2022does}. The quasi-blow-up phenomena are with respect to the initial stock prices, which is different from \citet{guyon2022does} where there are parameters to control the power-like shape for small maturities.
\begin{definition}\label{defi:quasi_blow_up}
The ATM skew exhibits a quasi-blow-up phenomenon w.r.t. initial prices $\mathbf{s}_0 \in \mathbb{R}^n_+$ if there is a set $\emptyset \ne \Theta \subset \mathbb{R}^n_+$ such that
\begin{itemize}
	\item[(i)] for $\overline{\mathbf{s}}_0 \in \Theta$, the corresponding ATM skew blows up
	\begin{equation}
		\lim_{T \to 0}  \frac{\partial \sigma^{IV}}{\partial k}(T,F_{0,T}(\overline{\mathbf{s}}_0),k=0)  = \infty;
	\end{equation}
	\item[(ii)] for $\mathbf{s}_0 \in \mathbb{R}^n_+ \setminus \Theta$, the ATM skew does not blow up
	\begin{equation}
		\lim_{T \to 0}  \frac{\partial \sigma^{IV}}{\partial k}(T,F_{0,T}({\mathbf{s}}_0),k=0) < \infty;
	\end{equation}
	\item[(iii)] for any fixed $T >0$, and $\overline{\mathbf{s}}_0 \in \Theta,$ 
	\begin{equation}
		\lim_{ \mathbb{R}^n_+ \setminus \Theta \ni \mathbf{s}_0 \to \overline{\mathbf{s}}_0}  \frac{\partial \sigma^{IV}}{\partial k}(T,F_{0,T}({\mathbf{s}}_0),k=0)  = \frac{\partial \sigma^{IV}}{\partial k}(T,F_{0,T}(\overline{\mathbf{s}}_0),k=0).
	\end{equation}
\end{itemize}
\end{definition}
Now we are ready to state our main results in this section.
\begin{theorem}\label{thm:main}
Let Assumptions \ref{assum:main}, \ref{assumption:Gaussian_bound},  \ref{assum:C_der_k} be in force. Let $q_T(\bx)$ be given in Theorem \ref{thm:density_expansion}.
\begin{itemize}
	\item[(i)] For the case $s^1_ 0 > s^2_0 > ... > s^n_0$, we have
	\begin{eqnarray*}
	\frac{\partial C}{\partial k} (T,F_{0,T},k=0) &=& - F_{0,T}\left(  \int_{D^1} q_T(\bx)  d\bx  + O(\gamma^1(T))\right) ,
	\end{eqnarray*} where
	$D^1, \gamma(T) = \gamma^1(T)$ are given in \eqref{eq:D1}, \eqref{eq:gamma1}.
	\item[(ii)] For the case $s^1_ 0 > ...>s^{r-1}_0  = s^r_0 > ... > s^n_0$ for some $r \in\{ 2,\ldots,n\}$, we have
	\begin{eqnarray*}
	\frac{\partial C}{\partial k}(T,F_{0,T},k=0)  &=& - F_{0,T}\left( \int_{D^{2,1}\cup D^{2,2}} q_T(\bx)  d\bx  + O(T^{\zeta^{r-1}}) + O(T^{\zeta^{r}})
		+ O(\gamma^2(T))\right) ,
	\end{eqnarray*}
	where $D^{2,1}, D^{2,2}, \gamma^2(T)$ are given in \eqref{eq:D21}, \eqref{eq:D22},  \eqref{eq:gamma2}.
\end{itemize}
\end{theorem}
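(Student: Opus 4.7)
The plan is to reduce the derivative to a probability, partition by the ranking, and then invoke the density expansion. First, differentiating $C(T,x,k) = E[(I_T - xe^k)^+]$ under the expectation and using Assumption \ref{assum:C_der_k} together with the fact that $I_T$ admits a density (so $\mathbb{Q}(I_T = F_{0,T}) = 0$), a dominated-convergence argument yields
\begin{equation*}
\left.\frac{\partial C}{\partial k}(T,F_{0,T},k)\right|_{k=0} \;=\; -F_{0,T}\,\mathbb{Q}(I_T > F_{0,T}).
\end{equation*}
Partitioning the sample space by the ordering of $(S^1_T,\dots,S^n_T)$ and using $S^j_T = s^j_0\exp(V^j_0(T)X^j_T)$ as in the proofs of Propositions \ref{pro:future1} and \ref{pro:future2}, I rewrite this probability as a finite sum over $\psi\in\Pi_n$ of integrals of the joint density $p_T$ of $\bX_T$ over explicit $T$-dependent subsets of $\mathbb{R}^n$.

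Second, I identify the dominant permutations. In case (i), the strict initial order $s^1_0 > \cdots > s^n_0$ forces every $\psi \neq \mathrm{id}$ to correspond to at least one coordinate swap across a strictly positive initial log-price gap; under Assumption \ref{assumption:Gaussian_bound} together with the uniform moment control \eqref{eq:vola}, such events have probability absorbable into $\gamma^1(T)$, and the surviving contribution is the integral over $D^1$. In case (ii), the same argument eliminates all $\psi$ except the identity and the transposition of the tied indices $r-1,r$, giving the two regions $D^{2,1}$ and $D^{2,2}$; the common boundary is the hyperplane $\sqrt{v^r_0(0)}\,x_r = \sqrt{v^{r-1}_0(0)}\,x_{r-1}$ obtained after inserting the leading term of \eqref{eq:cond_V} into the condition $S^{r-1}_T = S^r_T$, and the remainder from this linearization is exactly the $O(T^{\zeta^{r-1}}) + O(T^{\zeta^r})$ term in the statement.

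Third, I apply Theorem \ref{thm:density_expansion} to replace $p_T$ by $q_T$ on the relevant domain, following the split/tail argument of Lemma \ref{lemma:appro}: split the integration into $\{|\bx| \le T^{-\eta}\}$ where the uniform density bound \eqref{eq:q0} yields an error of size $\sum_j o(T^{\min\{2H^j,1\}+\varepsilon/4})$, and $\{|\bx| > T^{-\eta}\}$ where the uniform moment bound \eqref{eq:uniform} together with Markov's inequality controls the tail. The resulting combined error is exactly $\gamma^1(T)$ (resp.\ $\gamma^2(T)$). The main obstacle will be the rigorous bookkeeping of the ``wrong permutation'' contributions: one must show that the probability of a coordinate swap across a strictly positive initial log-price gap is $o(\gamma^j(T))$, faster than any of the polynomial errors appearing in the expansion. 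This uses the Gaussian structure of the leading law of $\bM^{(0)}_T$ from Assumption \ref{assum:main}(iii) combined with Assumption \ref{assumption:Gaussian_bound}, which together yield a sub-Gaussian deviation estimate for $X^j_T - X^k_T$, so that an $O(1)$ gap in initial log-prices translates into an event of probability exponentially small in $1/T$, comfortably smaller than all polynomial remainders in $\gamma^1(T)$ and $\gamma^2(T)$.
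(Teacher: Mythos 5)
Your skeleton matches the paper's: write $\left.\partial_k C\right|_{k=0} = -F_{0,T}\,\mathbb{Q}(I_T>F_{0,T})$, split over the ranking events $A^{\psi_n}_T$, keep only the identity permutation in case (i) and the two permutations that swap $r-1,r$ in case (ii), linearize the ratio $V^{r-1}_0(T)/V^r_0(T)$ via \eqref{eq:cond_V} to get the $O(T^{\zeta^{r-1}})+O(T^{\zeta^r})$ terms, and invoke Theorem \ref{thm:density_expansion}. However, there is a genuine gap at the central step. The domains $D^1$ and $D^{2,1}\cup D^{2,2}$ in the statement are \emph{fixed, $T$-independent} regions cut out by the linear inequality $\sum_k \nu_k x_k > \mathfrak{m}_1$ (resp.\ $\sum_k m^k_5$), whereas the event $\{I_T>F_{0,T}\}$ written in the $\bx$-coordinates is the $T$-dependent nonlinear region $\{\sum_k w_k s^k_0 e^{V^k_0(T)x_k} > F_{0,T}\}$. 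The heart of the paper's proof is precisely the passage from one to the other with error $O(\gamma^1(T))$ (resp.\ $O(\gamma^2(T))$): one truncates to $\{\max_k|X^k_T|\le T^{-\eta}\}$ (the complement being $O(T)$ by \eqref{eq:uniform}), Taylor-expands the exponentials to second order with a remainder controlled via Assumption \ref{assumption:Gaussian_bound}, inserts the expansion of $F_{0,T}$ from Propositions \ref{pro:future1}--\ref{pro:future2}, and then proves the boundary-perturbation claim \eqref{eq:claim}: shifting the threshold by $\gamma^1(T,\bX)$ and removing the quadratic term $T^{1/2}\sum_k\nu_k\sqrt{v^k_0(0)}x_k^2$ changes the probability only by $O(\gamma^1(T))$, which the paper establishes by solving explicitly for the roots $\chi^{\Xi(T)}_{\pm}$ of a quadratic in $x_1$, showing they move by $O(\gamma^1(T))$, and integrating the bounded density $q_T$ over the resulting thin slabs (plus the mean value theorem and Lemma \ref{lemma:tail1} to discard the far root). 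Your proposal attributes the whole error to the replacement $p_T\to q_T$ via the Lemma \ref{lemma:appro}-type split and simply asserts ``the resulting combined error is exactly $\gamma^1(T)$''; this does not explain where the $m^{kj}_2 T^{H^j}$, $m^{kj}_3 T^{2H^j}$, $m^{kj\ell}_4 T^{H^k+H^j}$, $O(T^{\zeta^k-\eta})$ and $O(T^{1-3\eta})$ terms in \eqref{eq:gamma1} come from, nor why a shift of the threshold of that size translates into a probability error of the same size. Without this event-linearization and boundary-stability argument the identity in the theorem is not established.

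A secondary, repairable flaw: you discard the ``wrong'' permutations by claiming a sub-Gaussian deviation estimate for $X^j_T-X^k_T$, ``exponentially small in $1/T$''. The assumptions do not give this: the true law of $\bX_T$ is only known to be close to the Gaussian-type density $q_T$ up to a polynomial error $\sum_j o(T^{\min\{2H^j,1\}+\varepsilon/2})$, and Assumption \ref{assum:main} provides moments of all orders, not exponential tails of $X^j_T$ itself. What is actually available (and what the paper uses) is the Gaussian tail of $q_T$ via Lemma \ref{lemma:tail1} together with the sup-bound of Theorem \ref{thm:density_expansion}, or alternatively a Markov bound with arbitrarily high moments from \eqref{eq:uniform}, since the offending boundary recedes at rate $T^{-1/2}$; either route gives $o(T)$-type smallness, which suffices, but the exponential claim as stated is not justified.
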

The proof of Theorem \ref{thm:main} is given in Section \ref{proof:thm:main}. Using Theorem \ref{thm:main}, we can study the short time behaviour of the ATM skew by using the formula \eqref{eq:ATMskew1} and Lemma \ref{lemma:IV_asym}. We report some special cases that will be illustrated by concrete examples in Section \ref{sec:ex_num}. 
\begin{corollary}\label{cor_05}
For $j = 1,...,n$,  assume that $H^j \in [0.5,1)$. 
\begin{itemize}
	\item[(i)] Case $s^1_ 0 > s^2_0 > ... > s^n_0$. Assume further that $V^j_0(t) = \sqrt{v^j_0(0)t}$ or $V^j_0(t) = \sqrt{v^j_0(0)t} + O(t^{1/2 + \zeta^j})$ with $\zeta^j > 1/2$. If  $\sum_{k = 1}^{\overline{n}} m^k_1  = 0$,  then  
	$$\frac{\partial C}{\partial k} (T,F_{0,T},k=0) =- F_{0,T}\left( \frac{1}{2} + O(\sqrt{T})\right) ,$$
	and the ATM skew does not blow up.
	\item[(ii)] Case $s^1_ 0 > ...>s^{r-1}_0  = s^r_0 > ... > s^n_0$ for some $r \in\{ 2,\ldots,n\}$: if $\sum_{k=1}^{\overline{n}} m^k_5 \ne 0$, then $\int_{D^{2,1}\cup D^{2,2}}  \phi_{\mathbf{\mu},\Gamma}(\bx) \ne \frac{1}{2}$ and therefore the ATM skew blows up at the rate of $T^{-1/2}$. 
\end{itemize}
This means that the model exhibits the quasi-blow-up phenomenon with the set of initial prices $\Theta = \{ (s^1_0, ...,s^n_0) \in \mathbb{R}^n_+: s^{r-1}_0 = s^r_0 \text{ for some } r = 2,..,n \}$. 
\end{corollary}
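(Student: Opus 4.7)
The plan is to reduce both parts of the corollary to a Gaussian probability calculation and then feed the result into the skew formula \eqref{eq:ATMskew1}. The starting point is the identity used in the proof of Proposition \ref{lem:lim}(ii), namely
\[
\frac{1}{F_{0,T}}\left.\frac{\partial C(T,F_{0,T},k)}{\partial k}\right|_{k=0} \;=\; -\,\mathbb{Q}(I_T > F_{0,T}),
\]
combined with the bound $\sigma^{IV}\sqrt T = O(\sqrt T)$ from Lemma \ref{lemma:IV_asym}.

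For \emph{case (i)}, I would apply Theorem \ref{thm:main}(i) and expand $q_T$ as in \eqref{eq:q}. Under $H^j\ge 1/2$, every correction term in $q_T$ carries at least one factor of $\sqrt T$ — the $t^{H^j}$, $t^{2H^j}$, $V_0^j(t)$ and $V_0^j(t)\,t^{H^j}$ pieces are all $O(\sqrt T)$ — leaving $\int_{D^1}\phi_{\mathbf{\mu},\Gamma}(\bx)\,d\bx$ as the only $O(1)$ contribution. From the construction of $D^1$ in the proof of Theorem \ref{thm:main}, after the linearisation $I_T = I_0 + \sqrt T\sum_{j=1}^{\overline n}\nu_j x_j + O(T)$, this region is (at leading order) the half-space $\{\sum_{j=1}^{\overline n}\nu_j x_j > (F_{0,T}-I_0)/\sqrt T\}$. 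By Proposition \ref{pro:future1} the leading $\sqrt T$-coefficient of $F_{0,T}-I_0$ equals $\sum_k m^k_1$, which vanishes by hypothesis, so the boundary offset tends to zero; since $m^k_1 = \nu_k\mu_k$, the same condition $\sum_k\nu_k\mu_k=0$ places the Gaussian mean $\mathbf{\mu}$ on the limiting hyperplane, so the half-space has Gaussian mass $1/2$. Hence $\partial C/\partial k|_{k=0}=-F_{0,T}(1/2+O(\sqrt T))$. Taylor-expanding $e^{(\sigma^{IV})^2T/8}=1+O(T)$ and $N(-\tfrac12\sigma^{IV}\sqrt T)=\tfrac12-\tfrac{\sigma^{IV}\sqrt T}{2\sqrt{2\pi}}+O(T)$ in \eqref{eq:ATMskew1}, the bracketed quantity telescopes to $O(\sqrt T)$, giving $ATMskew(T)=O(1)$.

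For \emph{case (ii)} I would proceed analogously through Theorem \ref{thm:main}(ii), with $D^{2,1}\cup D^{2,2}$ in place of $D^1$. Again the $q_T-\phi_{\mathbf{\mu},\Gamma}$ contributions are $O(\sqrt T)$, so the leading behaviour is controlled by $c:=\int_{D^{2,1}\cup D^{2,2}}\phi_{\mathbf{\mu},\Gamma}(\bx)\,d\bx$. To prove $c\ne 1/2$ I argue by contradiction: if $c=1/2$, then the same reasoning as in case (i) would force the leading $\sqrt T$-coefficient of $F_{0,T}-I_0$ to vanish, but by Proposition \ref{pro:future2} this coefficient is exactly $\sum_k m^k_5\ne 0$, a contradiction. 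Consequently $\partial C/\partial k|_{k=0}=-F_{0,T}(c+o(1))$ with $c\ne 1/2$, so the bracket in \eqref{eq:ATMskew1} tends to the nonzero constant $\tfrac12-c$, giving $ATMskew(T)\sim(\tfrac12-c)\sqrt{2\pi}\,T^{-1/2}$. The quasi-blow-up statement then follows by combining (i) and (ii) with the continuity of $\partial\sigma^{IV}/\partial k|_{k=0}$ in $\mathbf{s}_0$ from Proposition \ref{lem:lim}(v), taking $\Theta$ as stated. The main obstacle is the geometric bookkeeping in case (ii): the contradiction step requires matching, between Theorem \ref{thm:main}(ii) and Proposition \ref{pro:future2}, the two ``kink'' subregions generated by the coincidence $s_0^{r-1}=s_0^r$ — one must check that $D^{2,1}, D^{2,2}$ and the two integrals in $m^k_5$ arise from the \emph{same} linearised decomposition, so that ``leading Gaussian mass $=1/2$'' and ``leading drift $=0$'' are genuinely equivalent assertions about the same random variable.
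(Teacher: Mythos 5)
Your overall route is the paper's intended one: feed Theorem \ref{thm:main} into the skew representation \eqref{eq:ATMskew1}, using Lemma \ref{lemma:IV_asym} for $\sigma^{IV}\sqrt T=O(\sqrt T)$ and Proposition \ref{lem:lim}(v) for item (iii) of Definition \ref{defi:quasi_blow_up}. Case (i) is handled correctly: for $H^j\ge 1/2$ every correction in $q_T$ (and the leading terms of $\gamma^1$) carries a factor $\sqrt T$, and the Gaussian mass of $D^1$ in \eqref{eq:D1} is exactly $1/2$ because the threshold $\sum_k m^k_1=\sum_k\nu_k\mu_k$ is the mean of the Gaussian linear functional $\sum_k\nu_kx_k$; your Taylor expansion of the bracket in \eqref{eq:ATMskew1} then gives $ATMskew(T)=O(1)$.

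The key step of case (ii), however, is not proved. Your contradiction --- ``if $c:=\int_{D^{2,1}\cup D^{2,2}}\phi_{\mathbf{\mu},\Gamma}=1/2$, then the same reasoning as in case (i) would force the $\sqrt T$-coefficient of $F_{0,T}-I_0$ to vanish'' --- is a non sequitur. Case (i)'s reasoning runs from the structure of the region to the value of its mass, never conversely; indeed, in case (i) the mass equals $1/2$ \emph{whether or not} $\sum_k m^k_1=0$, precisely because a Gaussian linear functional has median equal to its mean. So ``$c=1/2$'' carries no information about the drift coefficient, and no contradiction with $\sum_k m^k_5\ne 0$ follows. What must actually be shown is a mean-versus-median statement: with $E_1=\{\sqrt{v^r_0(0)}\,x_r\le\sqrt{v^{r-1}_0(0)}\,x_{r-1}\}$, $E_2$ its complement, and $L_1,L_2$ the two linearised (rank-swapped) index functionals in \eqref{eq:D21}--\eqref{eq:D22}, the union $D^{2,1}\cup D^{2,2}$ is $\{Y>E[Y]\}$ for the piecewise-linear, max-type functional $Y=L_1 1_{E_1}+L_2 1_{E_2}$ whose mean is $\sum_k m^k_5$; the assertion $c\ne 1/2$ says exactly that the median of $Y$ differs from its mean, which is no longer automatic (and is false for, e.g., symmetric $Y$, which is why the hypothesis $\sum_k m^k_5\ne 0$ enters). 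Your argument gives no handle on this, so the $T^{-1/2}$ blow-up in (ii) --- and hence the quasi-blow-up conclusion --- rests on an unestablished step; the ``geometric bookkeeping'' you flag at the end is precisely the point that needs to be carried out (e.g.\ by verifying directly, as in the geometric Brownian motion example, that the asymmetric max-component of $Y$ forces $\mathbb{Q}(Y>E[Y])\ne 1/2$), not a side issue.
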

\begin{proof}
It suffices to prove $(i)$. When $V^j_0(t) = \sqrt{v^j_0(0)t}$, i.e.  the term $O(t^{1/2 + \zeta^j})$ does not exist in \eqref{eq:cond_V},  the quantity $\gamma^1(T)$ is of order $O(\sqrt{T})$. 
		 
When $V^j_0(t) = \sqrt{v^j_0(0)t} + O(t^{1/2 + \zeta^j})$ with $\zeta^j > 1/2$, we choose $\eta$ small enough such that the condition \eqref{eq:condition_eta} is satisfied and $\zeta^j-\eta > 1/2$, then $\gamma^1(T)= O(\sqrt{T})$.

The conclusion follows from \eqref{eq:ATMskew1} in both cases. 
\end{proof}
Finally, Corollary \ref{cor:<0.5} below
provides several conditions
which guarantee the blow up of the ATM skew when the smallest Hurst parameter $H^j$ is smaller than $1/2$. We emphasize that in this case, the ATM skew could blow up either at the rate $T^{H^j-1/2}$ or surprisingly at the rate $T^{-1/2}$ when two initial stock values coincide. 
\begin{corollary}\label{cor:<0.5}
Assume that at least one Hurst parameter is smaller than $1/2$ and all  the Hurst parameters are different. Let  $H^j < 1/2$ be the smallest among them. 
\begin{itemize}
\item[(i)] Case $s^1_0 > s^2_0 >...>s^n_0$. Assume further that $V^i_0(t) = \sqrt{v^i_0(0)t}$ or $V^i_0(t) = \sqrt{v^i_0(0)t} + O(t^{1/2 + \zeta^i})$ with $\zeta^i > H^i$ for $i=1,..,n$. If  $\sum_{k = 1}^{\overline{n}} m^k_1  = 0$ and $\sum_{k=1}^{\overline{n}}m_2^{kj} \ne 0$ then  
$$ \frac{\partial C}{\partial k} (T,F_{0,T},k=0) = - F_{0,T}\left( \frac{1}{2} + O(T^{H^j})\right) ,$$
and the ATM skew blows up at the rate $T^{H^j-1/2}$.  
\item[(ii)] Case $s^1_ 0 > ...>s^{r-1}_0  = s^r_0 > ... > s^n_0$ for some $r \in\{ 2,\ldots,n\}$: if $\sum_{k=1}^{\overline{n}} m^k_5 \ne 0$, then $\int_{D^{2,1}\cup D^{2,2}}  \phi_{\mathbf{\mu},\Gamma}(\bx) \ne \frac{1}{2}$ and the ATM skew blows up at the rate of $T^{-1/2}$.  
\end{itemize}
\end{corollary}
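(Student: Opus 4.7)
The plan is to combine the ATM skew formula \eqref{eq:ATMskew1} with Theorem \ref{thm:main} and isolate, in each case, the leading-order term that survives after the hypothesized cancellations. By Lemma \ref{lemma:IV_asym}, $\sigma^{IV}\sqrt{T}=O(\sqrt{T})$, so a Taylor expansion of $N(\cdot)$ around $0$ yields $N(-\tfrac{1}{2}\sigma^{IV}\sqrt{T})=\tfrac{1}{2}+O(\sqrt{T})$ and $e^{(\sigma^{IV})^2 T/8}=1+O(T)$. Inserting these into \eqref{eq:ATMskew1} gives
\begin{equation*}
ATMskew(T)=\frac{\sqrt{2\pi}}{\sqrt{T}}\,G(T)\,(1+O(T)),\qquad G(T):=\frac{1}{F_{0,T}}\left.\frac{\partial C}{\partial k}\right|_{k=0}+\tfrac{1}{2}+O(\sqrt{T}),
\end{equation*}
so the divergence rate of the ATM skew is dictated by the size of $G(T)$.

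For part (i), Theorem \ref{thm:main}(i) combined with the density expansion \eqref{eq:q} produces
\begin{equation*}
G(T)=\Bigl(\tfrac{1}{2}-\int_{D^1}\phi_{\mathbf{\mu},\Gamma}(\bx)\,d\bx\Bigr)+\sum_{\ell=1}^n T^{H^\ell}\!\int_{D^1}a^{(1),\ell}_T(\bx)\,d\bx+R(T)+O(\gamma^1(T))+O(\sqrt{T}),
\end{equation*}
where $R(T)$ collects the $T^{2H^\ell}$, $V^\ell_0(T)$, $V^\ell_0(T)T^{H^\ell}$, $T^{H^k+H^\ell}$ and $V^k_0(T)V^\ell_0(T)$ contributions in \eqref{eq:q}. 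Under the standing hypotheses (all $H^\ell<1/2$, distinct, $H^j$ the smallest), a direct comparison of exponents shows that every term in $R(T)$ is $o(T^{H^j})$. The argument underlying Corollary \ref{cor_05}(i) --- essentially a first-order reading of Proposition \ref{pro:future1} --- shows that the condition $\sum_{k=1}^{\overline{n}}m^k_1=0$ is equivalent to $\int_{D^1}\phi_{\mathbf{\mu},\Gamma}(\bx)d\bx=\tfrac{1}{2}$, so the parenthesis vanishes. The $T^{H^j}$-coefficient of $G(T)$ is the integral $\int_{D^1}a^{(1),j}_T(\bx)d\bx$, which, after integrating by parts in $x_j$ exactly as in the proof of Proposition \ref{pro:future1}, is a nonzero multiple of $\sum_{k=1}^{\overline{n}}m^{kj}_2$. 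The hypothesis $\sum m^{kj}_2\neq 0$ thus gives $G(T)\asymp T^{H^j}$, and dividing by $\sqrt{T}$ yields the claimed blow-up at rate $T^{H^j-1/2}$.

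For part (ii), Theorem \ref{thm:main}(ii) replaces $D^1$ by $D^{2,1}\cup D^{2,2}$, and the computation performed in the proof of Corollary \ref{cor_05}(ii) identifies $\int_{D^{2,1}\cup D^{2,2}}\phi_{\mathbf{\mu},\Gamma}(\bx)d\bx-\tfrac{1}{2}$ as a (nonzero) multiple of $\sum_{k=1}^{\overline{n}}m^k_5$. Thus $\sum m^k_5\neq 0$ gives $G(T)\to c_0\neq 0$; the error terms $O(T^{\zeta^{r-1}})$, $O(T^{\zeta^r})$ and $O(\gamma^2(T))$ are all $o(1)$, and in fact $o(\sqrt{T})$ under $\min\{\zeta^{r-1},\zeta^r\}>1/2$, so they are negligible against the leading constant. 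This produces the $T^{-1/2}$ blow-up.

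The delicate step is the bookkeeping in part (i): one must certify that $\int_{D^1}a^{(1),j}_T(\bx)d\bx$ reduces to a positive multiple of $\sum_{k=1}^{\overline{n}}m^{kj}_2$ via the same boundary-integration argument that underlies Proposition \ref{pro:future1}, and that this non-vanishing leading contribution is not cancelled by any term in the remainder $R(T)$. Distinctness of the $H^\ell$ is precisely what rules out such a cancellation, since terms of different asymptotic order $T^{H^\ell}$ cannot combine to annihilate the $T^{H^j}$ coefficient. Part (ii) is softer because the required non-degeneracy already manifests at the $O(1)$ level and no further asymptotic matching is needed.
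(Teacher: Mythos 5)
Your overall skeleton is exactly the intended one (the paper gives no separate proof: the corollary is meant to follow from \eqref{eq:ATMskew1}, Lemma \ref{lemma:IV_asym} and Theorem \ref{thm:main}, which is what you do), and your part (ii) and your bookkeeping of which terms of $q_T$ are $o(T^{H^j})$ are fine. But part (i) has a genuine gap in the step you yourself flag as delicate. First, the hypothesis $\sum_{k}m^{kj}_2\neq 0$ does not live where you put it: $m^{kj}_2$ comes from the expansion of the forward price $F_{0,T}$ in Proposition \ref{pro:future1}, i.e.\ it enters the skew through the strike-shift term $\gamma^1(T)$ of Theorem \ref{thm:main}(i), which is itself of exact order $T^{H^j}$. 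By discarding $O(\gamma^1(T))$ as an error while extracting the $T^{H^j}$-coefficient, you have dropped precisely the contribution the hypothesis is supposed to control, and you cannot exclude a cancellation at order $T^{H^j}$ between that contribution and the density-correction term; the lower bound needed for the claimed rate $T^{H^j-1/2}$ is therefore not established (a careful version must track the signed effect of shifting the integration boundary by $\gamma^1(T)$, as in the mean-value-theorem step of the proof of Theorem \ref{thm:main}). Second, the identification ``$\int_{D^1}a^{(1),j}_T(\bx)\,d\bx$ is a nonzero multiple of $\sum_k m^{kj}_2$'' is not correct: integrating the $x_j$-derivative $a^{(1),j}_T$ over the half-space $D^1$ gives, by the divergence theorem, a boundary integral of $E[M^{(1),j}_T\,|\,\bM^{(0)}_T=\bx]\,\phi_{\mathbf{\mu},\Gamma}(\bx)$ over the hyperplane $\{\sum_k\nu_kx_k=\mathfrak{m}_1\}$, whereas $\sum_k m^{kj}_2=-\sum_k\int_{\mathbb{R}^n}\nu_kx_k\,a^{(1),j}_T(\bx)\,d\bx=\nu_j\,E\bigl[M^{(1),j}_T\bigr]$ is a whole-space quantity; neither is in general a scalar multiple of the other, so the non-vanishing of one does not follow from the non-vanishing of the other.

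Two smaller remarks. The asserted equivalence ``$\sum_k m^k_1=0\iff\int_{D^1}\phi_{\mathbf{\mu},\Gamma}=\tfrac12$'' is not an equivalence: since $\mathfrak{m}_1=\sum_k\nu_k\mu_k$ is the $\phi_{\mathbf{\mu},\Gamma}$-mean of the Gaussian linear functional $\sum_k\nu_kx_k$, one has $\int_{D^1}\phi_{\mathbf{\mu},\Gamma}=\tfrac12$ unconditionally; this is harmless for your argument but should not be stated as the content of the hypothesis. In part (ii), your appeal to ``the computation in the proof of Corollary \ref{cor_05}(ii)'' for the implication $\sum_k m^k_5\neq0\Rightarrow\int_{D^{2,1}\cup D^{2,2}}\phi_{\mathbf{\mu},\Gamma}\neq\tfrac12$ is circular in the sense that no such proof is given in the paper; since this Gaussian fact is exactly the nontrivial claim restated in the corollary, a complete argument should verify it (it is independent of the Hurst parameters), after which your observation that the $O(T^{\zeta^{r-1}})+O(T^{\zeta^{r}})+O(\gamma^2(T))$ errors are $o(1)$ indeed yields the $T^{-1/2}$ rate.
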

\begin{proof}
It suffices to prove $(i)$. When $V^j_0(t) = \sqrt{v^j_0(0)t}$, i.e.  the term $O(t^{1/2 + \zeta^j})$ does not exist in \eqref{eq:cond_V},  the quantity $\gamma^1(T)$ is of order $O(T^{H^j})$. 
	
When $V^j_0(t) = \sqrt{v^j_0(0)t} + O(t^{1/2 + \zeta^j})$ with $\zeta^j > H^j$, we choose $\eta$ small enough such that the condition \eqref{eq:condition_eta} is satisfied and $\zeta^j-\eta > H^j$, then $\gamma^1(T)= O(T^{H^j})$.

The conclusion follows from \eqref{eq:ATMskew1}.  
\end{proof}
\begin{remark}
Here, we only present the quasi-blow-up phenomena for a particular day and different configurations for initial stock values. As time varies, we observe the blow up phenomena when the rankings are updated. When stock values change gradually, the quasi-blow-up phenomena are observed, and are expected to persist until stock values are far enough from each others. 
\end{remark}	
\section{Examples and numerical results}\label{sec:ex_num}
In this section, we provide two examples and numerical results to illustrate the practical implications of our models.  
The implementation code is available at \url{https://github.com/nducduy/quasi-blow-up}.
\subsection{Models with  Geometric Brownian motions}\label{ex:2GBMs}
Consider the following model with two geometric Brownian motions
\begin{eqnarray*}\label{eq:S_GBM}
dS^1_t &=& S^1_t \sigma^1dB^1_t, \ S^1_0 = s^1_0,\\
dS^2_t &=& S^2_t\sigma^2dB^2_t, \ S^2_0 = s^2_0,
\end{eqnarray*}
where $B^1_t$ and $B^2_t$ are independent and $\sigma^{1} <  \sigma^{2}$ are positive constants. In this example, 
it can be seen that
\begin{equation*}
v^j_0(t)  = (\sigma^{j})^2, \qquad V^j_0(t) = \sigma^j\sqrt{t}, \qquad M^j_t = \sigma^{j} B^j_t, \qquad \left\langle M^j\right\rangle _t = (\sigma^{j})^2 t, \ j=1,2.
\end{equation*}
Assumptions \ref{assum:main} \ref{assumption:Gaussian_bound}, \ref{assum:C_der_k} are satisfied with 
$$ M^{j,(0)}_t  =  \frac{B^j_t}{\sqrt{t}}, \qquad M^{j,(1)}_t = M^{j,(2)}_t  =  M^{j,(3)}_t = 0,$$
and 
$	a^{(k),j}_t(\bx) = b^j_t(\bx) =  c^j_t(\bx) = d^{(1),jk}_t(\bx) = e^{(1),jk}_t(\bx) = 0, \ 1\le j,k \le 2.$
By Theorem \ref{thm:density_expansion}, the density of $(X^1_t, X^2_t)$ is approximated by
\begin{eqnarray}
q_t(\bx) &=& \phi_{\mathbf{0},\mathbf{I}_{2}}(\bx) - \sum_{j=1}^2\frac{ \sigma^j\sqrt{t}}{2}    \frac{\partial}{\partial x_j}  \phi_{\mathbf{0},\mathbf{I}_2}(\bx)  \nonumber \\
&+&  \sum_{j=1}^2\frac{t(\sigma^j)^2}{8}(x^2_j-1)\phi_{\mathbf{0},\mathbf{I}_2}(\bx) + \sum_{j,k=1}^2 \frac{t\sigma^j\sigma^k}{4} x_kx_j \phi_{\mathbf{0},\mathbf{I}_2}(\bx), \label{eq:app_q} 
\end{eqnarray}
where $ \mathbf{I}_2$ is the $2 \times 2$ identity matrix.  
\begin{remark}
In this simple example, it's easy to compute expansion \eqref{eq:app_q}. Recall that 
\begin{eqnarray*}
X^j_t &=& -\frac{1}{2} \sigma^j\sqrt{t} + \frac{1}{\sqrt{t}}  B^j_t \sim N\left( -\frac{1}{2}\sigma^j\sqrt{t}, 1 \right) .
\end{eqnarray*}
The probability densities of $X^j_t, j = 1,2$ are \begin{eqnarray*}
f^j_t(x_j) &=& \frac{1}{\sqrt{2\pi}} e^{-\frac{1}{2}\left(x_j + \frac{1}{2} \sigma^j \sqrt{t} \right)^2}. 
\end{eqnarray*}
The following expansions hold when $t$ is small 
\begin{eqnarray}
f^j_t(x_j) &=& \frac{1}{\sqrt{2\pi}} e^{-\frac{x_j^2}{2}} \left( 1 - \sigma^j\sqrt{t}  \frac{x_j}{2}  - \frac{1}{8}t (\sigma^j)^2 + \frac{1}{8}x^2_j t(\sigma^j)^2  + O(t^{3/2})  \right). 
\end{eqnarray}
By the independence of $B^1, B^2$, the product $f^1_t(x_1)f^2_t(x_2)$ gives the approximation in \eqref{eq:app_q}.  
\end{remark}
\begin{figure}[h!tbp]
\centering
\begin{subfigure}{0.45\textwidth}
\includegraphics[width=\textwidth]{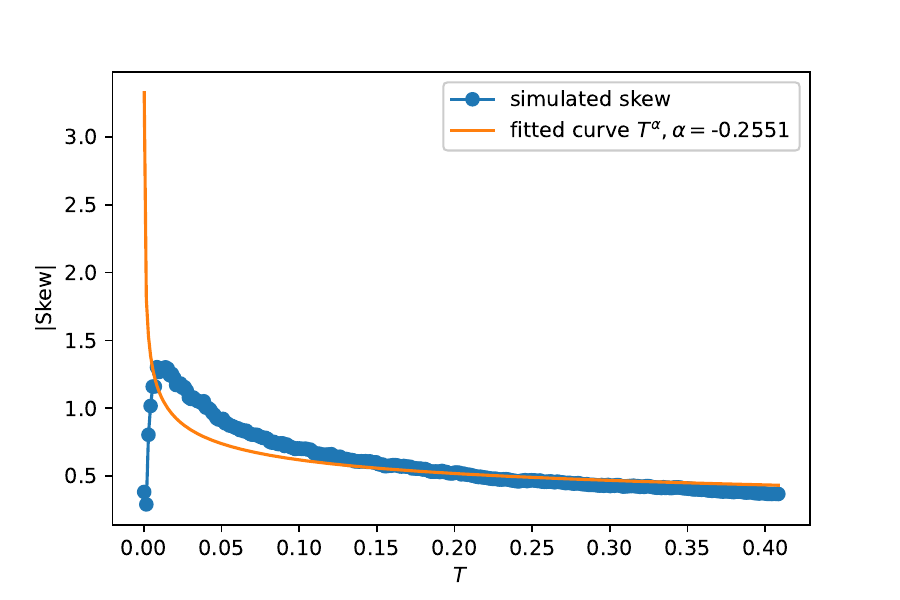}
\caption{$s^1_0 = 100,  s^2_0 = 96$.}
\label{fig:GBM_quasi}
\end{subfigure}
\qquad
\begin{subfigure}{0.45\textwidth}
\includegraphics[width=\textwidth]{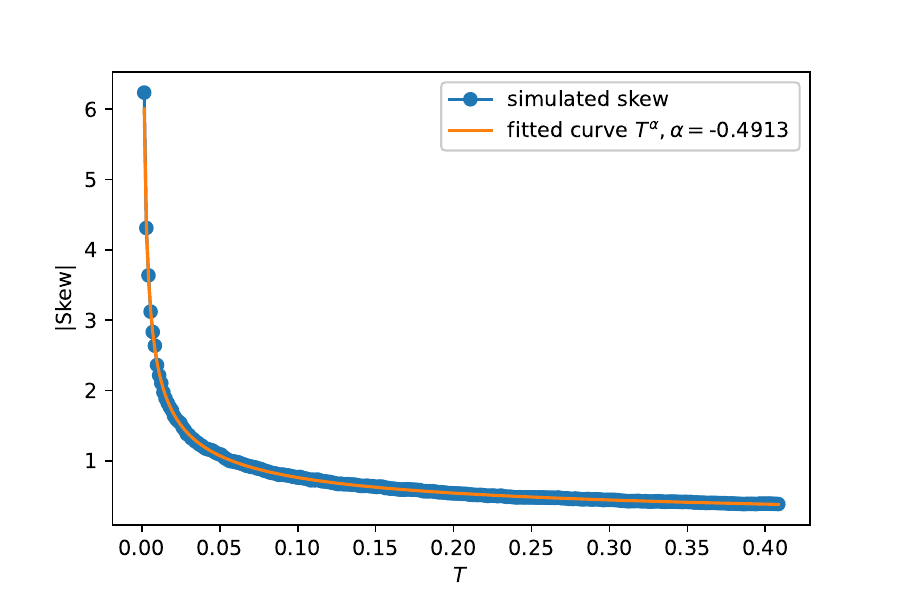} 
\caption{$s^1_0 = s^2_0 = 100$.}
\label{fig:GBM_blowup}
\end{subfigure}
\begin{subfigure}{0.45\textwidth}
\includegraphics[width=\textwidth]{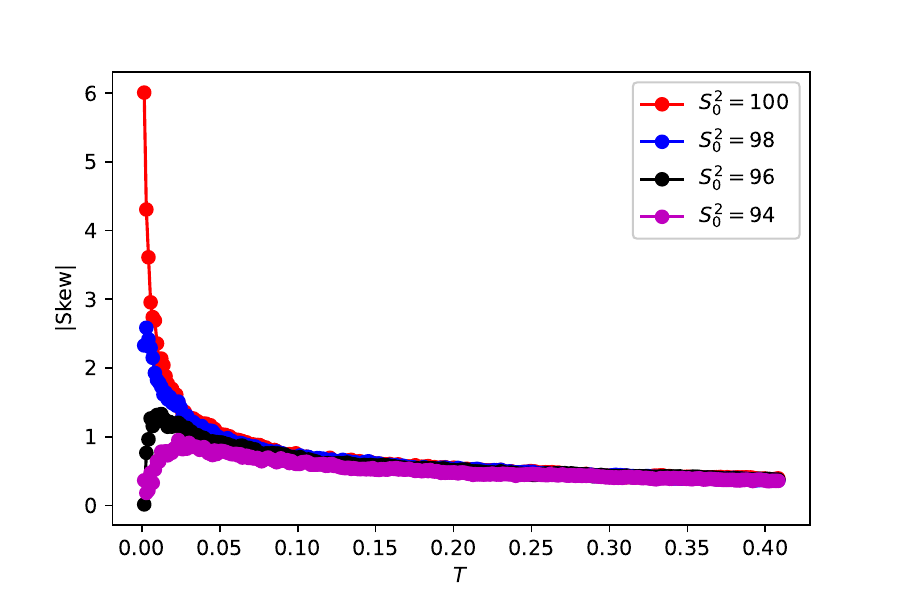}
\caption{$s^1_0 = 100, s^2_0 \in \{100,98,96,94\}$.}
\label{fig:GBM_quasi_many}
\end{subfigure}
\caption{The ATM skews at different maturities for the case with two geometric Brownian motions with parameters $dt = 0.05 \times 1/365, \sigma^1 = 0.2, \sigma^2 = 0.6, \overline{n} = 1, w_1 = 1, w_2=0$ and $50000$ Monte Carlo simulations. The Euler scheme is used.}
\label{fig:GBM}
\end{figure}

Let $\overline{n} \in \{1,2\}$ and $\omega_1, \omega_2$ be positive constants. In this example, the market index is given by $I_t = \omega_1S^{(1)}_t +  \omega_2S^{(2)}_t$. We distinguish two following cases. 
\begin{itemize}
\item[(i)] The case  $s^1_ 0 > s^2_0$. It can be seen that $m^k_1 = m^{k,j}_2= m^{kj}_3 = m^{kj\ell}_4 =  0, j =1, 2$ and then by Proposition \ref{pro:future1},
$$F_{0,T} = I_0 +  O\left( T \right) .$$
By Corollary \ref{cor_05}, the ATM skew does not blow up in this case. 
\item[(ii)] The case $s^1_0 = s^2_0$.  It can be checked that 
\begin{eqnarray*}
m^k_5 &=&  v^k_0(0) \left(  \int_{-\infty}^{\infty}  \int_{-\infty}^{\frac{\sigma^1}{\sigma^2}x_{1}} { w_ks^k_0x_k  \phi_{\mathbf{0},\mathbf I_2}\left( x\right) dx_2 dx_1}  + \int_{-\infty}^{\infty}  \int_{-\infty}^{\frac{\sigma^2}{\sigma^1}x_{2}} { w_ks^k_0x_k  \phi_{\mathbf{0},\mathbf I_2}\left( x\right) dx_1 dx_2}\right) ,\\
m^{kj}_6 &=& m^{kj}_7 = m^{kj\ell}_8 = 0,  k,j, \ell = 1,2. 
\end{eqnarray*} 
Proposition \ref{pro:future2} yields
$$F_{0,T} = I_0 + \sum_{j = 1}^n m^j_5 \sqrt{T}+  O\left( T\right). $$
By Corollary \ref{cor_05}, when $\sum_{k=1}^{\overline{n}}m^k_5 \ne 0$, the ATM skew blows up at the rate $T^{-1/2}$ in this case. 
\end{itemize}
Finally, from Lemma \ref{lem:lim} we have for any fixed $T >0$, 
\begin{equation*}
\lim_{s^2_0 \to s^1_0} \left. \frac{\partial \sigma^{IV}}{\partial k}(s^1_0,s^2_0)\right|_{k=0} = \left. \frac{\partial \sigma^{IV}}{\partial k}(s^1_0,s^1_0)\right|_{k=0}.
\end{equation*}
Therefore, the model exhibits the quasi-blow-up phenomena. 

We illustrate the theoretical findings
from the case (i) and case (ii) above
in Figure \ref{fig:GBM_quasi}- Figure \ref{fig:GBM_quasi_many}.
Concretely, in these figures, we choose $dt = 0.05 \times 1/365, \sigma^1 = 0.2, \sigma^2 = 0.6, \overline{n}=1, w_1 = 1, w_2=0$ 
and use Euler scheme to obtain the ATM implied volatility skew
by averaging  $50000$ Monte Carlo simulations. We then plot, in Figure \ref{fig:GBM_quasi}- Figure \ref{fig:GBM_blowup}, the absolute of the ATM implied volatility skew (colored dots) as the function of maturity $T$
and the corresponding fitted curves. 
Figure \ref{fig:GBM_quasi} illustrates the case (i): $s^1_ 0 > s^2_0$ discussed above.
We choose the initial prices to be $s^1_0 = 100,  s^2_0 = 96$.
It is clear that the ATM skew does not blow up as the blue star dots curve down toward the origin as the maturity diminishes.
On the other hand, Figure \ref{fig:GBM_blowup} clearly shows the
blow up of the skew at the rate $T^{-1/2}$ as the two stocks start at the same value $s_0^1=s_0^2=100$.
This confirms the theoretical finding in Corollary \ref{cor_05}. 
Finally, in Figure \ref{fig:GBM_quasi_many}, we plot the ATM skews and the corresponding fitted curves
from different starting pairs of initial stock values $(s^1_0,s^2_0)\in\{(100,94),(100,96),(100,98),(100,100)\}$.
It is evident to see that the skew exhibits the quasi-blow-up phenomenon as $s^2_0$ converges to $s_0^1$.


\subsection{Modified fractional Stein–Stein models}
We consider the following model
\begin{eqnarray*}
dS^j_t &=& S^j_t  \sigma^j_t(\rho^jdB^j_t + \sqrt{1-(\rho^j)^2}dW^j_t),\\
\sigma^j_t &=& \frac{\sigma^j_0}{c^j(t)} \left( c^j_0 + B^{H^j}_te^{-(B^{H^j}_t)^2/2} \right) ,\\
B^{H^j}_t &=& \frac{1}{\Gamma(H + 1/2)} \int_0^t(t-s)^{H^j-1/2}dB^j_s,
\end{eqnarray*}
where $W^j, B^j$ are independent Brownian motions, $\rho^1, \rho^2 \in [-1,1]$ for $j= 1,2.$ The constants $c^j_0, j = 1,2$ are big enough so that $\sigma^j_t$ are away from zero uniformly and
\begin{equation}\label{eq:c_t}
(c^j(t))^2 = (c^j_0)^2 + E[ (B^{H^j}_t)^2e^{-(B^{H^j}_t)^2}] = (c^j_0)^2 + \frac{1}{\sqrt{2}} \frac{t^{2H^j}}{(1/2+t^{2H^j})^{3/2}}.
\end{equation}
When $t$ is small enough, the processes $\sigma^j_t, j = 1,2$ behave as a fractional Brownian motion and  the model acts as the fractional Stein-Stein ones, see  \citet{abi2022characteristic},
\citet{gulisashvili2019extreme}, \citet{forde2017asymptotics}. Straightforward computations yield
\begin{eqnarray*}
v^j_0(t) &=&  \frac{(\sigma^j_0)^2}{(c^j(t))^2}\left( (c^j_0)^2 + E[ (B^{H^j}_t)^2e^{-(B^{H^j}_t)^2}]\right)  =(\sigma^j_0)^2,\\
V^j_0(t) &=& \sigma^j_0 \sqrt{t},\\
X^j_t &=& -\frac{1}{2 V^j_0(t)} \int_0^t (\sigma^j_u)^2du + \frac{1}{V^j_0(t)} \int_0^t \sigma^j_u (\rho^jdB^j_u + \sqrt{1-(\rho^1)^2}dW^j_u).
\end{eqnarray*}
\begin{remark}
Here, we normalize the volatility processes $\sigma^j_t$ by $c^j(t)$ so that $V^j_0(t)$ is of order $\sqrt{t}$  for simpler computations later on. Without such normalization and assume $\sigma^j_0 = 1$, i.e., 
$$\sigma^j_t = \left( c^j_0 + B^{H^j}_te^{-(B^{H^j}_t)^2/2} \right)$$
we obtain from \eqref{eq:c_t} that 
\begin{eqnarray*}
v^j_0(t) &=& (c^j_0)^2 + E[ (B^{H^j}_t)^2e^{-(B^{H^j}_t)^2}] = (c^j_0)^2 + \frac{1}{\sqrt{2}} \frac{t^{2H^j}}{(1/2+t^{2H^j})^{3/2}}
\end{eqnarray*}
and $V^j_0(t) = c^j_0\sqrt{t} + O(T^{1/2 + 2H^j})$. In this case, $\zeta^j = 2H^j$.
\end{remark} We follow \citet{euch2019short} to find the density expansion for $X^j_t, j = 1,2$. Fix $\theta > 0$ and define
\begin{eqnarray*}
\tau_j(s) &=& \frac{1}{V^j_0(\theta)^2} \int_0^s v^j_0(t)dt = \frac{s}{\theta}, \ s \le \theta,\\
\widehat{W}^j_t &=& \frac{1}{V^j_0(\theta)} \int_0^{\tau_j^{-1}(t)}\sqrt{v^j_0(s)} dW^j_s,\\
\widehat{B}^j_t &=& \frac{1}{V^j_0(\theta)} \int_0^{\tau_j^{-1}(t)}\sqrt{v^j_0(s)} dB^j_s.
\end{eqnarray*}
Then $\widehat{W}^j, \widehat{B}^j, j = 1,2$ are independent Brownian motions and note that $\tau_j^{-1}(t) = \theta t$. For any square integrable function $f$, we have
\begin{eqnarray}
\int_0^a f(s)dW^j_s &=& V^j_0(\theta) \int_0^{\tau_j(a)} \frac{f(\tau_j^{-1}(t))}{\sqrt{v_0(\tau_j^{-1}(t))}} d\widehat{W}^j_t, \label{eq:timechange1} \\ 
\int_0^a f(s)dB^j_s &=& V^j_0(\theta) \int_0^{\tau_j(a)} \frac{f(\tau_j^{-1}(t))}{\sqrt{v_0(\tau_j^{-1}(t))}} d\widehat{B}^j_t, \label{eq:timechange3}\\
\int_0^a f^2(s)ds &=& (V^j_0(\theta))^2\int_0^{\tau_j(a)} \frac{f^2(\tau_j^{-1}(t))}{v_0(\tau_j^{-1}(t))} dt. \label{eq:timechange2}
\end{eqnarray}
\begin{figure}[h!tbp]
\centering	
\begin{subfigure}{0.45\textwidth}
\includegraphics[width=\textwidth]{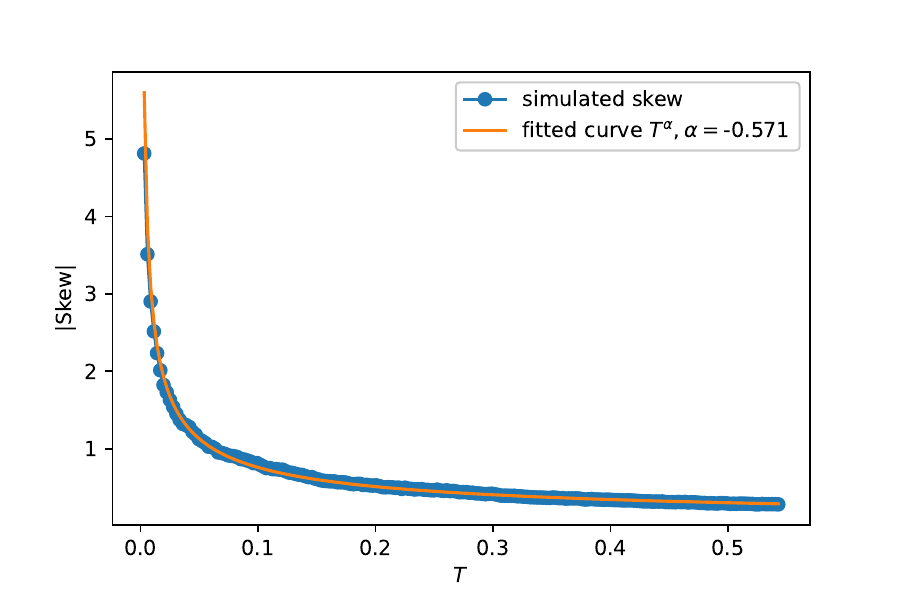}
\caption{$H^1 = 0.6, H^2 = 0.7,   s^1_0 =100, s^2_0 = 100$.}
\label{fig:fSS_blowup1_new7}
\end{subfigure} \qquad 
\begin{subfigure}{0.45\textwidth}	\includegraphics[width=\textwidth]{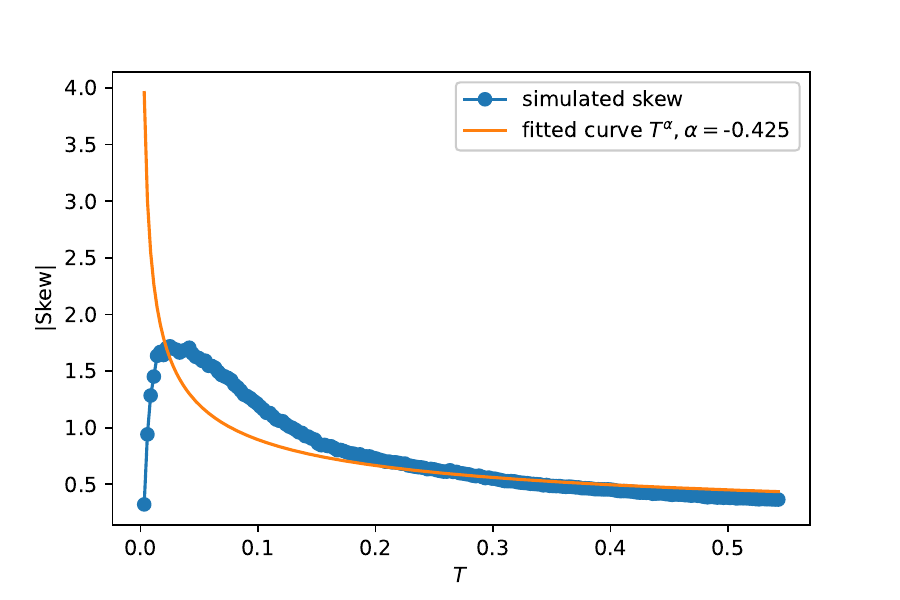}
\caption{$H^1 = 0.6, H^2 = 0.7,  s^1_0 =100, s^2_0 = 97$.}
\label{fig:fSS_blowup1_new8}
\end{subfigure}

\begin{subfigure}{0.45\textwidth}
\includegraphics[width=\textwidth]{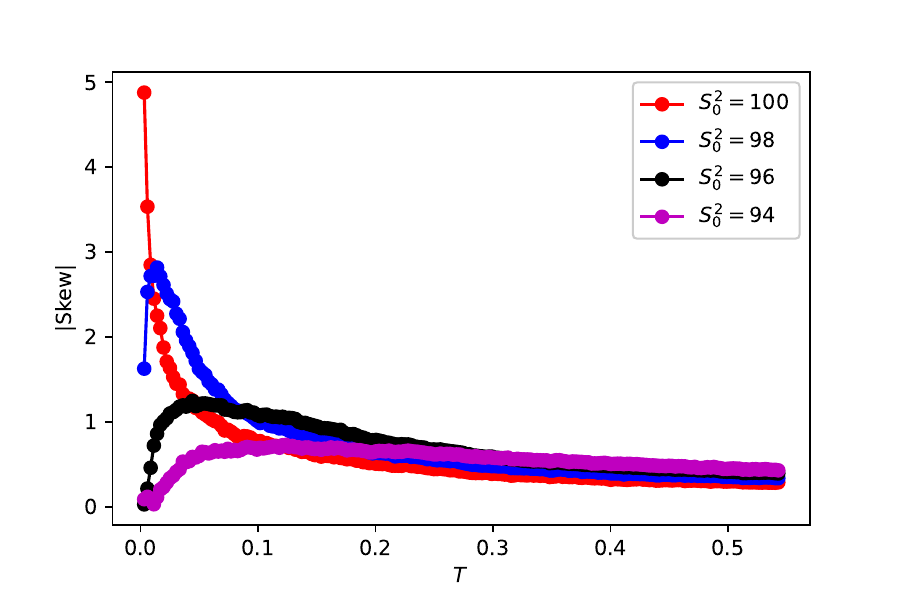}
\caption{$H^1 = 0.6, H^2 = 0.7$.}
\label{fig:fSS_blowup_many2}
\end{subfigure}
\caption{The ATM skews at different maturities for the case with two modified fractional Stein-Stein stocks with parameters $dt = 0.1 \times 1/365, \sigma^1 = 0.2, \sigma^2 = 0.6, \rho^1 = -0.5, \rho^2 = -0.5,  w_1 = 1, w_2 = 0$. In Figures \ref{fig:fSS_blowup1_new7}, \ref{fig:fSS_blowup1_new8}, we use $30000$ Monte Carlo simulations and in Figure \ref{fig:fSS_blowup_many2} we use $15000$ Monte Carlo simulations.}
\label{fig:fss1}
\end{figure}
From \eqref{eq:c_t}, we denote $$h^j_{\theta}(t) = \frac{1}{c^j(\tau^{-1}(t))} = \frac{1}{\sqrt{(c^j_0)^2 + \frac{1}{\sqrt{2}} \frac{(\theta t)^{2H^j}}{(1/2+(\theta t)^{2H^j})^{3/2}}}}.$$
\begin{lemma}\label{lem:fSS}
Assumption \ref{assum:main} is satisfied with
\begin{eqnarray}
M^{(0),j}_{\theta} &=&   \rho^j\widehat{B}^j_1 + \sqrt{1-(\rho^j)^2}\widehat{W}^j_1, \nonumber \\
M^{(1),j}_{\theta} &=& \int_0^1  h^j_{\theta}(t)F^{j,t}_t    (\rho^jd\widehat{B}^j_t + \sqrt{1-(\rho^j)^2}d\widehat{W}^j_t ), \nonumber \\
M^{(2),j}_{\theta} &=&  \int_0^1 \left( \frac{h^j_{\theta}(t)c^j_0-1}{\theta^{2H^j}}  \right) (\rho^jd\widehat{B}^j_t + \sqrt{1-(\rho^j)^2}d\widehat{W}^j_t ), \nonumber  \\ 
M^{(3),j}_{\theta} &=& \frac{2}{c^j_0} \int_0^1 F^{j,t}_t dt . \label{fSS_M}
\end{eqnarray}
\end{lemma}
\begin{proof}
It could be checked that the condition \eqref{eq:vola} is satisfied. From \eqref{eq:timechange1}, \eqref{eq:timechange3} we get that
\begin{eqnarray*}
\frac{M^j_{\theta}}{V^j_0(\theta)} &=&  \int_0^1  \frac{c^j_0 + \theta^{H^j}F^{i,t}_te^{-\theta^{2H^j}(F^{i,t}_t)^2/2}}{c^j(\tau^{-1}_j(t))} (\rho^jd\widehat{B}^j_t + \sqrt{1-(\rho^j)^2}d\widehat{W}^j_t )
\end{eqnarray*}
where 
\begin{eqnarray*}
F^{j,t}_u:=\frac{1}{\Gamma(H^j+1/2)} \frac{V^j_0(\theta)}{\theta^{H^j}}  \int_0^{u} \frac{\left( \tau_j^{-1}(t) - \tau_j^{-1}(s) \right)^{H^j-1/2}}{\sqrt{v^j_0(\tau_j^{-1}(s))}} d\widehat{B}^j_s, \  u \in [0,t].
\end{eqnarray*}
The function $f(x) = xe^{-x^2}/2$ is in the Schwartz space, and Taylor's theorem implies
\begin{equation}\label{eq:expansionxe}
xe^{-x^2/2} =  x + \frac{f^{(3)}(\xi)}{6} x^3,
\end{equation}
for some $\xi$ is between $0$ and $x$. We write
\begin{eqnarray*}
\frac{M^j_{\theta}}{V^j_0(\theta)} &=&  \int_0^1 h^j_{\theta}(t) \left(c^j_0 +  \theta^{H^j}F^{j,t}_t + \frac{f^{(3)}(\xi_t)}{6}\theta^{3H^j}(F^{j,t}_t)^3 \right)  (\rho^jd\widehat{B}^j_t + \sqrt{1-(\rho^j)^2}d\widehat{W}^j_t )
\end{eqnarray*}
for some $\xi_t$ is between $0$ and $\theta^{H^j}F^{j,t}_t $. Define $M^{(0),j}, M^{(1),j}, M^{(2),j}$ as in \eqref{fSS_M}, and note that $M^{(0),j}$ is a Gaussian random variable and $h^j_{\theta}(t)c^j_0 - 1 = O(\theta^{2H^j})$. We need to prove that
\begin{equation}\label{eq:2}
\left\| \int_0^1 h^j_{\theta}(t) \frac{f^{(3)}(\xi_t)}{6}\theta^{3H^j}(F^{j,t}_t)^3   (\rho^jd\widehat{B}^j_t + \sqrt{1-(\rho^j)^2}d\widehat{W}^j_t ) \right\|_2 = O(\theta^{3H^{j}}) 
\end{equation}
and then the condition \eqref{assum:M012} holds. Indeed, Since $B^{H^j}_t \sim t^{H^j} N$ where $N \sim N(0,1)$, we have $E[|B^{H^j}_t|^{p}] = t^{pH^j}E[|N|^{p}]$ for any $p>1$ and thus 
\begin{equation}\label{eq:moment}
E\left[|F^{j,t}_t|^{p}\right] = E[|\theta^{-H^j}B^{H^j}_{\tau_j^{-1}(t)}|^{p}] =  E[|N|^{p}] \frac{|\tau_j^{-1}(t)|^{pH^j}}{\theta^{pH^j}} = (p-1)!! t^{pH^j}	
\end{equation}
because $\tau^{-1}(t) \le \tau^{-1}(1) = \theta$. We estimate by the Burkholder-Davis-Gundy  inequality \begin{eqnarray*}E\left[  \left( \int_0^1 h^j_{\theta}(t) \frac{f^{(3)}(\xi_t)}{6}(F^{j,t}_t)^3   (\rho^jd\widehat{B}^j_t + \sqrt{1-(\rho^j)^2}d\widehat{W}^j_t )\right)^2\right]   &\le& C  \int_0^1E[(F^{j,t}_t)^6] dt \\
&\le& C5!!,
\end{eqnarray*}
for some constant $C>0$, noting that $h^j_{\theta}(t)$ and $f^{(3)}(x)$ are bounded from above. Hence, \eqref{eq:2} holds. Next, using \eqref{eq:timechange2}, we compute  
\begin{eqnarray*}
\frac{\left\langle M^j_{\theta}\right\rangle }{(V^j_0(\theta))^2} &=&  \int_0^1  \frac{(c^j_0 + \theta^{H^j}F^{j,t}_te^{-\theta^{2H^j}(F^{j,t}_t)^2/2})^2 }{(c^j(\tau_j^{-1}(t)))^2} dt \\
&=& 1  + \int_0^1\left( (h^j_{\theta}(t)c^j_0)^2 -1 \right)  dt  + \int_0^1  (h^j_{\theta}(t))^22c^j_0  \theta^{H^j} F^{j,t}_te^{-\theta^{2H^j}(F^{j,t}_t)^2/2} dt \\
&+& \int_0^1 (h^j_{\theta}(t))^2 \theta^{2H^j}(F^{j,t}_t)^2e^{-\theta^{2H^j}(F^{j,t}_t)^2} dt.
\end{eqnarray*}
Using the expansion \eqref{eq:expansionxe} again, we write
\begin{eqnarray*}
\int_0^1  (h^j_{\theta}(t))^22c^j_0  \theta^{H^j} F^{j,t}_te^{-\theta^{2H^j}(F^{j,t}_t)^2/2} dt &=& \int_0^1  (h^j_{\theta}(t))^22c^j_0  \theta^{H^j} F^{j,t}_tdt + \int_0^1  (h^j_{\theta}(t))^22c^j_0  \frac{f^{(3)}(\xi_t)}{6}\theta^{3H^j}(F^{j,t}_t)^3dt \\
&=& \int_0^1  \frac{2}{c^j_0}  \theta^{H^j} F^{j,t}_tdt  +  \int_0^1  \left( (h^j_{\theta}(t))^22c^j_0 -  \frac{2}{c^j_0}  \right)   \theta^{H^j} F^{j,t}_tdt \\
&+& \int_0^1  (h^j_{\theta}(t))^22c^j_0  \frac{f^{(3)}(\xi_t)}{6}\theta^{3H^j}(F^{j,t}_t)^3dt.  
\end{eqnarray*}
From these expressions, we get $M^{(3),j}$ and \eqref{assum:M3} holds, noting that 
$$(h^j_{\theta}(t)c^j_0)^2 -1 = O(\theta^{2H^j}), \qquad (h^j_{\theta}(t))^22c^j_0 - \frac{2}{c^j_0} = O(\theta^{2H^j}).$$
The condition \eqref{assum:moment} is satisfied from \eqref{eq:moment}. The computations of $a^{(k),j}_{\theta}(\bx)$, $ b^j_{\theta}(\bx)$, $c^j_{\theta}(\bx)$, $d^{(1),j,k}_{\theta}(\bx)$, $e^{(1),j,k(\bx)}_{\theta}$ are almost similar to the computations given in Lemmas 5.2, 5.3, 5.4, 5.5 of \citet{euch2019short} and hence omitted. Finally, Assumption \ref{assumption:Gaussian_bound} is satisfied because $\sigma^j_t, t = 1,2$ are bounded from above. 
\end{proof}
We illustrate the theoretical findings
above in Figures \ref{fig:fss1}, \ref{fig:fss2}.
We use {the Cholesky method to simulate fractional Brownian motions and the Euler scheme for stock prices}. We then plot the absolute of the ATM implied volatility skew (colored dots) as the function of maturity $T$
and the corresponding fitted curves. In Figure \ref{fig:fSS_blowup1_new7}, all the two Hurst parameters are larger than $1/2$. The ATM skew blows up at the rate $T^{-1/2}$ when the initial stock values are the same. And if the initial stock values are close, the ATM skew exhibits quasi-blow-up, see Figure \ref{fig:fSS_blowup1_new8}. These results are  explained by Corollary \ref{cor_05}.   On the hand, in Figure \ref{fig:fSS_blowup_many2}, we plot the ATM skews from different starting pairs of initial stock values $(s^1_0,s^2_0)\in\{(100,94),(100,96),(100,98),(100,100)\}$.
Again, the  ATM skew exhibits the quasi-blow-up phenomenon as $s^2_0$ converges to $s_0^1$. 

Next, we assume that at least one Hurst parameter is smaller than $1/2$. As we see in Figures \ref{fig:fSS_blowup_same1}, \ref{fig:fSS_blowup1_new4},  when the two stocks have the same initial value, the ATM skew blows up at the rate $T^{-1/2}$. This is consistent with Corollary \ref{cor:<0.5}. When the initial values are different, see Figure \ref{fig:fSS_blowup_diff2}, the ATM skew blows up at the rate $T^{-0.35}$ approximately. This is because $w_1 = 1, w_2 = 0$, and for small $T$, the index is well approximated by the first stock with $H^1 = 0.2$. The blow up rate in this case is similar to that of the one dimensional fractional Stein-Stein model, i.e., $T^{H^1-1/2}$. In Figure \ref{fig:fSS_blowup_diff3}, the two weight parameters are non-zero, the ATM skew blows up at the rate $T^{-0.319} \sim T^{H^2 - 1/2}$, as predicted by Corollary \ref{cor:<0.5}. 
\begin{figure}[h!tbp]
\centering
\begin{subfigure}{0.45\textwidth}
\includegraphics[width=\textwidth]{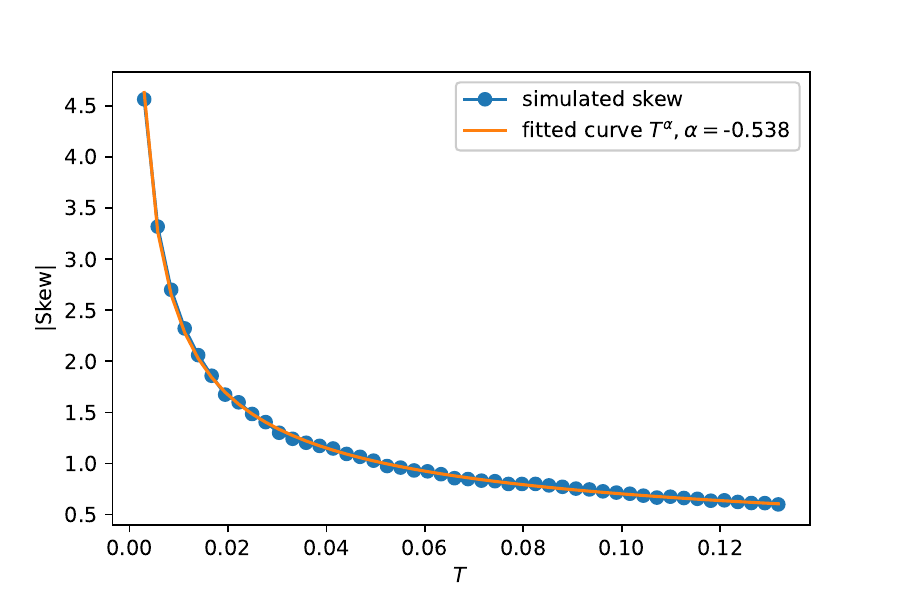}
\caption{$H^1 = 0.3, H^2 = 0.4,   s^1_0 =100, s^2_0 = 100, w_1 = 1, w_2 = 0$.}
\label{fig:fSS_blowup_same1}
\end{subfigure} \qquad
\begin{subfigure}{0.45\textwidth}
\includegraphics[width=\textwidth]{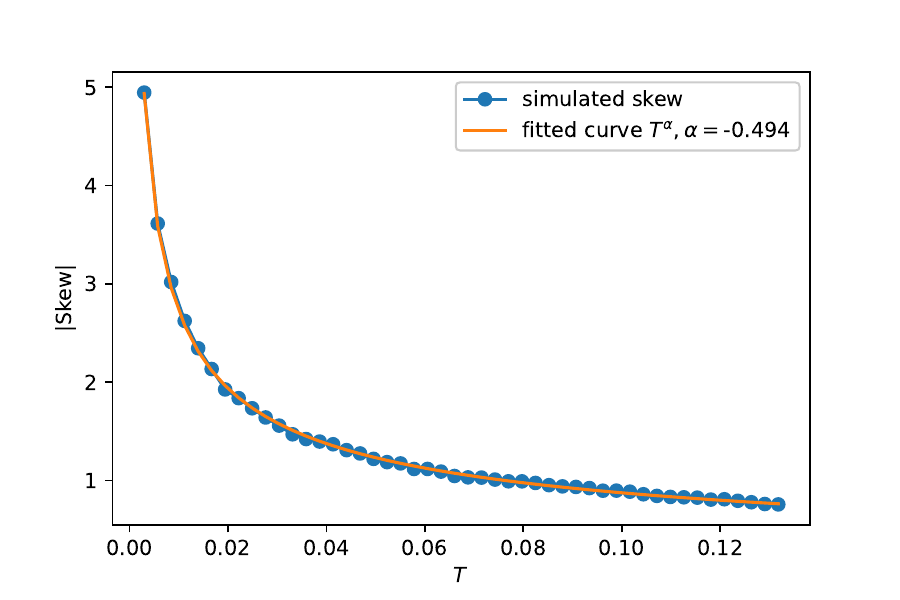}
\caption{$H^1 = 0.2, H^2 = 0.7,  s^1_0 =100, s^2_0 = 100, w_1 = 1, w_2 = 0$.}
\label{fig:fSS_blowup1_new4}
\end{subfigure}
\begin{subfigure}{0.45\textwidth}
\includegraphics[width=\textwidth]{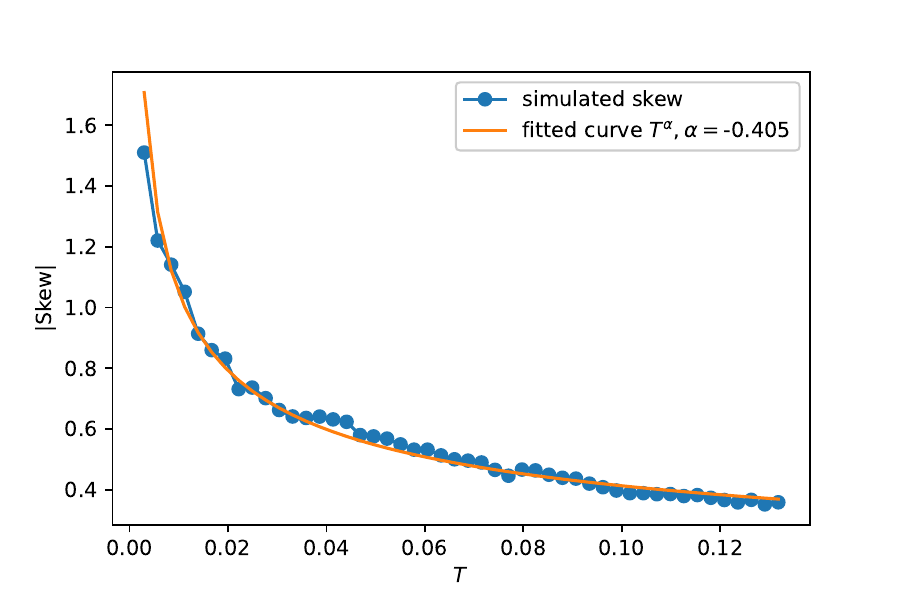}
\caption{$H^1 = 0.2, H^2 = 0.3,  s^1_0 =100, s^2_0 = 90, w_1 = 1, w_2 = 0$.}
\label{fig:fSS_blowup_diff2}
\end{subfigure}\qquad
\begin{subfigure}{0.45\textwidth}
\includegraphics[width=\textwidth]{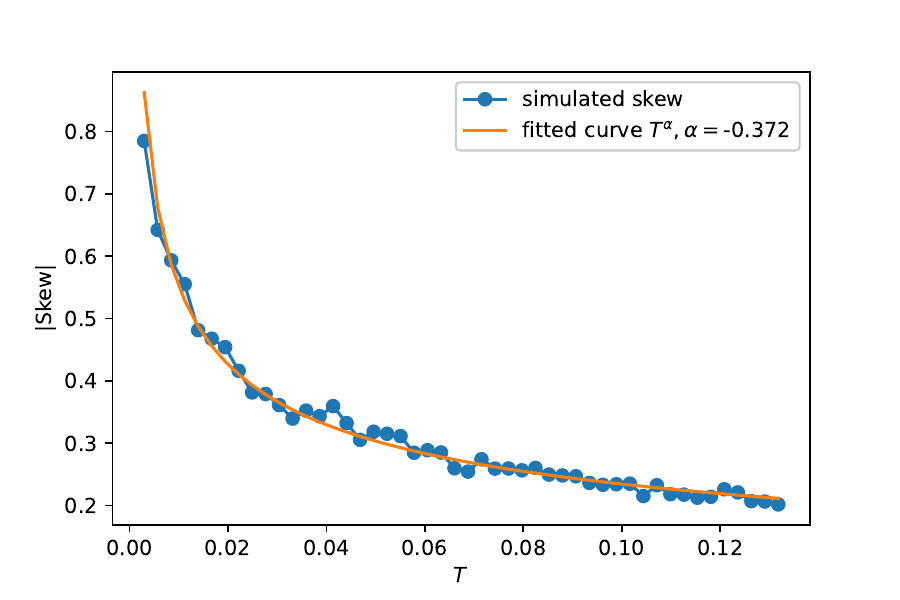}
\caption{$H^1 = 0.7, H^2 = 0.2,  s^1_0 =100, s^2_0 = 90, w_1 = 0.7, w_2 = 0.3$.}
\label{fig:fSS_blowup_diff3}
\end{subfigure}
\caption{The ATM skews at different maturities for the case with two modified fractional Stein-Stein stocks with parameters $dt = 0.1 \times 1/365, \sigma^1 = 0.2, \sigma^2 = 0.6, \rho^1 = -0.5, \rho^2 = -0.5,$ and $30000$ Monte Carlo simulations.}
\label{fig:fss2}
\end{figure}
\subsection{Fractional Bergomi models}
We now consider the fractional Bergomi model. In particular, we assume that
\begin{eqnarray*}
\sigma^1_t &=& \sigma^1_0 \exp \left\lbrace \eta_1 \sqrt{2H^1} \int_0^t (t-s)^{H^1 - 1/2} dB^1_s - \frac{\eta_1^2}{2} t^{2H^1}
\right\rbrace,\\
\sigma^2_t &=& \sigma^2_0 \exp \left\lbrace \eta_2 \sqrt{2H^2} \int_0^t (t-s)^{H^2- 1/2} dB^2_s - \frac{\eta_2^2}{2} t^{2H^2} 
\right\rbrace,
\end{eqnarray*}
and the log-price processes satisfy
\begin{eqnarray*}
dZ^1_t &=& -\frac{1}{2} \sigma^1_t dt + \sqrt{\sigma^1_t} (\rho_1 dB^1_t + \sqrt{1-\rho_1^2}dW^{1}_t),\\
dZ^2_t &=& -\frac{1}{2} \sigma^2_t dt + \sqrt{\sigma^2_t} (\rho_2 dB^2_t + \sqrt{1-\rho_2^2}dW^{2}_t). 
\end{eqnarray*}
where $B^j,W^{j}, j = 1,2$ are independent Brownian motions. We can follow \citet{euch2019short} for the computation of $q_T(\bx)$. The numerical results are reported in Figure \ref{fig:fBM}, in which we still observe the blow up and quasi-blow up phenomena.
\begin{figure}[h!tbp]
\centering
\begin{subfigure}{0.45\textwidth}
\includegraphics[width=\textwidth]{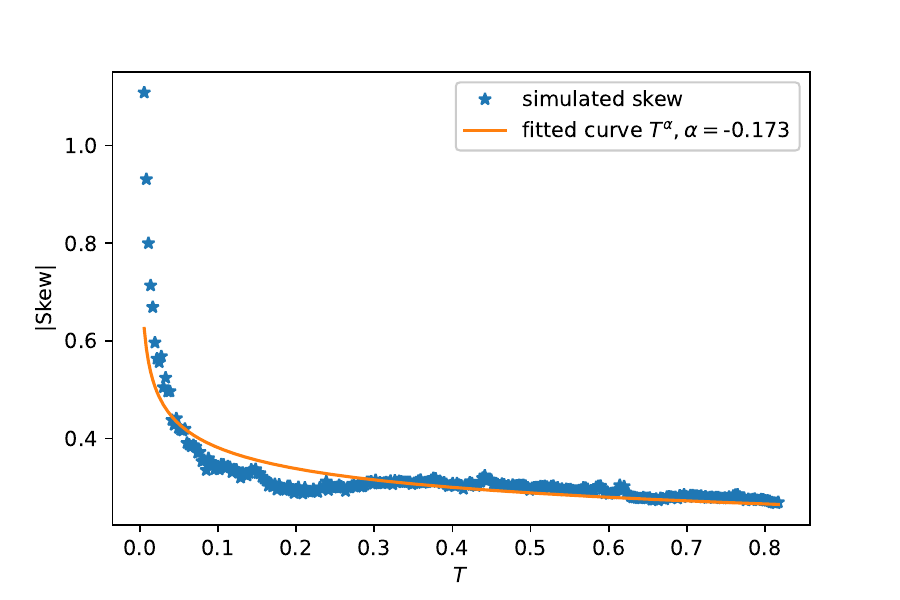} 
\caption{$s^1_0 = s^2_0 = 100$.}
\label{fig:fBM_blowup}
\end{subfigure}
\begin{subfigure}{0.45\textwidth}
\includegraphics[width=\textwidth]{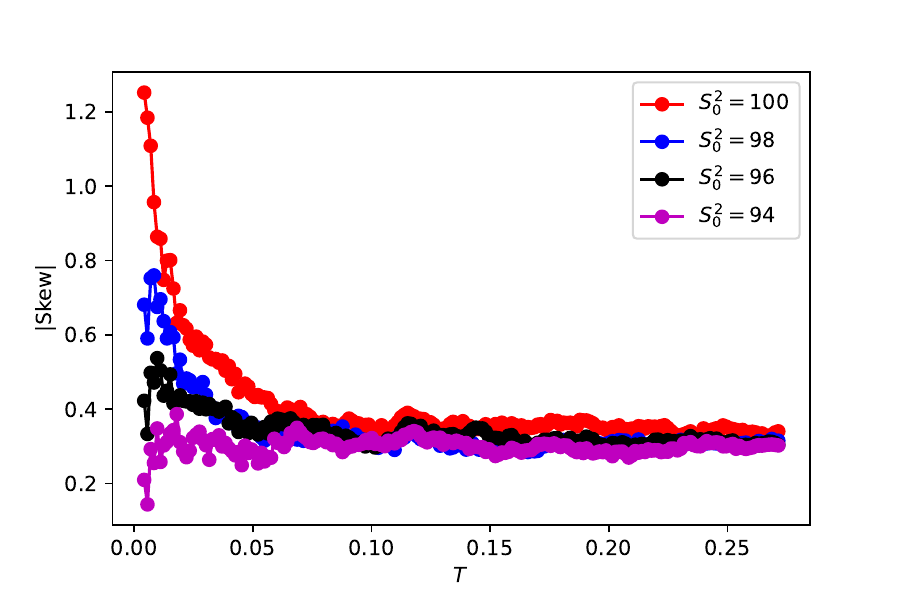}
\caption{$s^1_0 = 100,  s^2_0 \in \{100,98,96,94\}$.}
\label{fig:fBM_quasi_many}
\end{subfigure}
\caption{The ATM skews at different maturities for the case with two fractional Bergomi models with parameters $dt = 0.1/365, H^1 = 0.7, H^2 = 0.6, \eta_1 = \eta_2 = 1.9^2, \rho_1 = \rho_2 =0,  w_1 = 1, w_2  = 0$ and $30000$ Monte Carlo  simulations. The Hybrid scheme of \citet{bennedsen2017hybrid} is used. The model exhibits the quasi-blow-up phenomena.}	\label{fig:fBM}
\end{figure}
However, Theorem 6.7 of \citet{gulisashvili2020gaussian} implies that for any $T > 0, p > 0$
\begin{eqnarray*}
E\left[ \exp\left\lbrace p\int_0^T (\sigma^j_t)^2dt\right\rbrace \right]  &\ge& E\left[ \exp\left\lbrace p(\sigma^1_0)^2  e^{- \eta_1^2 T^{2H_1} }  \int_0^T \exp \left\lbrace 2\eta_1 \sqrt{2H^1} \int_0^t (t-s)^{H_1 - 1/2} dB^1_s   \right\rbrace   dt \right\rbrace \right] \\
&=& \infty,  
\end{eqnarray*}
and thus Assumption \ref{assumption:Gaussian_bound} is not satisfied. We also refer to \citet{gassiat} for similar results.  Therefore, weakening Assumption \ref{assumption:Gaussian_bound} or developing different methods to study fractional Bergomi models would be interesting research questions. 
\section{Conclusion}
We have introduced a new market model that incorporates market indexes. This model involves ranking stock prices based on their capitalization and subsequently constructing the market indexes from the top-ranked stocks. Even in straightforward settings where stock prices follow geometric Brownian motion dynamics, the ranking mechanism has the capability to reproduce the observed term structure of ATM implied volatility skew for equity indexes. Additionally, we have developed models that resolve two perplexing empirical observations in equity markets: the persistent nature of volatilities and the power-law behavior of ATM skews. This is accomplished by incorporating fractional Brownian motions with Hurst exponents larger than 0.5 for volatilities and by implementing the ranking procedure. Our framework introduces a new phenomenon termed ``quasi-blow-up" and provides a comprehensive explanation for it.
Extensive numerical examples validate our theoretical findings.

\section{Proofs}\label{sec:proofs}
\subsection{Proof of Theorem \ref{thm:density_expansion}}\label{proof:density}
\begin{lemma}\label{lemma_any_power}
There exists a density of $\bX_t$ and for any $j \in \mathbb{N}$
$$\sup_{t \in (0,1)} \int|\bu|^j |E[e^{i\bu \cdot \bX_t}]|d\bu < \infty.$$
\end{lemma}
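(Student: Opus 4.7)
The plan is to exploit the conditional Gaussian structure of $\bX_t$ given the filtration $(\mathcal{G}_t)$ generated by the $B^k$'s. Since the $\sigma^{jk}_u$ are $\mathcal{G}$-adapted and the $W^k$'s are independent of $\mathcal{G}_\infty$, one can write
\begin{equation*}
\bX_t = \bm_t + \widetilde{\bX}_t, \qquad \widetilde{X}^j_t = \frac{1}{V^j_0(t)}\int_0^t \sum_{k=1}^d \sigma^{jk}_u \sqrt{1-(\rho^{jk})^2}\,dW^k_u,
\end{equation*}
where $\bm_t$ is $\mathcal{G}_t$-measurable (absorbing the drift $-\frac{1}{2V^j_0(t)}\langle M^j\rangle_t$ and the $B$-stochastic integrals). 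Conditional on $\mathcal{G}_t$, the vector $\widetilde{\bX}_t$ is centered Gaussian with covariance matrix $\widetilde{\Sigma}_t$ whose entries are $\widetilde{\Sigma}^{j\ell}_t = \frac{1}{V^j_0(t) V^\ell_0(t)} \sum_{k=1}^d \sqrt{1-(\rho^{jk})^2}\sqrt{1-(\rho^{\ell k})^2}\int_0^t \sigma^{jk}_u\sigma^{\ell k}_u\,du$. Taking conditional expectations gives the pointwise bound
\begin{equation*}
|E[e^{i\bu\cdot \bX_t}]| = \bigl| E\bigl[e^{i\bu\cdot \bm_t}\,e^{-\frac{1}{2}\bu^T \widetilde{\Sigma}_t \bu}\bigr]\bigr| \le E\bigl[e^{-\frac{1}{2}\bu^T \widetilde{\Sigma}_t \bu}\bigr].
\end{equation*}

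The next step, which I view as the crux of the argument, is to establish a uniform-in-$t$ non-degeneracy statement for $\widetilde{\Sigma}_t$. By the continuity of $\sigma^{jk}_u$ at $u=0$ and the expansion $V^j_0(t)=\sqrt{v^j_0(0)t}+O(t^{1/2+\zeta^j})$ from Assumption \ref{assum:main}, the entries of $\widetilde{\Sigma}_t$ converge in probability (and by Assumption \ref{assumption:Gaussian_bound} also in $L^p$) to the deterministic matrix $\bigl(\frac{1}{\sqrt{v^j_0(0)v^\ell_0(0)}}\sum_k\sqrt{1-(\rho^{jk})^2}\sqrt{1-(\rho^{\ell k})^2}\,\sigma^{jk}_0\sigma^{\ell k}_0\bigr)_{j,\ell}$, which coincides with a non-trivial block of the limiting covariance $\Gamma$ of $\bM^{(0)}_t$ in Assumption \ref{assum:main}(i). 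Since $\phi_{\mathbf{\mu},\Gamma}$ is assumed to be a bona fide normal density, $\Gamma$ is positive definite, and so is the limiting matrix. Hence there exist $\lambda>0$ and an event $A_t$ with $\mathbb{Q}(A_t^c)$ controllable to arbitrary polynomial order (using exponential integrability from Assumption \ref{assumption:Gaussian_bound} combined with Markov's inequality on the entry-level deviations of $\widetilde{\Sigma}_t$) such that $\widetilde{\Sigma}_t \succeq \lambda\, \mathbf{I}_n$ on $A_t$, uniformly in $t\in(0,1)$.

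Combining these observations,
\begin{equation*}
|E[e^{i\bu\cdot \bX_t}]|\le e^{-\frac{\lambda}{2}|\bu|^2}+\mathbb{Q}(A_t^c),
\end{equation*}
and a standard refinement (splitting the $\bu$-integral into $|\bu|\le R$ and $|\bu|>R$, and optimizing $R$ in terms of the tail of $\lambda_{\min}(\widetilde{\Sigma}_t)$) yields $\int_{\mathbb{R}^n} |\bu|^j |E[e^{i\bu\cdot \bX_t}]|\,d\bu \le C_j$ uniformly in $t\in(0,1)$. Once this holds for $j=0$, the Fourier inversion formula produces a bounded continuous density $p_t(\bx)$ of $\bX_t$, and taking $j\geq 1$ ensures the corresponding smoothness needed later in Subsection \ref{proof:density}.

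The principal obstacle, as anticipated, is the uniform lower bound on the smallest eigenvalue of the random matrix $\widetilde{\Sigma}_t$: one must rule out near-degeneracy with enough quantitative probability control to beat the polynomial factor $|\bu|^j$. My approach would be to combine the almost-sure continuity of $\sigma^{jk}$ at $0$ with the exponential moment bound of Assumption \ref{assumption:Gaussian_bound} to show that $\mathbb{Q}(\lambda_{\min}(\widetilde{\Sigma}_t)<\delta)$ decays faster than any polynomial in $\delta$ (a Gaussian concentration type estimate), which is sufficient to absorb $|\bu|^j$ for every $j\in\mathbb{N}$.
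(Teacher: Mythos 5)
Your reduction via conditioning on $\mathcal{G}$ is sound and is in fact the same route as the proof the paper defers to (its ``proof'' is a one-line pointer to Lemma 3.4 of \citet{euch2019short}, which proceeds exactly by conditional Gaussianity and the bound $|E[e^{i\bu\cdot\bX_t}]|\le E[e^{-\frac12 \bu^{T}\widetilde{\Sigma}_t\bu}]$). The genuine gap is in your non-degeneracy step. Assumption \ref{assumption:Gaussian_bound} is an exponential \emph{upper} moment bound on $\langle M^j\rangle_{T^*}$: via Markov it controls $\mathbb{Q}(\langle M^j\rangle_t>K)$ and gives no information whatsoever about the \emph{lower} tail of the conditional covariance, which is what you need — namely that $\mathbb{Q}(\lambda_{\min}(\widetilde{\Sigma}_t)<\delta)$ decays faster than every power of $\delta$, uniformly in $t\in(0,1)$. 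Continuity of $\sigma^{jk}$ at $u=0$ yields only convergence in probability with no rate, and is irrelevant for $t$ bounded away from $0$, so the asserted ``Gaussian concentration type estimate'' is not derived from anything available in the paper. What the cited one-dimensional lemma actually exploits (under $|\rho|<1$) is a uniform \emph{negative-moment} condition on $\langle M\rangle_t/t$, which gives $E[e^{-c|\bu|^2\langle M\rangle_t/t}]\le C_p|\bu|^{-2p}$ through $e^{-x}\le c_p x^{-p}$; an analogous hypothesis on $\lambda_{\min}(\widetilde{\Sigma}_t)$ would have to be imposed or proved here, and it does not follow from \eqref{eq:vola} or Assumption \ref{assumption:Gaussian_bound}.

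A second, related error is the identification of $\lim_{t\to0}\widetilde{\Sigma}_t$ with ``a non-trivial block of $\Gamma$'': $\Gamma$ is the covariance of $\bM^{(0)}_t$, which approximates the full normalized martingale $M^j_t/V^j_0(t)$ including its $B$-driven part, whereas $\widetilde{\Sigma}_t$ involves only the $W$-driven part. At leading order $\widetilde{\Sigma}_t$ is the Gram matrix of the $n$ vectors $\bigl(\sqrt{1-(\rho^{jk})^2}\,\sigma^{jk}_0\bigr)_{1\le k\le d}$, hence of rank at most $d$; whenever $n>d$, or some $\rho^{jk}=\pm1$ (which the model permits), or these rows are linearly dependent, the limiting matrix is singular and your claimed bound $\widetilde{\Sigma}_t\succeq\lambda\,\mathbf{I}_n$ on a high-probability event cannot hold — one would instead have to use the time-inhomogeneity of the volatility paths or add explicit non-degeneracy/correlation restrictions. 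Once a uniform lower-tail (negative-moment) estimate for $\lambda_{\min}(\widetilde{\Sigma}_t)$ is granted, your final steps — splitting the $\bu$-integral, optimizing the threshold, and invoking Fourier inversion to obtain a bounded density — are correct and complete the proof of Lemma \ref{lemma_any_power}.
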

\begin{proof}
The proof is inspired by that of Lemma 3.4 of \citet{euch2019short}. 
\end{proof}
For each $j = 1,...,n,$ define $\widetilde{\bX}_t := (\widetilde{X}^j_t)_{1\le j \le n}$ where 
\begin{equation}
\widetilde{X}^j_t = M^{j,(0)}_t + t^{H^j} M^{j,(1)}_t + t^{2H^j}M^{j,(2)}_t - \frac{V^j_0(t)}{2} \left( 1 + t^{H^j}M^{j,(3)}_t \right).
\end{equation}
First, we will prove that for $\bu=(u_1,...,u_n) \in \mathbb{R}^n$,
\begin{equation}\label{eq:compare_char}
\sup_{|\bu| \le t^{-\varepsilon}}\left| E\left[ e^{i \bu \cdot \bX_t} \right] - E\left[  e^{i\bu \cdot \widetilde{\bX}_t} \right] \right| = \sum_{j=1}^n o(t^{H^1 + \min\{H^j,1/2\} + 2\varepsilon}). 
\end{equation} 
Decompose
\begin{eqnarray}
e^{i\bu \cdot \bX_t}  - e^{i \bu \cdot \widetilde{\bX}_t} &=&  \prod_{j=1}^{n}e^{iu_jX^j_t}  - e^{iu_1\widetilde{X}^1_t} \prod_{j=2}^{n}e^{iu_jX^j_t} 
+   e^{iu_1\widetilde{X}^1_t} \prod_{j=2}^{n}e^{iu_jX^j_t} -   e^{iu_1\widetilde{X}^1_t} e^{iu_2\widetilde{X}^2_t}  \prod_{j=3}^{n}e^{iu_jX^j_t} \nonumber \\
&+& \cdots +  \prod_{j=1}^{n-1}e^{iu_j\widetilde{X}^j_t}  e^{iu_nX^n_t} - \prod_{j=1}^{n}e^{iu_j\widetilde{X}^j_t}. \label{eq:decomp} 
\end{eqnarray}
We compute the first term in RHS of \eqref{eq:decomp}, other terms are treated similarly. Using $|e^{ix} - 1| \le |x|$, H\"older's inequality and the fact that $X^1_t, \widetilde{X}^1_t$ have moments of any order by \eqref{assum:inte}, \eqref{assum:moment}, we estimate
\begin{eqnarray*}
\left| E\left[ \prod_{j=1}^{n}e^{iu_jX^j_t}  -  e^{iu_1\widetilde{X}^1_t} \prod_{j=2}^{n}e^{iu_jX^j_t} \right] \right| 
&\leq& \left|  E\left[  e^{iu_1\widetilde{X}^1_t}(e^{iu_1(X^1_t-\widetilde{X}^1_t)}-1) \prod_{j=2}^{n}e^{iu_jX^j_t}  \right] \right| \\
&\leq&C(\varepsilon) t^{-\varepsilon}  \left\| X^1_t-\widetilde{X}^1_t \right\|_{1+ \varepsilon},
\end{eqnarray*}
for some $0 < \varepsilon$  small enough and $C(\varepsilon)> 0$.  Furthermore, we have $V^1_0(t) = O(t^{1/2})$ and then $\left\| X^1_t-\widetilde{X}^1_t \right\|_{1+ \varepsilon} = o(t^{H^1 + \min\{H^1,1/2\} + 2\varepsilon})$ by \eqref{assum:M012}, \eqref{assum:M3}. Therefore \eqref{eq:compare_char} follows. 

Secondly, we prove that for $\delta \in (0, \min_{1\le j\le n}\{(H^j-\varepsilon)/3, (1/2 - \varepsilon)/3\})$, 
\begin{eqnarray}
&&\sup_{|\bu| \le t^{-\delta}} \left| E\left[ \prod_{j=1}^n e^{iu_j\widetilde{X}^j_t} -  \prod_{j=1}^n  e^{iu_jM^{(0),j}_t} \left(  1 + iu_j (\widetilde{X}^j_t - M^{(0),j}_t) -\frac{u_j^2}{2} (\widetilde{X}^j_t - M^{(0),j}_t)^2 \right) \right]   \right| \nonumber \\
&&  \qquad= \sum_{j=1}^no(t^{\min\{2H^j,1\} + \varepsilon}). \label{eq:1term}
\end{eqnarray}
Using a similar decomposition as in \eqref{eq:decomp}, we need to estimate 
\begin{equation*}
\left| E\left[ \prod_{j=1}^{n}e^{iu_j\widetilde{X}^j_t}  -  e^{iu_1M^{(0),1}_t} \left(  1 + iu_1 (\widetilde{X}^1_t - M^{(0),1}_t) -\frac{u^2}{2} (\widetilde{X}^1_t - M^{(0),1}_t)^2 \right) \prod_{j=2}^{n}e^{iu_j\widetilde{X}^j_t} \right] \right|, 
\end{equation*}
and other terms follow in the same manner. By the inequality
$$\left| e^{ix} - 1 - ix + \frac{x^2}{2}\right|  \le \frac{|x|^3}{6}, \ \forall x \in \mathbb{R},$$
the quantity in \eqref{eq:1term} is estimated as follows
\begin{eqnarray*}
&&\left| E\left[ \left(  e^{iu_1\widetilde{X}^1_t}  - e^{iu_1M^{(0),1}_t} \left(  1 + iu_1 (\widetilde{X}^1_t - M^{(0),1}_t) -\frac{u_1^2}{2} (\widetilde{X}^1_t - M^{(0),1}_t)^2 \right)\right)  \prod_{j=2}^{n}e^{iu_j\widetilde{X}^j_t} \right] \right| \\
&=& \left| E\left[  e^{iu_1M^{(0),1}_t} \left(e^{iu_1(\widetilde{X}^1_t-M^{(0),1}_t)}  -  \left(  1 + iu_1 (\widetilde{X}^1_t - M^{(0),1}_t) -\frac{u_1^2}{2} (\widetilde{X}^1_t - M^{(0),1}_t)^2 \right)\right)  \prod_{j=2}^{n}e^{iu_j\widetilde{X}^j_t} \right] \right| \\
&=& o(t^{\min\{2H^1,1\} + \varepsilon}). 
\end{eqnarray*}
Therefore, \eqref{eq:1term} follows. From \eqref{eq:1term}, taking conditional expectation given $\bM^{(0)}_t = \bx$ gives
\begin{eqnarray}
&& \sup_{|\bu| \le t^{-\delta}} \left| E\left[ \prod_{j=1}^{n} e^{iu_j\widetilde{X}^j_t} - \prod_{j=1}^n e^{iu_jM^{(0),j}_t} \left( 1 + A^j(u_j,\bM^{(0)}_t) + B^j(u_j,\bM^{(0)}_t) \right) \right]   \right| \nonumber \\
&& \qquad = \sum_{j=1}^no(t^{\min\{2H^j,1\} + \varepsilon}), \label{eq:compare_char2}
\end{eqnarray}
where 
\begin{eqnarray}
A^j  &:=&	A^j(u_j,\bM^{(0)}_t) = iu_j\left( E\left[\widetilde{X}^j_t|\bM^{(0)}_t = \bx \right] - x_j\right)  \\
B^j &:=& B^j(u_j,\bM^{(0)}_t) \\
&=& -\frac{u^2_j}{2}\left( t^{2H^j} E\left[ \left| M^{(1),j}_t \right|^2 | \bM^{(0)}_t = \bx  \right] - V^j_0(t)  t^{H^j}  E\left[  M^{(1),j}_t | \bM^{(0)}_t = \bx  \right] + \frac{(V^j_0(t))^2}{4} \right) \nonumber.
\end{eqnarray}  
If we ignore the terms with order smaller than $t^{2H^j}$, we get
\begin{eqnarray*}
\prod_{j=1}^n e^{iu_jM^{(0),j}_t} \left(  1 + A^j + B^j \right)  &=& e^{i \bu \cdot \bM^{(0)}_t}\left( 1 + \sum_{j=1}^n A^j + \sum_{j=1}^n B^j + \sum_{1\le k,j \le n} A^kA^j \right). 
\end{eqnarray*} 
Next, using Lemma \ref{lemma:intbypart} we obtain
\begin{eqnarray*}
E\left[ e^{i \bu \cdot \bM^{(0)}_t}\left( \sum_{j=1}^n A^j  \right)\right] &=& \int_{\mathbb{R}^d} e^{i \bu \cdot \bx } \sum_{j=1}^niu_j  \left( E\left[\widetilde{X}^j_t|\bM^{(0)}_t = \bx \right] - x_j\right) \phi_{\mathbf{\mu},\Gamma}(\bx)d\bx\\
&=&  \int_{\mathbb{R}^d} e^{i \bu \cdot \bx} \left(  \sum_{j=1}^nit^{H^j} u_j E\left[M^{(1),1}_t|\bM^{(0)}_t = \bx \right]\right) \phi_{\mathbf{\mu},\Gamma}(\bx)d\bx \\
&+&\int_{\mathbb{R}^d} e^{i \bu \cdot \bx}  \left( \sum_{j=1}^n it^{2H^j} u_j E\left[M^{(2),j}_t|\bM^{(0)}_t = \bx \right] \right) \phi_{\mathbf{\mu},\Gamma}(\bx)d\bx\\
&-& \int_{\mathbb{R}^d} e^{i \bu \cdot \bx} \left( \sum_{j=1}^n i  \frac{V^j_0(t)}{2} u_j\right) \phi_{\mathbf{\mu},\Gamma}(\bx)d\bx\\
&-& \int_{\mathbb{R}^d} e^{i \bu \cdot \bx } \left(  \sum_{j=1}^n i \frac{V^j_0(t)t^{H^j}}{2}u_j   E\left[M^{(3),j}_t|\bM^{(0)}_t = \bx \right]\right) \phi_{\mathbf{\mu},\Gamma}(\bx)d\bx. \\
\end{eqnarray*}
Then
\begin{eqnarray*}
E\left[ e^{i \bu \cdot \bM^{(0)}_t}\left( \sum_{j=1}^n A^j  \right)\right] 	&=&    \int_{\mathbb{R}^d} e^{i \bu \cdot \bx} \sum_{j=1}^{n} t^{H^j} a^{(1),j}_t(\bx) d\bx + \int_{\mathbb{R}^d} e^{i \bu \cdot \bx} \sum_{j=1}^{n} t^{2H^j} a^{(2),j}_t(\bx) d\bx \\
&-&  \int_{\mathbb{R}^d} e^{i \bu \cdot \bx} \sum_{j=1}^{n}  V^j_0(t) \frac{\partial}{\partial x_j} \phi_{\mathbf{\mu},\Gamma}(\bx)d\bx - \int_{\mathbb{R}^d} e^{i \bu \cdot \bx} \sum_{j=1}^{n} \frac{V^j_0(t)t^{H^j}}{2} \cdot a^{(3),j}_t(\bx)d\bx. 
\end{eqnarray*}
Similarly
\begin{eqnarray*}
E\left[ e^{i \bu \cdot  \bM^{(0)}_t}\left( \sum_{j=1}^n B^j  \right)\right] 
&=&  \int_{\mathbb{R}^d} e^{i \bu \cdot \bx } \sum_{j=1}^{n}   \frac{t^{2H^j}}{2} c^j_t(\bx) d\bx
- \int_{\mathbb{R}^d} e^{i \bu \cdot \bx} \sum_{j=1}^{n} \frac{V^j_0(t) t^{H^j}}{2} b^j_t(\bx)d\bx\\
&+& \int_{\mathbb{R}^d} e^{i \bu \cdot \bx}  \sum_{j=1}^{n} \frac{(V^j_0(t))^2}{8} \frac{\partial^2}{\partial x^2_j} \phi_{\mathbf{\mu},\Gamma}) (\bx)d\bx. 
\end{eqnarray*}
In addition,
\begin{eqnarray*}
E\left[ e^{i \bu \cdot  \bM^{(0)}_t}\left( \sum_{1\le k,j  \le n}^n A^kA^j  \right)\right] &=& \int_{\mathbb{R}^d} e^{i \bu \cdot \bx}   \sum_{1\le k,j  \le n}^n  t^{H^k + H^j}   d^{(1),j,k}_t(\bx)d\bx\\
&-& \int_{\mathbb{R}^d} e^{i \bu \cdot  \bx}  \sum_{1 \le k,j\le n}t^{H^j} \frac{V^k_0(t)}{2}e^{(1),j,k}_t(\bx)d\bx\\
&+&  \int_{\mathbb{R}^d} e^{i \bu \cdot  \bx}  \sum_{1 \le k,j\le n} \frac{V^k_0(t)V^j_0(t)}{4} \frac{\partial^2}{\partial x_jx_k}\phi_{\mathbf{\mu},\Gamma}(\bx)d\bx .
\end{eqnarray*}
Therefore,
\begin{eqnarray*}
E\left[  \prod_{j=1}^n e^{iu_jM^{j,(0)}_t} \left( 1 + A^j(u_j,\bM^{(0)}_t) + B^j(u_j,\bM^{(0)}_t) \right) \right]     = \int_{\mathbb{R}^n} e^{i \bu \cdot \bx} q_t (\bx) d\bx,
\end{eqnarray*}
where $q$ is given in \eqref{eq:q}. 
By the Fourier identity, we get
\begin{eqnarray*}
\left( p_t(\bx) - q_t(\bx)\right) = \frac{1}{2\pi}\int_{\mathbb{R}^n} \int_{\mathbb{R}^n}e^{i\bu \cdot  \by}\left( p_t(\by) - q_t(\by)\right)d\by e^{-i\bu \cdot \bx}d\bu .
\end{eqnarray*}
The volume of a Euclidean ball of radius $R$ in $n$-dimensional is of order $R^n$. Choosing $\delta \in (0, \min_{1 \le j \le n}\{\varepsilon/(2n), (H^j - \varepsilon)/3, (1/2 - \varepsilon)/3\})$ yields
\begin{eqnarray*}
\int_{|\bu| \le t^{-\delta}} \left|  \int_{}e^{i\bu \cdot  \by}\left( p_t(\by) - q_t(\by)\right)d\by \right| d\bu = \sum_{j=1}^n o(t^{\min\{2H^j,1\}+\varepsilon/2}).
\end{eqnarray*}
Furthermore
\begin{eqnarray*}
\int_{|\bu| > t^{-\delta}} \left|  \int_{\mathbb{R}^n}e^{i\bu \cdot  \by}p_t(\by) d\by \right| d\bu \le t^{j\delta}  \int_{|\bu| > t^{-\delta}} |\bu|^j  \left|  E \left[ e^{i\bu \cdot  \bX_t}\right] \right| d\bu =  O(t^{j\delta}),
\end{eqnarray*}
and similarly
\begin{eqnarray*}
\int_{|\bu| > t^{-\delta}} \left|  \int_{\mathbb{R}^n}e^{i\bu \cdot  \by}q_t(\by) d\by \right| d\bu = O(t^{j\delta}),
\end{eqnarray*}
for any $j \in \mathbb{N}$ by Lemma \ref{lemma_any_power}.  The proof is complete. 
\subsection{Proof of Proposition \ref{pro:future1}}\label{proof:future1}
Recall that $\Pi_n$ contains all permutations of $\{1,2,...,n\}$ and 
$$A^{\psi_n}_T = \{ \omega: S^{\psi_n(1)}_T \ge S^{\psi_n(2)}_T \ge \cdots \ge S^{\psi_n(n)}_T \}.$$  
By definition, the price of index future at time $0$  becomes 
\begin{eqnarray}\label{eq:future_price}
F_{0,T} &=& E[I_T|\mathcal{F}_0] = \sum_{\psi_n\in\Pi_n}E\left[ I_T1_{A^{\psi_n}_T}\right]. \label{eq:future_decomp}
\end{eqnarray}
In this case, the event $A^{(1,2,...,n)}_T$ is the largest one among all permutations as $T$ tends to $0$ and the quantity $E\left[ I_T1_{A^{(1,2,...,n)}_T }\right]$ will play the major role in the future price $F_{0,T}$. Here, we approximate this term and other terms are computed by the same manner. Using Taylor's theorem, we compute 
\begin{eqnarray} 
e^{V^k_0(T)X^k_T }&=&  1 + V^k_0(T)X^k_T + \frac{1}{2}e^{\xi^k_T} (V^k_0(T)X^k_T)^2, \label{eq:taylor}
\end{eqnarray}
where $\xi^k_T$ is between $0$ and  $ V^k_0(T)X^k_T$.  Using \eqref{eq:SVX} we write 
\begin{eqnarray}
E\left[ I_T1_{A^{(1,2,...,n)}_T }\right] &=& E\left[ 1_{A^{(1,2,...,n)}_T } \sum_{k=1}^{\overline{n}}w_0s^k_0e^{V^k_0(T)X^k_T}\right] \nonumber  \\
&=& E\left[ 1_{A^{(1,2,...,n)}_T } I_0 \right] +  E\left[ 1_{A^{(1,2,...,n)}_T } \sum_{k=1}^{\overline{n}}w_0s^k_0 V^k_0(T)X^k_T \right] \nonumber \\
&+& \frac{1}{2}  E\left[ 1_{A^{(1,2,...,n)}_T } \sum_{k=1}^{\overline{n}}w_0s^k_0 e^{\xi^k_T} (V^k_0(T)X^k_T)^2 \right]. \label{eq:F_A1} 
\end{eqnarray}
We consider the third term in \eqref{eq:F_A1}  and estimate 
\begin{eqnarray}
E[e^{\xi^k_T} (X^k_T)^2]  &\le&  E[(X^k_T)^2 1_{X^k_T \le 0}] + E[e^{V^k_0(T) X^k_T}(X^k_T)^21_{X^k_T >0}] \nonumber\\
&\le&E[(X^k_T)^2 1_{X^k_T \le 0}]  + E[e^{M^k_T - \frac{1}{2}\left\langle M^k \right\rangle_T} (X^k_T)^2]. \label{eq:exp_est} 
\end{eqnarray}
The first term in the RHS of   \eqref{eq:exp_est} is finite uniformly in $T$ for $T \in (0,1)$ by \eqref{eq:uniform}.  H\"older's inequality implies that
\begin{eqnarray*}
E[e^{M^k_T - \frac{1}{2}\left\langle M^k \right\rangle_T} (X^k_T)^2] &\le&  E^{1/p}[e^{pM^k_T - \frac{p^2}{2}\left\langle M^k  \right\rangle_T}]E^{1/p'}[ e^{(\frac{p'(p-1)}{2})\left\langle M^k  \right\rangle_T} (X^{k}_T)^{2p'}] \\
&\le& E^{1/p}[e^{pM^k_T - \frac{p^2}{2}\left\langle M^k  \right\rangle_T}]E^{1/p'}[ e^{(\frac{p'(p-1)}{2})\left\langle M^k  \right\rangle_T} (X^{k}_T)^{2p'}],
\end{eqnarray*}
where $1/p+1/p' = 1$ and $p > 1$.  Noting that $p'(p-1) = p$, we estimate
\begin{eqnarray*}
E[ e^{(\frac{p'(p-1)}{2})\left\langle M^k \right\rangle_T} (X^{k}_T)^{2p'}] \le E^{1/q}[ e^{\frac{qp}{2}\left\langle M^k \right\rangle_T} ] E^{1/q'} [(X^{k}_T)^{2p'q'}],
\end{eqnarray*}
where $1/q + 1/q' = 1, q > 1.$  We deduce from \eqref{eq:uniform} that 
\begin{equation}\label{eq:sup2}
\sup_{T \in (0,T^*)}E[e^{\xi^k_T} (X^{k}_T)^2] < \infty
\end{equation}
when $pq >1$ satisfies Assumption \ref{assumption:Gaussian_bound}. Therefore, the third term of \eqref{eq:F_A1} is of order $O(T)$. 

Using \eqref{eq:cond_V}, \eqref{eq:uniform} and then Lemma \ref{lemma:appro}, the second term of \eqref{eq:F_A1}  is approximated by
\begin{eqnarray*}
E\left[ 1_{A^{(1,2,...,n)}_T } \sum_{k=1}^{\overline{n}}w_0s^k_0 V^k_0(T)X^k_T \right] &=& E\left[ 1_{A^{(1,2,...,n)}_T } \sum_{k=1}^{\overline{n}}\nu_k\sqrt{T}X^k_T \right] +  \sum_{k=1}^{\overline{n}} O(T^{1/2+\zeta^k})   \\
&=& \int_{A^{(1,2,...,n)}_T} \left( \sum_{k=1}^{\overline{n}} \nu_kx_k \sqrt{T} \right)  q_t(\bx) d\bx +  \sum_{k=1}^{\overline{n}} O(T^{1/2+\zeta^k})  \\
&+&  \sqrt{T}\sum_{j=1}^n o(t^{\min\{2H^j,1\} +\varepsilon/4}),
\end{eqnarray*}
where we recall  $\nu_k = w_ks^k_0\sqrt{v^k_0(0)}$. By Theorem \ref{thm:density_expansion}, we write
\begin{eqnarray}
&&\int_{A^{(1,2,...,n)}_T} \left( \sum_{k=1}^{\overline{n}} \nu_kx_k \sqrt{T} \right)  q_t(\bx) d\bx	 \nonumber\\
&=& \int_{A^{(1,2,...,n)}_T} {\left( \sum_{k=1}^{ \overline{n}} \nu_kx_k\sqrt{T}   \right)  \phi_{\mathbf{\mu},\Gamma}\left( \bx \right) dx_n ... dx_1} \nonumber \\
&-&  \int_{A^{(1,2,...,n)}_T} \left( \sum_{k=1}^{ \overline{n}} \nu_kx_k\sqrt{T}   \right)   \left(\sum_{j=1}^nT^{H^j} a^{(1),j}_T(\bx) \right) dx_n ... dx_1 \nonumber \\
&-&  \int_{A^{(1,2,...,n)}_T} \left( \sum_{k=1}^{ \overline{n}} \nu_kx_k\sqrt{T}   \right)    \left(\sum_{j=1}^n  T^{2H^j}\left( \frac{1}{2}a^{(2),j}_T(\bx)  + c^j_T(\bx) \right) \right)  dx_n ... dx_1 \nonumber\\
&+&  \int_{A^{(1,2,...,n)}_T} \left( \sum_{k=1}^{ \overline{n}} \nu_kx_k\sqrt{T}   \right)    \left(\sum_{1 \le j, \ell \le n}^n  T^{H^k + H^j} d^{(1),kj}(\bx)  \right) dx_n ... dx_1 + O(T).  \label{eq:first_term}
\end{eqnarray}
For all  $2 \le j \le n$, it is clear that when $T \to 0$,
$$\frac{1}{V^j_0(T)}\log\left( \frac{s^{j-1}_0}{s^j_0}  \right) \sim \frac{1}{\sqrt{T}}  \to \infty, \qquad \lim_{T \to 0} \frac{V^{j-1}_0(T)}{V^j_0(T)} = \sqrt{\frac{v^{j-1}_0(0)}{v^j_0(0)}}.$$ 
Using Lemma \ref{lemma:tail1}, we estimate
\begin{eqnarray*}
&&\int_{A^{(1,2,...,n)}_T} {\left( \sum_{k=1}^{ \overline{n}} \nu_kx_k   \right)  \phi_{\mathbf{\mu},\Gamma}\left( \bx\right) dx_n ... dx_1} \\
&=& \int_{-\infty}^{\infty}\int_{-\infty}^{\frac{1}{V^2_0(T)}\log\left( \frac{s^1_0}{s^2_0}  \right)  + \frac{V^1_0(T)}{V^2_0(T)} x_1 } \cdots \int_{-\infty}^{\frac{1}{V^n_0(T)}\log\left( \frac{s^{n-1}_0}{s^n_0}  \right) +  \frac{V^{n-1}_0(T)}{V^n_0(T)}  x_{n-1} }\left( \sum_{k=1}^{ \overline{n}} \nu_kx_k   \right)  \phi_{\mathbf{\mu},\Gamma}\left( \bx \right) dx_n ... dx_1 \\
&=&  \int_{\mathbb{R}^n}  {(...)dx_n ... dx_1}  -  \int_{-\infty}^{\infty} \cdots \int_{\frac{1}{V^n_0(T)}\log\left( \frac{s^{n-1}_0}{s^n_0}  \right) +  \frac{V^{n-1}_0(T)}{V^n_0(T)}  x_{n-1} }^{\infty}{(...)dx_n ... dx_1} \\
&=& \int_{\mathbb{R}^n} \left( \sum_{k=1}^{ \overline{n}} \nu_kx_k   \right)  \phi_{\mathbf{\mu},\Gamma}\left( \bx \right) dx_n ... dx_1   + o(T).
\end{eqnarray*}
For $A^{\psi_n}_T$ when $\psi_n$ differs from $(1,2,...,n)$, Lemma \ref{lemma:tail1} implies that
\begin{eqnarray*}
\int_{A^{\psi_n}_T} {\left( \sum_{k=1}^{ \overline{n}} \nu_{\psi_n(k)}x_{\psi_n(k)}   \right)  q_t\left( \bx\right) d\bx} = o(T).
\end{eqnarray*}
For example, when $\psi_n = (2,1,3,...,n)$ we obtain
\begin{eqnarray*}
&&\int_{A^{\psi_n}_T} {\left( \sum_{k=1}^{ \overline{n}} \nu_{\psi_n(k)}x_{\psi_n(k)}   \right)  q_t\left( \bx\right) d\bx} \\
&=& \int_{-\infty}^{\infty}\int_{-\infty}^{\frac{1}{V^1_0(T)}\log\left( \frac{s^2_0}{s^1_0}  \right)  + \frac{V^2_0(T)}{V^1_0(T)} x_2 } \cdots \int_{-\infty}^{\frac{1}{V^n_0(T)}\log\left( \frac{s^{n-1}_0}{s^n_0}  \right) +  \frac{V^{n-1}_0(T)}{V^n_0(T)}  x_{n-1} }\left( ... \right) dx_n ... dx_3dx_1dx_2 = o(T),
\end{eqnarray*}
since 
$$\frac{1}{V^1_0(T)}\log\left( \frac{s^{2}_0}{s^1_0}  \right) \sim - \frac{1}{\sqrt{T}}  \to - \infty.$$ 
The conclusion follows by taking all permutations $\psi_n \in \Pi_n$ into account. 
\subsection{Proof of Proposition \ref{pro:future2}}\label{proof:future2}
Again, the future price is given by \eqref{eq:future_price}. The terms  $E[I_T1_{A^{(1,...,r-1,r,...n)}_T}],  E[I_T1_{A^{(1,...,r,r-1,...n)}_T}]$ are the most significant factors in the future price, where 
\begin{eqnarray*}
A^{(1,...,r-1,r,...n)}_T &=&  \{ \omega: S^{1}_T \ge ... \ge  S^{r-1}_T \ge S^r_T \ge ... \ge S^{n}_T \}, \\
A^{(1,...,r,r-1,...n)}_T &=&  \{ \omega: S^{1}_T \ge... \ge  S^{r}_T \ge S^{r-1}_T \ge ... \ge S^{n}_T \}. 
\end{eqnarray*}
It suffices to consider the event $	A^{(1,...,r-1,r,...n)}_T$. Using the argument with Taylor's theorem in Subsection \ref{proof:future1}, we arrive at the formula \eqref{eq:F_A1}
and the third term of \eqref{eq:F_A1} is also of order $O(T)$. Using \eqref{eq:cond_V}, \eqref{eq:uniform} and then Lemma \ref{lemma:appro}, the second term of \eqref{eq:F_A1}  is approximated by
\begin{eqnarray*}
E\left[ 1_{A^{(1,...,r-1,r,...,n)}_T } \sum_{k=1}^{\overline{n}}w_0s^k_0 V^k_0(T)X^k_T \right] &=& E\left[ 1_{A^{(1,...,r-1,r,...,n)}_T } \sum_{k=1}^{\overline{n}}\nu_k\sqrt{T}X^k_T \right] +  \sum_{k=1}^{\overline{n}} O(T^{1/2+\zeta^k})   \\
&=& \int_{A^{(1,...,r-1,r,...,n)}_T} \left( \sum_{k=1}^{\overline{n}} \nu_kx_k \sqrt{T} \right)  q_t(\bx) d\bx +  \sum_{k=1}^{\overline{n}} O(T^{1/2+\zeta^k})  \\
&+&  \sqrt{T}\sum_{j=1}^n o(t^{\min\{2H^j,1\} +\varepsilon/4}),
\end{eqnarray*}
Again from Theorem \ref{thm:density_expansion}, we obtain \eqref{eq:first_term} and the term 
\begin{eqnarray*}
&&\int_{A^{(1,...,r-1,r,...,n)}_T} {\left( \sum_{k=1}^{ \overline{n}} \nu_kx_k   \right)  \phi_{\mathbf{\mu},\Gamma}\left( \bx \right) dx_n ... dx_1} \\
&=&  \int_{-\infty}^{\infty} \cdots \int_{-\infty}^{\frac{1}{V^r_0(T)} \log\left( \frac{s^{r-1}_0}{s^r_0}\right) + \frac{V^{r-1}_0(T)}{V^r_0(T)} x_{r-1}} \cdots \int_{-\infty}^{\frac{1}{V^n_0(T)}\log\left( \frac{s^{n-1}_0}{s^n_0}  \right) +  \frac{V^{n-1}_0(T)}{V^n_0(T)}  x_{n-1} } {\left( \sum_{k=1}^{ \overline{n}} \nu_kx_k  \right)  \phi_{\mathbf{\mu},\Gamma}\left( \bx \right) dx_n ... dx_1}\\
&=& \int_{-\infty}^{\infty} \cdots \int_{-\infty}^{ \frac{V^{r-1}_0(T)}{V^r_0(T)} x_{r-1}} \cdots \int_{-\infty}^{\infty} {\left( \sum_{k=1}^{ \overline{n}} \nu_kx_k \right)  \phi_{\mathbf{\mu},\Gamma}\left( \bx \right) dx_n ... dx_1} + o(T), 
\end{eqnarray*}
by Lemma \ref{lemma:tail1} and noting that in this case, 
\begin{eqnarray*}
\frac{V^{r-1}_0(T)}{V^r_0(T)} &=& \sqrt{\frac{v^{r-1}_0(0)}{v^{r}_0(0)}} + O(T^{\zeta^{r-1}}) + O(T^{\zeta^r}),  \\
\frac{1}{V^i_0(T)}\log\left( \frac{s^{i-1}_0}{s^i_0} \right)&=& \frac{1}{\sqrt{T}}\to \infty \text{ for  } i \ne r. 
\end{eqnarray*}
The proof is complete by considering all permutations in $\Pi_n$. 
\subsection{Proof of Theorem \ref{thm:main}}\label{proof:thm:main}
Recall from \eqref{eq:C_k} that
\begin{equation*}
\frac{\partial C}{\partial k}(T,F_{0,T},k=0) = -F_{0,T}\mathbb{Q}\left( I_T > F_{0,T} \right) = - F_{0,T} \sum_{\psi_n\in\Pi_n}\mathbb{Q}\left( \{I_T > F_{0,T}\} \cap A^{\psi_n}_T\right), 
\end{equation*}
where the sets $A^{\psi_n}_T$ are defined in \eqref{eq:A}. 

Case $(i)$: It is enough to  consider the event $A^{(1,2,...,n)}_T$. Fixing
\begin{equation}\label{eq:condition_eta}
0 < \eta < \min \{\zeta^1,...,\zeta^n, 1/6\},
\end{equation} we estimate
\begin{eqnarray}
\mathbb{Q}\left( I_T > F_{0,T},A^{(1,...,n)}_T \right) &=& \mathbb{Q}\left( I_T > F_{0,T},A^{(1,...,n)}_T, \max_{k\in \{1,...,n\}}|X^k_T| \le \frac{1}{T^{\eta}}\right)  \nonumber \\
&+& \mathbb{Q}\left( I_T > F_{0,T},A^{(1,...,n)}_T, \max_{k\in \{1,...,n\}}|X^k_T|> \frac{1}{T^{\eta}} \right). \label{eq:de}
\end{eqnarray}
Choosing $p$ such that $p\eta > 1$, the second term of \eqref{eq:de} is bounded by
\begin{eqnarray}
\sum_{k=1}^n\mathbb{Q}\left( X^k_T \ge \frac{1}{T^{\eta}} \right) \le T^{p\eta} \sum_{k=1}^nE[|X^k_T|^p,	\label{eq:big}
\end{eqnarray} 
which is of order $O(T)$ from  \eqref{eq:uniform}. We consider the first term of \eqref{eq:de}. For simple notation, we denote $\mathfrak{m}_1 := \sum_{k=1}^{\overline{n}} m^{k}_1. $  Using the expansion
\begin{eqnarray}
e^{V^k_0(T)X^k_T }=  1 + V^k_0(T)X^k_T +  \frac{1}{2}(V^k_0(T)X^k_T)^2 + \frac{1}{6}e^{\xi^k_T}(V^k_0(T)X^k_T)^3, \label{eq:exp}
\end{eqnarray} where $|\xi^k_T| \le |V^k_0(T)X^k_T|$ and the condition \eqref{eq:cond_V}, we find that 
\begin{eqnarray*}
I_T - I_0 &=&  \sum_{k=1}^{ \overline{n}} w_ks^k_0 e^{V^k_0(T)X^k_T} - I_0 \\
&=&  \sum_{k=1}^{ \overline{n}} w_ks^k_0 \left( V^k_0(T)X^k_T +  \frac{1}{2}(V^k_0(T)X^k_T)^2 + \frac{1}{6}e^{\xi^k_T}(V^k_0(T)X^k_T)^3 \right) \\
&=& \sum_{k=1}^{ \overline{n}}  \left( \nu_kT^{1/2}X^k_T + \frac{1}{2} \nu_k\sqrt{v^k_0(0)}T (X^k_T)^2 + w_ks^k_0 \frac{1}{6}e^{\xi^k_T}(V^k_0(T)X^k_T)^3 \right) + \sum_{k=1}^{ \overline{n}} O(T^{1/2 + \zeta^k-\eta}).
\end{eqnarray*}
Therefore, 
\begin{eqnarray}
\mathfrak{B}(T,\mathbf{X})&:=& \frac{1}{\sqrt{T}} \left(I_T - I_0 -\sum_{k=1}^{ \overline{n}}  \nu_kT^{1/2}X^k_T - \sum_{k=1}^{ \overline{n}}\frac{1}{2} \nu_k\sqrt{v^k_0(0)}T (X^k_T)^2\right) \nonumber  \\
&=& \sum_{k=1}^{ \overline{n}} O(T^{\zeta^k-\eta}) +O(T^{1-3\eta}), \label{eq:bound_B}
\end{eqnarray} 
on the event $A^{(1,2...,n)}_T \bigcap \{\max_{k\in \{1,...,n\}}|X^k_T| \le \frac{1}{T^{\eta}}\}$. Then  Proposition \ref{pro:future1} yields 
\begin{eqnarray*}
\frac{I_T-F_{0,T}}{\sqrt{T}}    &=& \frac{(I_T - I_0)}{\sqrt{T}} + \frac{(I_0 - F_{0,T} )}{\sqrt{T}} \\
&=&  \underbrace{\left( \sum_{k=1}^{ \overline{n}} \nu_kX^k_T  + T^{1/2}\sum_{k=1}^{ \overline{n}} \nu_k\sqrt{v^k_0(0)}(X^k_T)^2 -  \mathfrak{m}_1  \right)}_{:=B(T,\bX)}  - \gamma^1(T, \bX),
\end{eqnarray*}
where 
\begin{eqnarray*}
\gamma^1(T,\mathbf{X})&:=& - \mathfrak{B}(T,\mathbf{X}) +  \sum_{1 \le k  \le \overline{n}, 1\le j \le n} m^{k,j}_2T^{H^j}  +  \sum_{1 \le k  \le \overline{n}, 1\le j \le n} m^{k,j}_3T^{2H^j}\\
&& +  \sum_{1 \le k  \le \overline{n}, 1\le j, \ell \le n} m^{k,j,\ell}_4 T^{H^k + H^j} + O(T^{1/2}) + \sum_{k=1}^{\overline{n}} O(T^{ \zeta^k}) + \sum_{j=1}^n o(T^{\min\{2H^j,1\} +\varepsilon/4}).
\end{eqnarray*}

We claim that 
\begin{eqnarray}
	&&\mathbb{Q}\left( I_T > F_{0,T},A^{(1,...,n)}_T, \max_{k\in \{1,...,n\}}|X^k_T| \le \frac{1}{T^{\eta}}\right)  - \mathbb{Q}\left( B(T,\bX) > 0,A^{(1,...,n)}_T, \max_{k\in \{1,...,n\}}|X^k_T| \le \frac{1}{T^{\eta}}  \right) \nonumber \\
	&=& O(\gamma^1(T)). \label{eq:claim}
\end{eqnarray}
Indeed, from the identities
\begin{eqnarray*}
&&\{ 0 < B(T,\bX) < \gamma^1(T,\mathbf{X})\} \cup \{ B(T,\bX) > \gamma^1(T,\mathbf{X}), \gamma^1(T,\mathbf{X}) > 0\}  = \{ B(T,\bX) > 0, \gamma^1(T,\mathbf{X}) > 0 \},\\
&& \{ 0 > B(T,\bX) > \gamma^1(T,\mathbf{X})\} \cup \{ B(T,\bX) > 0, \gamma^1(T,\mathbf{X}) < 0 \} = \{B(T,\bX) > \gamma^1(T,\mathbf{X}), \gamma^1(T,\mathbf{X}) < 0\}
\end{eqnarray*} 
we deduce that 
\begin{eqnarray}
&&\left| \mathbb{Q}\left( I_T > F_{0,T},A^{(1,...,n)}_T, \max_{k\in \{1,...,n\}}|X^k_T| \le \frac{1}{T^{\eta}}\right)  - \mathbb{Q}\left( B(T,\bX) > 0,A^{(1,...,n)}_T, \max_{k\in \{1,...,n\}}|X^k_T| \le \frac{1}{T^{\eta}}  \right) \right|  \nonumber \\
&\le& \mathbb{Q}\left( \gamma^1(T,\bX) < B(T,\bX) <0, ,A^{(1,...,n)}_T ,\max_{k\in \{1,...,n\}}|X^k_T| \le \frac{1}{T^{\eta}} \right) \nonumber \\
&+& \mathbb{Q}\left( 0 < B(T,\bX) < \gamma^1(T,\bX), ,A^{(1,...,n)}_T ,\max_{k\in \{1,...,n\}}|X^k_T| \le \frac{1}{T^{\eta}} \right). \label{eq:claim2}
\end{eqnarray}
Using \eqref{eq:bound_B}, the third quantity of \eqref{eq:claim2} is bounded by
$$\mathbb{Q}\left( \gamma^1(T) < B(T,\bX) <0 ,A^{(1,...,n)}_T ,\max_{k\in \{1,...,n\}}|X^k_T| \le \frac{1}{T^{\eta}} \right).$$
where
\begin{eqnarray}
\gamma^1(T)&:=& \sum_{1 \le k  \le \overline{n}, 1\le j \le n} m^{kj}_2T^{H^j}  +  \sum_{1 \le k  \le \overline{n}, 1\le j \le n} m^{kj}_3T^{2H^j} +  \sum_{1 \le k  \le \overline{n}, 1\le j, \ell \le n} m^{kj\ell}_4 T^{H^k + H^j}  \nonumber \\
&+& O(T^{1/2}) + \sum_{j=1}^n o(T^{\min\{2H^j,1\} +\varepsilon/4}) + \sum_{k=1}^{ \overline{n}} O(T^{\zeta^k-\eta}) +O(T^{1-3\eta}).\label{eq:gamma1}
\end{eqnarray}
Let $\Xi(T) \in \{0, \gamma^1(T)\}$. If $T$ is small enough, the equation
$$T^{1/2}\nu_1\sqrt{v^1_0(0)}x^2_1 + \nu_1x_1 + \left( \sum_{k=2}^{ \overline{n}} \nu_kx_k + T^{1/2} \sum_{k=2}^{ \overline{n}} \nu_k\sqrt{v^k_0(0)}x^2_k -  \mathfrak{m}_1 - \Xi(T) \right) = 0$$
has always two solutions
$$\chi^{\Xi(T)}_{\pm} = \frac{-1\pm\sqrt{1-4\nu^{-1}_1\sqrt{v^1_0(0)}\sqrt{T}\left( \sum_{k=2}^{ \overline{n}} \nu_kx_k + T^{1/2} \sum_{k=2}^{ \overline{n}} \nu_k\sqrt{v^k_0(0)}x^2_k -  \mathfrak{m}_1 - \Xi(T) \right) }}{2\sqrt{T}\sqrt{v^1_0(0)}} $$
because $\max_{k\in \{1,...,n\}}|X^k_T| \le \frac{1}{T^{\eta}}$.  Applying the expansion $\sqrt{1+x} = 1 + x/2 + O(x^2)$, if $T$ is small enough then $\chi^{\Xi(T)}_{-} \sim -1/\sqrt{T}$ and
\begin{eqnarray}\label{eq:chi}
 \chi^{\Xi(T)}_{+} = \nu^{-1}_1\left( \sum_{k=2}^{ \overline{n}} \nu_kx_k + T^{1/2} \sum_{k=2}^{ \overline{n}} \nu_k\sqrt{v^k_0(0)}x^2_k -  \mathfrak{m}_1 - \Xi(T)\right) + O(\sqrt{T}), 
\end{eqnarray}
and we arrive at
$$|\chi^{\gamma^1(T)}_+ -  \chi^0_+| = |\chi^{\gamma^1(T)}_- -  \chi^0_-|  = O(\gamma^1(T)). $$
Therefore, Lemma \ref{lemma:tail1} and Lemma \ref{lemma:appro} yield
\begin{eqnarray*}
&&\mathbb{Q}\left( \gamma^1(T) < B(T,\bX) <0, , A^{(1,...,n)}_T,\max_{k\in \{1,...,n\}}|X^k_T| \le \frac{1}{T^{\eta}} \right) \\
&=&  \int_{\mathbb{R}^{n-1}}  \int_{\mathbb{R}} 1_{ [\chi^{\gamma^1(T)}_+, \chi^0_+] \bigcup [\chi^{\gamma^1(T)}_-, \chi^0_-]}(x_1) 1_{\max_{k\in \{1,...,n\}}|x^k| \le \frac{1}{T^{\eta}}} q_t(\bx) dx_1 ... dx_n  = O(\gamma^1(T)).
\end{eqnarray*}
The same argument holds true for the fourth quantity of \eqref{eq:claim2} and hence, the claim \eqref{eq:claim} holds.  

Using similar arguments and Lemma \ref{lemma:appro}, we compute  
\begin{eqnarray*}
&&\mathbb{Q}\left( B(T,\bX) > 0,A^{(1,...,n)}_T, \max_{k\in \{1,...,n\}}|X^k_T| \le \frac{1}{T^{\eta}}  \right)\\
&&\qquad =  \int_{\mathbb{R}^{n-1}}  \int_{\chi^0_+}^{\infty} 1_{\max_{k\in \{1,...,n\}}|x^k| \le \frac{1}{T^{\eta}}}q_T(\bx)d\bx \\
&& \qquad + \int_{\mathbb{R}^{n-1}} \int_{-\infty}^{\chi^0_-} 1_{\max_{k\in \{1,...,n\}}|x^k| \le \frac{1}{T^{\eta}}}q_T(\bx)d\bx + \sum_{j=1}^n o(T^{\min\{2H^j,1\} +\varepsilon/4}).
\end{eqnarray*}
If $T$ is small enough such that $2 \sqrt{v^1_0(0)}\sqrt{T} < T^{\eta}$, we have that $\chi^0_- \le -1/(2\sqrt{v^1_0(0)}\sqrt{T})$ and thus
\begin{eqnarray*}
\int_{\mathbb{R}^{n-1}}  \int_{-\infty}^{\chi^0_-} 1_{\max_{k\in \{1,...,n\}}|x^k| \le \frac{1}{T^{\eta}}}q_T(\bx)d\bx  \le \int_{\mathbb{R}^{n-1}}  \int_{-\infty}^{-1/(2 \sqrt{v^1_0(0)}\sqrt{T})} 1_{\max_{k\in \{1,...,n\}}|x^k|
\le \frac{1}{T^{\eta}}}q_T(\bx)d\bx = o(T).
\end{eqnarray*}
by Lemma \ref{lemma:tail1}. Using \eqref{eq:chi}, we arrive at
\begin{eqnarray}
\int_{\chi^0_+}^{\infty} q_T(\bx)dx_1 &=&  \int_{\infty }^{\infty} 1_{  \sum_{k=1}^{ \overline{n}} \nu_kx_k > \mathfrak{m}_1 } q_T(\bx)dx_1 \nonumber \\
&& \qquad + \int_{-\nu_1^{-1}\left( \sum_{k=2}^{ \overline{n}} \nu_kx_k + T^{1/2} \sum_{k=2}^{ \overline{n}} \nu_k\sqrt{v^k_0(0)}x^2_k- \mathfrak{m}_1 \right) }^{-\nu_1^{-1}\left(\sum_{k=2}^{ \overline{n}} \nu_kx_k - \mathfrak{m}_1\right) }q_T(\bx)dx_1. \label{eq:chi+1}
\end{eqnarray}
By the mean value theorem, the second integral of \eqref{eq:chi+1} is bounded by 
$$T^{1/2} \nu_1^{-1} \left(  \sum_{k=2}^{ \overline{n}} \nu_k\sqrt{v^k_0(0)}x^2_k\right) q_T(\xi^1,x_2,...x_n)$$ for some 
$$ \xi^1 \in \left[ -\nu_1^{-1}\left(\sum_{k=2}^{ \overline{n}} \nu_kx_k - \mathfrak{m}_1\right),  -\nu_1^{-1}\left( \sum_{k=2}^{ \overline{n}} \nu_kx_k + T^{1/2} \sum_{k=2}^{ \overline{n}} \nu_k\sqrt{v^k_0(0)}x^2_k- \mathfrak{m}_1\right) \right].$$
Therefore,
\begin{eqnarray*}
\int_{\mathbb{R}^{n-1}} \int_{\chi_+}^{\infty} 1_{\max_{k\in \{1,...,n\}}|x^k| \le \frac{1}{T^{\eta}}}  q_T(\bx)d\bx = \int_{\mathbb{R}^n}1_{\sum_{k=1}^n \nu_k x_k \ge \mathfrak{m}_1} q_T(x_1,..,x_n)dx_1 +  O(T^{1/2}).  
\end{eqnarray*}
Finally, we obtain that 
\begin{eqnarray*}
\frac{\partial C}{\partial k}(0,T,k=0)  &=& -F_{0,T}\left( \int_{D^1} q_T(\bx)  d\bx  + O(\gamma^1(T))\right) ,
\end{eqnarray*}
where 
\begin{equation}\label{eq:D1}
D^1 := \left\lbrace \bx \in \mathbb{R}^n:  \sum_{k=1}^{ \overline{n}} \nu_kx_k  > \mathfrak{m}_1  \right\rbrace 
\end{equation}
and the conclusion for this case follows.

Case $(ii)$: It suffices to consider the sum of the two events
\begin{eqnarray*}
\mathfrak{q}^1_T &=& \mathbb{Q}\left( I_T > F_{0,T},A^{\psi^1}_T\right),\\
\mathfrak{q}^2_T &=&  \mathbb{Q}\left( I_T > F_{0,T},A^{\psi^2}_T\right),
\end{eqnarray*}  
with the corresponding permutations $\psi^1_n = (1,...,r-1,r,..,n)$ and $\psi^2_n = (1,...,r,r-1,...,n)$. Following the same arguments as in Case 1, we have 
\begin{eqnarray*}
\mathfrak{q}^1_T&=&  \int_{-\infty}^{\infty} \cdots \int_{-\infty}^{\sqrt{\frac{v^{r-1}_0(0)}{v^r_0(0)}} x_{r-1}} \cdots \int_{-\infty}^{\infty}  1_{\sum_{k=1}^{ \overline{n}} \nu_kX^k_T  > \sum_{k = 1}^{\overline{n}} m^k_5} q_T(\bx)  dx_n... dx_1 \\
&+& O(T^{\zeta^{r-1}}) + O(T^{\zeta^{r}}) + O(\gamma^2(T)), \label{eq:112}
\end{eqnarray*}
where 
\begin{eqnarray}
\gamma^{2}(T)&:=&  \sum_{1\le k \le \overline{n}, 1 \le j \le n} m^{k,j}_6T^{H^j} +\sum_{1\le k \le \overline{n}, 1 \le j \le n} m^{k,j}_7T^{2H^j} + \sum_{1\le k \le \overline{n}, 1 \le j,\ell \le n} m^{k,j,\ell}_8T^{H^j+H^{\ell}} \nonumber \\
 &+&O(\sqrt{T}) + \sum_{j=1}^n o(T^{\min\{2H^j,1\} +\varepsilon/4}) + \sum_{k=1}^{ \overline{n}} O(T^{\zeta^k-\eta}) +  O(T^{1-3\eta}). \label{eq:gamma2}
\end{eqnarray}
A similar formula holds for $\mathfrak{q}^2_T$. Therefore,
\begin{eqnarray*}
\mathfrak{q}^1_T + \mathfrak{q}^2_T  &=& \int_{D^{2,1}\cup D^{2,2}} q_T(\bx)  d\bx  + O(T^{\zeta^{r-1}}) + O(T^{\zeta^{r}})+ O(\gamma^2(T)), \nonumber
\end{eqnarray*}
where
\begin{eqnarray}
D^{2,1} &=& \left\lbrace \bx \in \mathbb{R}^n:  \sum_{k=1}^{ \overline{n}} w_ks^{\psi^1_n(k)}_0\sqrt{v^k_0(0)}x_{\psi^1_n(k)}  > \sum_{j=1}^{\overline{n}} m^k_5  \text{ and } \sqrt{v^r_0(0)}x_r \le \sqrt{v^{r-1}_0(0)} x_{r-1} \right\rbrace, \label{eq:D21}\\
D^{2,2} &=& \left\lbrace \bx \in \mathbb{R}^n:  \sum_{k=1}^{ \overline{n}} w_ks^{\psi^2_n(k)}_0\sqrt{v^k_0(0)}x_{\psi^2_n(k)}  > \sum_{j=1}^{\overline{n}} m^k_5  \text{ and }  \sqrt{v^{r-1}_0(0)} x_{r-1} \le \sqrt{v^r_0(0)}x_r  \right\rbrace.  \label{eq:D22}
\end{eqnarray}
The proof is complete. 
\section{Appendix}\label{sec:app}
We provide some useful formulas. 
\begin{lemma}\label{lemma:tail1}
Let $f$ be a real-valued  function in the Schwartz space.  Then $\int_z^{\infty}f(x)dx = O(z^{-r})$, where we could choose any $r \ge 1$. 
\end{lemma}
\begin{proof}
For any $r \ge 2$, we know that $\sup_{x \in \mathbb{R}} |x^rf(x)| < C$ for some $C > 0$. We estimate easily that 
\begin{eqnarray*}
\int_{z}^{\infty} f(x)dx \le C \int_z^{\infty} x^{-r} dx \sim z^{-r+1},
\end{eqnarray*}
and the conclusion follows. 
\end{proof}
\begin{lemma}\label{lemma:intbypart}
Let $f(x): \mathbb{R} \to \mathbb{R}$ be a function vanishing at $-\infty$ and $\infty$. Then
\begin{eqnarray*}
- \int e^{i\sum_{j=1}^nu_jx_j}\sum_{j=1}^ n iu_jf_j(x)dx &=& \int e^{i\sum_{j=1}^nu_jx_j} \sum_{j=1}^n \frac{\partial f_j}{\partial x_j}dx,\\
\int e^{i \sum_{j=1}^n u_jx_j} \frac{\partial^n}{\partial x_1... \partial x_n}f(x)dx &=& \int e^{i\sum_{j=1}^n u_jx_j} i^nu_1...u_nf(x)dx.
\end{eqnarray*}
\end{lemma}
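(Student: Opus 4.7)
The plan is to prove both identities by direct integration by parts, applied coordinate-wise, using the boundedness of $|e^{iu\cdot x}|=1$ and the vanishing of $f$ (and its relevant partial derivatives) at infinity to kill the boundary terms. This is essentially the standard Fourier-transform identity $\widehat{\partial_j f}(u) = iu_j \widehat{f}(u)$, restated without the normalizing constants.

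For the first identity, the strategy is to treat each term $\int e^{i\sum_k u_k x_k}\,\partial_{x_j} f_j(x)\,dx$ separately. By Fubini, I would write the integral as $\int \bigl(\prod_{k\neq j}e^{iu_k x_k}\bigr)\Bigl(\int e^{iu_j x_j}\,\partial_{x_j} f_j(x)\,dx_j\Bigr)\,\prod_{k\neq j}dx_k$. Integration by parts in $x_j$ then yields
\[
\int e^{iu_j x_j}\,\partial_{x_j} f_j(x)\,dx_j \;=\; \bigl[e^{iu_j x_j} f_j(x)\bigr]_{x_j=-\infty}^{x_j=+\infty}\;-\;\int iu_j\,e^{iu_j x_j}\,f_j(x)\,dx_j.
\]
The boundary term vanishes because $|e^{iu_j x_j}|=1$ is bounded and $f_j(x)\to 0$ as $|x_j|\to\infty$ by hypothesis. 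Re-inserting the product over $k\neq j$ and summing over $j$ gives the claim.

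For the second identity, the plan is to iterate the same argument across all $n$ variables. Writing $\partial^n/(\partial x_1\cdots\partial x_n)\,f(x)$ and integrating by parts in $x_1$ first produces a factor $iu_1$ and the $x_1$-boundary term, which vanishes (the relevant mixed partial of $f$ must vanish at infinity in $x_1$; in all the applications made in the paper $f$ belongs to the Schwartz class, so this is automatic). Inductively, integration by parts in $x_2$ then produces $iu_2$, and so on, so that after $n$ steps one obtains $\int e^{i\sum_j u_j x_j}\,(iu_1)(iu_2)\cdots(iu_n)\,f(x)\,dx = \int e^{i\sum_j u_j x_j}\,i^n u_1\cdots u_n\,f(x)\,dx$.

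There is no real obstacle here beyond bookkeeping: the only subtle point is that the hypothesis ``$f$ vanishes at $\pm\infty$'' as literally stated is not quite enough for the second identity, since iterated integration by parts requires the relevant mixed partials of $f$ to also vanish at infinity. I would therefore add a brief remark that the identity holds whenever $f$ and all its partial derivatives of order $\le n-1$ decay at infinity fast enough to render the boundary terms zero (as is the case in Theorem~\ref{thm:density_expansion}, where the functions $a^{(k),j}_t,b^j_t,c^j_t,d^{(1),jk}_t,e^{(1),jk}_t$ lie in the Schwartz space by Assumption~\ref{assum:main}(iv)). Under this mild strengthening, which is actually how the lemma is used in Subsection~\ref{proof:density}, both identities follow immediately from the iterated Fubini/IBP argument sketched above.
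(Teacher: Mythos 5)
Your proposal is correct and follows essentially the same route as the paper: coordinate-wise integration by parts (the paper cites the general formula $\int D^{\alpha}f\,g = (-1)^{|\alpha|}\int f\,D^{\alpha}g$ and applies it term by term), with boundary terms killed by the decay of $f$ and the boundedness of $e^{iu\cdot x}$. Your added observation that the stated hypothesis should really require the relevant mixed partials to vanish at infinity for the second identity is a fair refinement, and is indeed automatic in the paper's applications since the integrands lie in the Schwartz space by Assumption \ref{assum:main}(iv).
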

\begin{proof}
Recall the integration by parts
$$
\int_{\mathbb R^n} D^{\alpha}f(x)g( x)d x
=(-1)^{|\alpha|}\int_{\mathbb R^n}f( x)D^{\alpha}g( x)d x.
$$
The proofs of the two identities follows
immediately from this. For example, we have 
\begin{align*}
\int e^{i\sum_{j=1}^nu_jx_j} \sum_{j=1}^n \frac{\partial f_j}{\partial x_j}dx&=\sum_{j=1}^n
\int e^{i\sum_{j=1}^nu_jx_j}  \frac{\partial f_j}{\partial x_j}dx\\
&=-\sum_{j=1}^n
\int e^{i\sum_{j=1}^nu_jx_j}  iu_j f_j (x) dx=-\int e^{i\sum_{j=1}^nu_jx_j}\sum_{j=1}^ n iu_jf_j(x)dx .
\end{align*}
\end{proof}


\begin{thebibliography}{}
%
%
\bibitem[Abi~Jaber, 2019]{abi2019lifting}
Abi~Jaber, E. (2019).
\newblock Lifting the {H}eston model.
\newblock {\em Quantitative Finance}, 19(12):1995--2013.

\bibitem[Abi~Jaber, 2022]{abi2022characteristic}
Abi~Jaber, E. (2022).
\newblock The characteristic function of gaussian stochastic volatility models:
an analytic expression.
\newblock {\em Finance and Stochastics}, 26(4):733--769.

\bibitem[Alos et~al., 2007]{alos2007short}
Alos, E., Le{\'o}n, J.~A., and Vives, J. (2007).
\newblock On the short-time behavior of the implied volatility for
jump-diffusion models with stochastic volatility.
\newblock {\em Finance and Stochastics}, 11(4):571--589.

\bibitem[Andersen and Bollerslev, 1997]{andersen1997intraday}
Andersen, T.~G. and Bollerslev, T. (1997).
\newblock Intraday periodicity and volatility persistence in financial markets.
\newblock {\em Journal of Empirical Finance}, 4(2-3):115--158.

\bibitem[Andersen et~al., 2003]{andersen2003modeling}
Andersen, T.~G., Bollerslev, T., Diebold, F.~X., and Labys, P. (2003).
\newblock Modeling and forecasting realized volatility.
\newblock {\em Econometrica}, 71(2):579--625.

\bibitem[Avellaneda et~al., 2003]{avellaneda2003application}
Avellaneda, M., Boyer-Olson, D., Friz, P., et~al. (2003).
\newblock Application of large deviation methods to the pricing of index
options in finance.
\newblock {\em Comptes rendus. Math{\'e}matique}, 336(3):263--266.

\bibitem[Banner and Ghomrasni, 2008]{banner2008local}
Banner, A.~D. and Ghomrasni, R. (2008).
\newblock Local times of ranked continuous semimartingales.
\newblock {\em Stochastic Processes and their Applications}, 118(7):1244--1253.

\bibitem[Barletta et~al., 2019]{barletta2019short}
Barletta, A., Nicolato, E., and Pagliarani, S. (2019).
\newblock The short-time behavior of {VIX}-implied volatilities in a
multifactor stochastic volatility framework.
\newblock {\em Mathematical Finance}, 29(3):928--966.

\bibitem[Bayer et~al., 2016]{bayer2016pricing}
Bayer, C., Friz, P., and Gatheral, J. (2016).
\newblock Pricing under rough volatility.
\newblock {\em Quantitative Finance}, 16(6):887--904.

\bibitem[Bayer et~al., 2020]{bayer2020regularity}
Bayer, C., Friz, P.~K., Gassiat, P., Martin, J., and Stemper, B. (2020).
\newblock A regularity structure for rough volatility.
\newblock {\em Mathematical Finance}, 30(3):782--832.

\bibitem[Bayer et~al., 2019]{bayer2019short}
Bayer, C., Friz, P.~K., Gulisashvili, A., Horvath, B., and Stemper, B. (2019).
\newblock Short-time near-the-money skew in rough fractional volatility models.
\newblock {\em Quantitative Finance}, 19(5):779--798.

\bibitem[Bayer and Laurence, 2014]{bayer2014asymptotics}
Bayer, C. and Laurence, P. (2014).
\newblock Asymptotics beats {M}onte {C}arlo: The case of correlated local vol
baskets.
\newblock {\em Communications on Pure and Applied Mathematics},
67(10):1618--1657.

\bibitem[Bennedsen et~al., 2022]{bennedsen2022decoupling}
Bennedsen, M., Lunde, A., Pakkanen, and S, M. (2022).
\newblock Decoupling the short-and long-term behavior of stochastic volatility.
\newblock {\em Journal of Financial Econometrics}, 20(5):961--1006.

\bibitem[Bennedsen et~al., 2017]{bennedsen2017hybrid}
Bennedsen, M., Lunde, A., and Pakkanen, M.~S. (2017).
\newblock Hybrid scheme for {B}rownian semistationary processes.
\newblock {\em Finance and Stochastics}, 21:931--965.

\bibitem[Berestycki et~al., 2004]{berestycki2004computing}
Berestycki, H., Busca, J., and Florent, I. (2004).
\newblock Computing the implied volatility in stochastic volatility models.
\newblock {\em Communications on Pure and Applied Mathematics: A Journal Issued
by the Courant Institute of Mathematical Sciences}, 57(10):1352--1373.

\bibitem[Black and Scholes, 1973]{black1973pricing}
Black, F. and Scholes, M. (1973).
\newblock The pricing of options and corporate liabilities.
\newblock {\em Journal of Political Economy}, 81(3):637--654.

\bibitem[Comte et~al., 2012]{comte2012affine}
Comte, F., Coutin, L., and Renault, E. (2012).
\newblock Affine fractional stochastic volatility models.
\newblock {\em Annals of Finance}, 8:337--378.

\bibitem[Comte and Renault, 1998]{comte1998long}
Comte, F. and Renault, E. (1998).
\newblock Long memory in continuous-time stochastic volatility models.
\newblock {\em Mathematical Finance}, 8(4):291--323.

\bibitem[Cont, 2007]{cont2007volatility}
Cont, R. (2007).
\newblock Volatility clustering in financial markets: empirical facts and
agent-based models.
\newblock In {\em Long Memory in Economics}, pages 289--309. Springer.

\bibitem[Cont and Das, 2024]{cont2022rough}
Cont, R. and Das, P. (2024).
\newblock Rough volatility: fact or artefact?
\newblock {\em Sankhya B}, pages 1--33.

\bibitem[Ding et~al., 1993]{ding1993long}
Ding, Z., Granger, C.~W., and Engle, R.~F. (1993).
\newblock A long memory property of stock market returns and a new model.
\newblock {\em Journal of Empirical Finance}, 1(1):83--106.

\bibitem[Durrleman, 2010]{durrleman2010implied}
Durrleman, V. (2010).
\newblock From implied to spot volatilities.
\newblock {\em Finance and Stochastics}, 14:157--177.

\bibitem[El~Euch et~al., 2018]{el2018microstructural}
El~Euch, O., Fukasawa, M., and Rosenbaum, M. (2018).
\newblock The microstructural foundations of leverage effect and rough
volatility.
\newblock {\em Finance and Stochastics}, 22:241--280.

\bibitem[El~Euch and Rosenbaum, 2019]{el2019characteristic}
El~Euch, O. and Rosenbaum, M. (2019).
\newblock The characteristic function of rough {H}eston models.
\newblock {\em Mathematical Finance}, 29(1):3--38.

\bibitem[El~Euch et~al., 2019]{euch2019short}
El~Euch, O.~E., Fukasawa, M., Gatheral, J., and Rosenbaum, M. (2019).
\newblock Short-term at-the-money asymptotics under stochastic volatility
models.
\newblock {\em SIAM Journal on Financial Mathematics}, 10(2):491--511.

\bibitem[Fernholz, 2002]{fernholz2002stochastic}
Fernholz, E.~R. (2002).
\newblock {\em Stochastic portfolio theory}.
\newblock Springer.

\bibitem[Fernholz, 1999]{fernholz1999diversity}
Fernholz, R. (1999).
\newblock On the diversity of equity markets.
\newblock {\em Journal of Mathematical Economics}, 31(3):393--417.

\bibitem[Fernholz, 2001]{fernholz2001equity}
Fernholz, R. (2001).
\newblock Equity portfolios generated by functions of ranked market weights.
\newblock {\em Finance and Stochastics}, 5:469--486.

\bibitem[Forde et~al., 2021]{forde2021small}
Forde, M., Gerhold, S., and Smith, B. (2021).
\newblock Small-time, large-time, and asymptotics for the rough {H}eston model.
\newblock {\em Mathematical Finance}, 31(1):203--241.

\bibitem[Forde and Jacquier, 2009]{forde2009small}
Forde, M. and Jacquier, A. (2009).
\newblock Small-time asymptotics for implied volatility under the {H}eston
model.
\newblock {\em International Journal of Theoretical and Applied Finance},
12(06):861--876.

\bibitem[Forde et~al., 2012]{forde2012small}
Forde, M., Jacquier, A., and Lee, R. (2012).
\newblock The small-time smile and term structure of implied volatility under
the {H}eston model.
\newblock {\em SIAM Journal on Financial Mathematics}, 3(1):690--708.

\bibitem[Forde and Zhang, 2017]{forde2017asymptotics}
Forde, M. and Zhang, H. (2017).
\newblock Asymptotics for rough stochastic volatility models.
\newblock {\em SIAM Journal on Financial Mathematics}, 8(1):114--145.

\bibitem[Friz et~al., 2021]{friz2021precise}
Friz, P.~K., Gassiat, P., and Pigato, P. (2021).
\newblock Precise asymptotics: robust stochastic volatility models.
\newblock {\em The Annals of Applied Probability}, 31(2):896--940.

\bibitem[Friz et~al., 2022]{friz2022short}
Friz, P.~K., Gassiat, P., and Pigato, P. (2022).
\newblock Short-dated smile under rough volatility: asymptotics and numerics.
\newblock {\em Quantitative Finance}, 22(3):463--480.

\bibitem[Friz and Wagenhofer, 2023]{friz2023reconstructing}
Friz, P.~K. and Wagenhofer, T. (2023).
\newblock Reconstructing volatility: Pricing of index options under rough
volatility.
\newblock {\em Mathematical Finance}.

\bibitem[Fukasawa, 2011]{fukasawa2011asymptotic}
Fukasawa, M. (2011).
\newblock Asymptotic analysis for stochastic volatility: martingale expansion.
\newblock {\em Finance and Stochastics}, 15:635--654.

\bibitem[Fukasawa, 2017]{fukasawa2017short}
Fukasawa, M. (2017).
\newblock Short-time at-the-money skew and rough fractional volatility.
\newblock {\em Quantitative Finance}, 17(2):189--198.

\bibitem[Fukasawa, 2021]{fukasawa2021volatility}
Fukasawa, M. (2021).
\newblock Volatility has to be rough.
\newblock {\em Quantitative Finance}, 21(1):1--8.

\bibitem[Fukasawa et~al., 2022]{fukasawa2022consistent}
Fukasawa, M., Takabatake, T., and Westphal, R. (2022).
\newblock Consistent estimation for fractional stochastic volatility model
under high-frequency asymptotics.
\newblock {\em Mathematical Finance}, 32(4):1086--1132.

\bibitem[Funahashi and Kijima, 2017]{funahashi2017does}
Funahashi, H. and Kijima, M. (2017).
\newblock Does the hurst index matter for option prices under fractional
volatility?
\newblock {\em Annals of Finance}, 13:55--74.

\bibitem[Gao and Lee, 2014]{gao2014asymptotics}
Gao, K. and Lee, R. (2014).
\newblock Asymptotics of implied volatility to arbitrary order.
\newblock {\em Finance and Stochastics}, 18:349--392.

\bibitem[Gatheral et~al., 2018]{gatheral2018volatility}
Gatheral, J., Jaisson, T., and Rosenbaum, M. (2018).
\newblock Volatility is rough.
\newblock {\em Quantitative Finance}, 18:933--949.

\bibitem[Gassiat, 2019]{gassiat} Gassiat, P. (2019). 
\newblock On the martingale property in the rough Bergomi model.
\newblock {\em Electronic Communications in Probability}, 24:1 -- 9. 

\bibitem[Gulisashvili, 2020]{gulisashvili2020gaussian}
Gulisashvili, A. (2020).
\newblock Gaussian stochastic volatility models: Scaling regimes, large
deviations, and moment explosions.
\newblock {\em Stochastic Processes and their Applications}, 130(6):3648--3686.

\bibitem[Gulisashvili and Tankov, 2015]{gulisashvili2015implied}
Gulisashvili, A. and Tankov, P. (2015).
\newblock Implied volatility of basket options at extreme strikes.
\newblock In {\em Large Deviations and Asymptotic Methods in Finance}, pages
175--212. Springer.

\bibitem[Gulisashvili et~al., 2019]{gulisashvili2019extreme}
Gulisashvili, A., Viens, F., and Zhang, X. (2019).
\newblock Extreme-strike asymptotics for general gaussian stochastic volatility
models.
\newblock {\em Annals of Finance}, 15(1):59--101.

\bibitem[Guyon and El~Amrani, 2022]{guyon2022does}
Guyon, J. and El~Amrani, M. (2022).
\newblock Does the term-structure of equity at-the-money skew really follow a
power law?
\newblock {\em Available at SSRN 4174538}.

\bibitem[Ichiba and Karatzas, 2010]{ichiba2010collisions}
Ichiba, T. and Karatzas, I. (2010).
\newblock On collisions of {B}rownian particles.
\newblock {\em Annals of Applied Probability}, 20(3):951--977.

\bibitem[Ichiba et~al., 2013]{ichiba2013convergence}
Ichiba, T., Pal, S., and Shkolnikov, M. (2013).
\newblock Convergence rates for rank-based models with applications to
portfolio theory.
\newblock {\em Probability Theory and Related Fields}, 156(1-2):415--448.

\bibitem[Ichiba et~al., 2011]{ichiba2011hybrid}
Ichiba, T., Papathanakos, V., Banner, A., Karatzas, I., and Fernholz, R.
(2011).
\newblock Hybrid atlas models.
\newblock {\em Annals of Applied Probability}, 21(2):609--644.

\bibitem[Jaisson and Rosenbaum, 2016]{jaisson2016rough}
Jaisson, T. and Rosenbaum, M. (2016).
\newblock Rough fractional diffusions as scaling limits of nearly unstable
heavy tailed {H}awkes processes.
\newblock {\em The Annals of Applied Probability}, 26(5):2860--2882.

\bibitem[Jiang and Wang, 2007]{jiang2007collision}
Jiang, Y. and Wang, Y. (2007).
\newblock On the collision local time of fractional {B}rownian motions.
\newblock {\em Chinese Annals of Mathematics, Series B}, 28(3):311--320.

\bibitem[Jourdain and Sbai, 2012]{jourdain2012coupling}
Jourdain, B. and Sbai, M. (2012).
\newblock Coupling index and stocks.
\newblock {\em Quantitative Finance}, 12(5):805--818.

\bibitem[Karatzas and Fernholz, 2009]{karatzas2009stochastic}
Karatzas, I. and Fernholz, R. (2009).
\newblock Stochastic portfolio theory: an overview.
\newblock {\em Handbook of numerical analysis}, 15:89--167.

\bibitem[Livieri et~al., 2018]{livieri2018rough}
Livieri, G., Mouti, S., Pallavicini, A., and Rosenbaum, M. (2018).
\newblock Rough volatility: evidence from option prices.
\newblock {\em IISE Transactions}, 50(9):767--776.

\bibitem[Pagliarani and Pascucci, 2017]{pagliarani2017exact}
Pagliarani, S. and Pascucci, A. (2017).
\newblock The exact {T}aylor formula of the implied volatility.
\newblock {\em Finance and Stochastics}, 21:661--718.

\bibitem[Pigato, 2019]{pigato2019extreme}
Pigato, P. (2019).
\newblock Extreme at-the-money skew in a local volatility model.
\newblock {\em Finance and Stochastics}, 23:827--859.

\bibitem[Rogers, 2023]{rogers2023things}
Rogers, L. (2023).
\newblock Things we think we know.
\newblock In {\em Options—45 years since the Publication of the
Black--Scholes--Merton Model: The Gershon Fintech Center Conference}, pages
173--184. World Scientific.

\bibitem[R{\o}mer, 2022]{romer2022empirical}
R{\o}mer, S.~E. (2022).
\newblock Empirical analysis of rough and classical stochastic volatility
models to the {SPX} and {VIX} markets.
\newblock {\em Quantitative Finance}, 22(10):1805--1838.

\bibitem[Rosenbaum, 2008]{rosenbaum2008estimation}
Rosenbaum, M. (2008).
\newblock Estimation of the volatility persistence in a discretely observed
diffusion model.
\newblock {\em Stochastic Processes and their Applications}, 118(8):1434--1462.

\bibitem[Sarantsev, 2015]{sarantsev2015triple}
Sarantsev, A. (2015).
\newblock Triple and simultaneous collisions of competing {B}rownian particles.
\newblock 20:1--28.

\bibitem[Shi and Yu, 2022]{shi2022volatility}
Shi, S. and Yu, J. (2022).
\newblock Volatility puzzle: Long memory or antipersistency.
\newblock {\em Management Science}.

\bibitem[Wang et~al., 2011]{wang2011collision}
Wang, X., Guo, J., and Jiang, G. (2011).
\newblock Collision local times of two independent fractional {B}rownian
motions.
\newblock {\em Frontiers of Mathematics in China}, 6:325--338.

\end{thebibliography}


\end{document}